\documentclass[11pt]{article}
\usepackage{amsmath,amssymb,bbm,amsthm}
\usepackage{fullpage}
\usepackage{thm-restate,color,xcolor,xspace}
\usepackage{hyperref,cleveref}
\usepackage{graphicx}
\usepackage{algorithm,algorithmic}
\usepackage{enumitem}
\usepackage{crossreftools}
\usepackage{authblk}

\newtheorem{theorem}{Theorem}[section]
\newtheorem{lemma}[theorem]{Lemma}
\newtheorem{definition}[theorem]{Definition}
\newtheorem{corollary}[theorem]{Corollary}
\newtheorem{observation}[theorem]{Observation}
\newtheorem{claim}[theorem]{Claim}
\newtheorem{subclaim}[theorem]{Subclaim}

\newenvironment{subproof}[1][\proofname]{
	
	\begin{proof}[#1]
	}{
	\end{proof}
}

\begin{document}

	\title{Bellman-Ford in Almost-Linear Time}
	\author[1]{Isaac M. Hair}
	\author[2]{George Z. Li}
	\author[2]{Jason Li}
	\author[2]{Junkai Zhang}
	\affil[1]{UCSB, UCLA    \texttt{isaacmhair@gmail.com}}
	\affil[2]{Carnegie Mellon University $\{\texttt{gzli}, \texttt{jmli},\texttt{junkaizh}\}$\texttt{@cs.cmu.edu}}
	\date{}
	\maketitle
	
	\begin{abstract}
		We consider the single-source shortest paths problem on a directed graph with real-valued (possibly negative) edge weights and solve this problem in $m^{1+o(1)}$ time.
	\end{abstract}

	\section{Introduction}

	We consider the problem of computing single-source shortest paths on weighted directed graphs. When all edge weights are non-negative, the classic Dijkstra's algorithm solves the problem in $O(m+n\log n)$ time, which has been improved for sparse graphs to $O(m\sqrt{\log n\log\log n})$~\cite{duan2025breaking,duan2026faster}. When edge weights are integral (but possibly negative), the problem was solved in $\tilde O(m)$ time\footnote{$\tilde O(\cdot)$ ignores factors poly-logarithmic in $n$.}~\cite{bernstein2025negative} and $m^{1+o(1)}$ time~\cite{chen2025maximum}, with the latter solving the more general problem of minimum-cost flow. In the general case of real-valued edge weights, the 70-year-old Bellman-Ford algorithm was the fastest until the recent breakthrough $\tilde{O}(mn^{8/9})$ time algorithm of Fineman~\cite{fineman2024single}. Subsequent work has refined Fineman's approach~\cite{huang2025faster,huang2026faster,quanrud2025sparsification}, improving the runtime to $\tilde{O}(mn^{0.696}+(mn)^{0.850})$. 
	
	Li, Li, Rao, and Zhang \cite{li2025shortcutting} introduced a new approach for the problem. Their algorithm adds shortcut edges to the graph to reduce the number of negative edges (a.k.a.\ hops) along shortest paths by a constant factor. By iterating this procedure, they can compute the shortest paths in the shortcutted graph using an $O(1)$-hop shortest path algorithm. Unfortunately, this approach increases the number of edges: even in the first iteration, the algorithm may add $O(m\sqrt{n})$ edges to the graph. Further, the number of edges at least doubles per iteration, so the graph becomes dense after a few iterations of shortcutting. Subsequent work by \cite{li2026bellmanfordalmostlineartimedense} and \cite{khanna2026n2o1timealgorithmsinglesource} independently refined this approach to add $n^{2+o(1)}$ shortcut edges, obtaining an $n^{2+o(1)}$-time algorithm.
	
	In our work, we give a significantly improved analysis of a subroutine called \emph{betweenness reduction}, which implies that the number of edges added in each iteration is only $m^{1+o(1)}$ instead of $O(m\sqrt{n})$. We then give an \emph{unfolding lemma} to show that the number of edges added in each iteration can be bounded based on the number of original edges in the graph, even after adding shortcut edges. This prevents the number of edges from blowing up throughout the iterative shortcutting process. Combining these two tools, we get the following result.

	\begin{theorem}\label{thm:main}
		There is a (Las Vegas) randomized algorithm for single-source shortest paths on real-weighted directed graphs which runs in $m^{1+o(1)}$ time, with high probability.\footnote{With high probability means with probability at least $1-1/n^C$ for arbitrarily large constant $C>0$.}
	\end{theorem}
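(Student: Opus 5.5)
The plan follows the iterative shortcutting framework of Li, Li, Rao, and Zhang~\cite{li2025shortcutting} and adds the two new ingredients described above. Call the negative edges \emph{hops}. The goal is a graph $H$ with three properties: it contains the original graph $G$, all its added edges are shortcuts (so distances are unchanged), and every shortest path from the source uses only $O(1)$ hops. Once $H$ is built, an $O(1)$-hop shortest path computation, i.e.\ a constant number of Dijkstra/Bellman-Ford alternations, runs in $\tilde O(|E(H)|)$ time and recovers exact distances in $G$. So it suffices to build such an $H$ with $|E(H)|=m^{1+o(1)}$ in $m^{1+o(1)}$ time. As is standard, I will first assume there are no negative cycles. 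A negative cycle is then detected, Las Vegas style, when the final distance computation fails to converge or the potentials fail to verify.

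\textbf{Step 1 (one round of shortcutting).} Given the current graph $G_i$, with hop-length of shortest paths at most $h_i$, I would run the betweenness-reduction subroutine. It samples a random set of $\tilde O(\cdot)$ hop-vertices and uses them to partition the remaining vertices into structured pieces. Then, within each piece, I add shortcut edges $(u,v)$ of weight $\mathrm{dist}(u,v)$ computed by a bounded-hop search. This reduces the hop-length to $h_{i+1}\le h_i/c$ for a constant $c>1$, with high probability. The improved analysis of betweenness reduction stated in the paper is what I will invoke here. It says a round adds only $m^{1+o(1)}$ edges, measured relative to the edge count of the graph it acts on, instead of $O(m\sqrt n)$.

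\textbf{Step 2 (preventing blow-up, the main obstacle).} Applied naively, $m_{i+1}\le m_i^{1+o(1)}$ compounds over $O(\log n)$ rounds to $m^{(1+o(1))^{O(\log n)}}$, which is not almost linear. Here I would use the unfolding lemma. Each shortcut edge of $G_i$ corresponds to a path of original edges; unfold it into that path. The betweenness-reduction bound then depends only on the structure of this unfolded original graph. Concretely, the edges added in round $i$ number at most $m\cdot n^{o(1)}$, where $m=|E(G)|$, independently of $|E(G_i)|$. Summing over $O(\log n)$ rounds gives $|E(H)|\le m\cdot n^{o(1)}\cdot O(\log n)=m^{1+o(1)}$.

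The delicate point is making sure the charging in the betweenness analysis survives unfolding. It must charge added edges to original edges, or to original vertex–edge incidences, rather than to shortcut edges. Random sampling must also be independent across rounds, so that the high-probability bounds can be union-bounded over the $O(\log n)$ rounds.

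\textbf{Step 3 (running time).} Each round is implemented with bounded-hop Dijkstra-type searches on $G_i$. These cost $\tilde O(|E(G_i)|)\cdot n^{o(1)}=m^{1+o(1)}$ per round. Over $O(\log n)$ rounds the total is $m^{1+o(1)}$, and the final $O(1)$-hop computation on $H$ fits within the same bound. Failure events of the random sampling are detected by verifying the final distances, for example by checking reduced costs are nonnegative, and the algorithm restarts on failure. This makes it Las Vegas with runtime $m^{1+o(1)}$ with high probability, proving Theorem~\ref{thm:main}.
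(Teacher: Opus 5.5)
Your Step~3 running-time argument leaves out the central ingredient of the proof. Betweenness reduction is not a sample-and-partition step that bounded-hop Dijkstra searches can implement. It must produce a valid potential $\phi$ that \emph{fixes} a random sample $U$ of the negative vertices, each kept with probability $1/b$. That is, $\tilde d^{h}_\phi(s,r)+\tilde d^{h}_\phi(r,t)\ge 0$ must hold for every $r\in U$ and all $s,t$. Computing such a $\phi$ is itself a negative-weight single-source shortest path problem. In the paper it is solved on a layered graph with $\tilde O(m)$ edges and $|U|$ negative vertices (\Cref{lem:betweenness-reduction-weighted}). So every shortcutting round makes a recursive call to the very problem you are solving, and your claim that a round costs $\tilde O(|E(G_i)|)\cdot n^{o(1)}$ is unsupported.

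The paper's proof of \Cref{thm:main} is exactly the analysis of this recursion:
\begin{itemize}
\item With $b=2^{\sqrt{\log n}}$, each of the $O(\log n)$ rounds recurses on an instance with $O(k/2^{\sqrt{\log n}})$ negative vertices. This gives $T(m,k)\le O(\log n)\cdot T\bigl(m+|E_0|^{1+o(1)},\,k/2^{\sqrt{\log n}}\bigr)+m^{1+o(1)}$.
\item The depth is at most $\sqrt{\log n}$. Hence there are only $O(\log n)^{\sqrt{\log n}}=n^{o(1)}$ instances, and the $\operatorname{polylog}(n)$ size blow-up per level compounds to only $n^{o(1)}$.
\item Once $k\le 2^{\sqrt{\log n}}$, the instance is solved directly by $k$-hop Bellman--Ford/Dijkstra in $\tilde O(mk)=m^{1+o(1)}$ time.
\end{itemize}
The choice of $b$ is a genuine trade-off you must make explicitly. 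The number of added edges grows like $\tilde O(m_0b^2)$. A constant $b$ would instead give recursion depth $\Theta(\log k)$, and so superpolynomially many instances when $k=n^{\Omega(1)}$.

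There are also two smaller mismatches. First, adding shortcut edges $(u,v)$ of weight $\mathrm{dist}(u,v)$ between search vertices could cost $|V_r^{\textup{in}}|\cdot|V_r^{\textup{out}}|$ edges per negative vertex $r$. The edge bound you invoke holds only for the Steiner-vertex construction. There, the new edges at $\tilde r$ come from in-neighbors of $V_r^{\textup{in}}$ and go to out-neighbors of $V_r^{\textup{out}}$, so their number is additive rather than multiplicative in the two search sides. These boundary edges are exactly what the unfolding lemma charges to original edges. Second, a bad sample $U$ never produces wrong distances here. Shortcut edges never decrease distances, and the hop reduction holds for any thresholds $\Delta_r'$. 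A bad sample only inflates the running time, so verifying reduced costs at the end detects nothing. The Las Vegas guarantee comes instead from unconditional correctness together with a high-probability time bound.
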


	\subsection{Graph Notation}
	
	All graphs $G=(V,E)$ in this paper are directed and have (possibly negative) real edge weights. If a vertex $u\in V$ has negative outgoing edges, we say $u$ is a \emph{negative vertex}, and let $N$ denote the set of negative vertices. For a vertex $u\in V$, we define $N^{\textup{out}}(u)=\{v\in V:(u,v)\in E\}$ and $N^{\textup{in}}(u)=\{v\in V:(v,u)\in E\}$ as the out-neighbors and in-neighbors of $u$. Finally, we let the distance between a set and a point be the minimum distance, i.e., $d(S,v)=\min_{s\in S}d(s,v)$.
	
	Like prior work, we heavily rely on a subroutine for efficiently computing single-source shortest paths with few negative edges. For a nonnegative integer $h$ called the \emph{hop bound}, define the \emph{$h$-negative hop distance} $d^h(s,t)$ as the minimum weight $(s,t)$-path with at most $h$ negative edges. Given a source vertex $s$, we can compute $d^h(s,t)$ for all other $t\in V$ in $O(h(m+n\log n))$ using a hybrid of Dijkstra and Bellman-Ford~\cite{dinitz2017hybrid}.
	
	Also like prior work, our algorithm makes use of \emph{potential functions} $\phi:V\to\mathbb R$ that reweight the graph, where each edge $(u,v)$ has new weight $w_\phi(u,v)=w(u,v)+\phi(u)-\phi(v)$. The key property of potential functions is that for any vertices $s,t\in V$, all $(s,t)$-paths have their weight shifted by the same additive $\phi(s)-\phi(t)$. A potential $\phi$ is \emph{valid} if $w_\phi(u,v)\ge0$ for all edges $(u,v)$ with $w(u,v)\ge0$, i.e., it does not introduce any new negative edges.

    During an iteration of our algorithm and those in prior work, we will compute a valid reweighting $\phi$, which may cause a negative edge in $G$ to become non-negative. However, even if a negative edge in $G$ is non-negative after the reweighting $\phi$, we will still call this edge negative through the rest of the iteration and count it towards hops in $h$-hop shortest paths. Following prior work, we refer to this as \emph{freezing} the set of negative edges before applying a reweighting.
	
	Finally, we will preprocess the graph using the following routine, so that each vertex $u$ has at most one negative edge incident to $u$. This preprocessing step is implicit in \cite{fineman2024single}, and our construction is essentially the same as the one given in Lemma 4 of \cite{li2025shortcutting}. 
	\begin{lemma}\label[lemma]{lem:one-negative-outgoing-edge}
		Given a graph $G=(V,E)$ with $k$ negative vertices, there is a linear-time algorithm that outputs a graph $G'=(V',E')$ with $V'\supseteq V$ such that
		\begin{enumerate}
			\item For any $s,t\in V$ and hop bound $h$, we have $d_{G'}^h(s,t)=d_G^h(s,t)$.
			\item Every vertex is adjacent to at most one negative edge. If a vertex $u$ is the tail of a negative edge, then that edge is the only outgoing edge from $u$. In addition, if a vertex $u$ is the head of a negative edge, then that edge is the only incoming edge to $u$.
			\item $G'$ also has $k$ negative vertices, and $|V'|=|V|+2k$.
		\end{enumerate}
	\end{lemma}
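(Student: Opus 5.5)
The plan is an explicit gadget construction. For each negative vertex $u\in N$ of $G$ we add two fresh vertices $a_u,b_u$ and reroute only the \emph{negative} outgoing edges of $u$ through them. This uses only the negative edges themselves, not the hop-distance machinery.

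\textbf{Construction.} Fix $u\in N$ and let $w_u=\min_{v} w(u,v)<0$, the most negative outgoing weight of $u$. In $G'$ we keep every non-negative edge of $G$ unchanged. For each $u\in N$ we delete all negative edges $(u,v)$ and add:
\begin{itemize}[nosep]
\item an edge $(u,a_u)$ of weight $0$;
\item a single edge $(a_u,b_u)$ of weight $w_u$;
\item for every $v$ with $w(u,v)<0$, an edge $(b_u,v)$ of weight $w(u,v)-w_u\ge 0$.
\end{itemize}
This takes linear time, since $w_u$ is computed by one scan of $u$'s out-edges and the edge count grows by $O(m)$. It adds exactly $2k$ vertices. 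The only negative edges of $G'$ are the edges $(a_u,b_u)$, so $G'$ has exactly $k$ negative vertices, namely the $a_u$. This gives property 3.

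\textbf{Structure (property 2).} Each original vertex, including $u$, is now incident to no negative edge. The vertex $a_u$ has $(a_u,b_u)$ as its unique outgoing edge, and $b_u$ has it as its unique incoming edge. Both are incident to only that one negative edge, because all other edges at $a_u,b_u$, namely $(u,a_u)$ and the edges $(b_u,v)$, are non-negative.

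\textbf{Distances (property 1).} Fix $s,t\in V$.
\begin{itemize}[nosep]
\item From $G$ to $G'$: map a $G$-path to a $G'$-path by replacing each negative edge $(u,v)$ with $u\to a_u\to b_u\to v$. This has weight $0+w_u+(w(u,v)-w_u)=w(u,v)$ and contains exactly one negative edge, so weights and negative-edge counts are preserved. Hence $d^h_{G'}(s,t)\le d^h_G(s,t)$.
\item From $G'$ to $G$: since $s,t\in V$, any $(s,t)$-path in $G'$ enters $a_u$ only via $(u,a_u)$. It then must continue along $(a_u,b_u)$ and leave $b_u$ by some $(b_u,v)$, because these are the only edges at $a_u$ and $b_u$. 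Contracting each such segment $u\to a_u\to b_u\to v$ to the edge $(u,v)$ gives a $G$-path. This path has the same weight, and the edge $(u,v)$ is negative in $G$, so the hop count is the same. Hence $d^h_G(s,t)\le d^h_{G'}(s,t)$.
\end{itemize}

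The only subtle point is the hop accounting. Routing only negative edges through the gadget, rather than all of $u$'s out-edges, is exactly what keeps a path using a non-negative edge out of $u$ from being charged a spurious negative hop.
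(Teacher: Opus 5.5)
Your proposal is correct and is exactly the paper's construction. The paper likewise adds two new vertices $u',u''$ (your $a_u,b_u$), a zero-weight edge $(u,u')$, the edge $(u',u'')$ carrying the minimum negative out-weight of $u$, and reroutes each negative edge $(u,v_i)$ as $(u'',v_i)$ with weight $w(u,v_i)-w(u,v_1)\ge 0$. Your two-direction path correspondence simply spells out the verification that the paper leaves as ``straightforward.''
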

	\begin{proof}
		For each negative vertex $u$, consider its negative outgoing edges $(u,v_1),\ldots,(u,v_\ell)$, and suppose that $(u,v_1)$ is the edge of smallest weight, with ties broken arbitrarily. Create two new vertices $u', u''$ and add the edge $(u', u'')$ of weight $\omega_G(u, v_1)$. Then add the edge $(u, u')$ of weight zero, and replace each edge $(u, v_i)$ by the edge $(u'', v_i)$ of weight $\omega_G(u, v_i)-\omega_G(u, v_1) \geq 0$. It is straightforward to verify all of the required conditions.
	\end{proof}

	\subsection{Reviewing the Shortcutting Algorithm by \cite{li2025shortcutting}}
	
	We begin by reviewing the shortcutting approach by \cite{li2025shortcutting}. For simplicity in this section, we will assume that every vertex has in- and out-degree $O(m/n)$ and show that one iteration of shortcutting takes $\tilde{O}(m\sqrt{k})$ time. \cite{li2025shortcutting} used the trivial bound $m\le n^2$, giving their result.

	The first step is to compute a reweighting $\phi$ so that the negative paths in the graph have more structure. A similar betweenness reduction procedure has been key to all previous work beating Bellman-Ford. We remark that the notation below is slightly different from \cite{li2025shortcutting}.
	
	\begin{definition}
		For $s,t\in V$, the (weak) betweenness of $(s,t)$, denoted $\text{BW}(s,t)$ is the number of negative vertices $r\in N$ such that $d^0(s,r)+d^{1}(r,t)<0$.
	\end{definition}
	\begin{lemma}[Betweenness Reduction, informal]\label[lemma]{lem:betweenness-reduction}
		Consider a graph $G$ with $k$ negative vertices. For any parameter $b\ge1$, we can compute valid potentials $\phi$ such that with high probability, all pairs $(s,t)\in V\times V$ have betweenness at most $k/b$ under the new weights $w_\phi$. The algorithm makes one call to negative-weight single-source shortest paths on a graph with $m$ edges, $n$ vertices, and $O(b\log n)$ negative vertices, and takes $O(m\log{n})$ additional time.
	\end{lemma}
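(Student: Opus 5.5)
We will follow the standard betweenness-reduction template from Fineman's work and \cite{li2025shortcutting}. The idea is to sample a set $R$ of $O(b\log n)$ negative vertices and compute potentials $\phi$ that neutralize every negative path passing through a sampled vertex. Concretely, pick each negative vertex independently with probability $p = \min(1, Cb\log n/k)$, so that $|R| = O(b\log n)$ with high probability. We then define an auxiliary graph $H$, obtained from $G$ by keeping only the negative edges out of vertices in $R$ (so $H$ has $O(b\log n)$ negative vertices), and add a super-source $x$ with zero-weight edges to every vertex. The single call to negative-weight SSSP on $H$ from $x$ (which has $m+n$ edges, fine up to constants) gives $\phi(v) = d_H(x,v)$; if it reports a negative cycle, so does $G$. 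Building $H$ and reading off $\phi$ takes $O(m\log n)$ additional time.

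The key steps run in order. \textbf{Validity:} $\phi$ is a shortest-path potential in $H$, so $w_\phi(u,v) \ge 0$ for every edge of $H$. In particular this holds for all non-negative edges of $G$, and for the negative edges out of $R$. \textbf{Sampled vertices have zero betweenness contribution:} for $r \in R$, we will show that under $w_\phi$, $d^0_\phi(s,r) + d^1_\phi(r,t) \ge 0$ for all $s,t$. The argument is as follows. The $d^0$ path is a path in $H$, so it has nonnegative reweighted length. For the $d^1$ path from $r$, if its single negative edge leaves $r$ itself, that edge lies in $H$ and the whole path is in $H$. Here we use \Cref{lem:one-negative-outgoing-edge}: the negative edge out of $r$ is $r$'s only outgoing edge, so any path from $r$ that uses a negative edge starts with that edge, and otherwise the path has no negative edge out of $r$. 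This leaves a path of the form (zero-hop in $G$) then negative edge $(u,u')$ with $u \notin R$ then zero-hop. To handle it, we will instead define betweenness via $d^0(s,r) + d^1(r,t)$ and exploit the fact that, by \Cref{lem:one-negative-outgoing-edge}, $r$'s negative edge is forced. We need a slightly more careful choice of $H$: include the negative edges of $R$, and use $d^1$ from $r$ being forced through $r$'s own negative edge (when $r$ is a negative vertex its only out-edge is negative), so $d^1(r,t) = w(r,r'') + d^0(r'',t)$. The path then lies entirely in $H$, and every $s \to r \to t$ path considered has nonnegative $w_\phi$-weight.

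\textbf{Sampling argument:} fix a pair $(s,t)$ and let $B(s,t)$ be the set of $r$ with $d^0(s,r) + d^1(r,t) < 0$ under the original weights. We will show that reweighting only shrinks this set in the relevant sense, since the sum changes by exactly $\phi(s) - \phi(t)$ independent of $r$. So ordering the elements of $B(s,t)$ by the value $d^0(s,r) + d^1(r,t)$, the set after reweighting is a prefix/suffix of that order (those with value $< \phi(t) - \phi(s)$). If some sampled $r \in R$ had value below the threshold, it would have negative reweighted value, contradicting the previous step. Hence all remaining betweenness vertices rank strictly below the smallest-valued sampled vertex of $B(s,t)$ in this order. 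If $|B|$ exceeded $k/b$, the $k/b$ smallest-valued vertices contain no sample with probability $(1-p)^{k/b} \le n^{-C}$. A union bound over the $n^2$ pairs completes the argument.

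The main obstacle is the zero-betweenness claim for sampled $r$. We must ensure that the relevant $d^0$ and $d^1$ paths indeed live in $H$, which is where the one-negative-edge structure from \Cref{lem:one-negative-outgoing-edge} (and frozen negative edges) is used. We also need to check the prefix/threshold argument carefully: the reweighted betweenness set must equal $\{r \in B : \text{value}(r) < \phi(t) - \phi(s)\}$, using the fact that $d^0$ and $d^1$ hop counts are computed with respect to frozen negative edges, so the path sets do not change under reweighting.
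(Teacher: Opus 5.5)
Your construction of $\phi$ is fine. For this $1$-hop statement it is also simpler than the paper's layered graph in \Cref{lem:betweenness-reduction-weighted}. A negative vertex's only out-edge is its negative edge, so every path counted in $d^1(r,t)$ with $r\in R$ is $(r,r'')$ followed by non-negative edges and lies in $H$. That gives $d^0_\phi(s,r)\ge 0$ and $d^1_\phi(r,t)\ge 0$ separately.

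The gap is in the sampling step. Your claim that reweighting ``only shrinks'' $B(s,t)$ is false. The reweighted sum is $\mathrm{val}(r)+\phi(s)-\phi(t)$, and the threshold $\phi(t)-\phi(s)$ can be positive: your $\phi$ is $\le 0$ everywhere, so any configuration with $\phi(s)<\phi(t)=0$ does it. Here is an example that respects the structure of \Cref{lem:one-negative-outgoing-edge}:
\begin{itemize}
\item $r_1\in R$ has negative edge $(r_1,s)$ of weight $-5$;
\item $(s,r)$ has weight $0$;
\item $(r,r')$ has weight $-1$, with $r\notin R$;
\item $(r',t)$ has weight $3$.
\end{itemize}
Then $\phi(s)=\phi(r)=-5$ and $\phi(r')=\phi(t)=0$. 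So $\mathrm{val}(r)=2$ and $r\notin B(s,t)$, yet $d^0_\phi(s,r)+d^1_\phi(r,t)=-3$. The post-reweighting betweenness set is therefore $\{r\in N:\mathrm{val}(r)<\phi(t)-\phi(s)\}$, which need not be contained in $B(s,t)$. Neither $|B(s,t)|$ nor an ordering of $B(s,t)$ alone controls its size.

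The repair uses only ingredients you already have. Order \emph{all} of $N$ by $\mathrm{val}_{s,t}(r)=d^0(s,r)+d^1(r,t)$ in the original weights, breaking ties by a fixed rule, so the order is independent of $R$. The reweighted betweenness set is a prefix of this order and contains no vertex of $R$. If it had more than $k/b$ elements, the first $\lfloor k/b\rfloor+1$ vertices of this fixed order would all be unsampled. That event has probability at most $(1-p)^{k/b}\le n^{-C}$, and you finish with a union bound over the $n^2$ pairs.

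This is the $|S|=|T|=1$ case of the paper's \Cref{clm:distinct-bound}. The paper sidesteps your issue by union-bounding over every set $\mathcal X_\phi[s,*,t]$ realizable by \emph{some} potential; for a single pair these are exactly the threshold prefixes of $N$. It then uses only that the realized set must avoid the sample.
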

	
	In the resulting graph with low betweenness, they then compute forward and backward Dijkstra searches from each negative edge. They show that these searches can be implemented simultaneously for all negative vertices efficiently in a graph with low betweenness. 
	
	\begin{lemma}[Informal version of Lemma 8 in \cite{li2025shortcutting}]\label[lemma]{lem:dijkstra-both-ways}
		There is an algorithm that, for any given negative vertex $r\in N$, computes a number $\Delta_r$ and two sets $V^{\textup{out}}_r$ and $V^{\textup{in}}_r$ such that
		\begin{enumerate}
			\item $d^1(r,v)<-\Delta_r$ for all $v\in V^{\textup{out}}_r$,
			\item $d^1(r,v)>-\Delta_r$ for all $v\not\in V^{\textup{out}}_r$,
			\item $d^0(v,r)<\Delta_r$ for all $v\in V^{\textup{in}}_r$,
			\item $d^0(v,r)>\Delta_r$ for all $v\not\in V^{\textup{in}}_r$,
			\item $\big||V^{\textup{out}}_r|-|V^{\textup{in}}_r|\big|\le 1.$
		\end{enumerate}
		The algorithm runs in time $O(\frac{m}{n}(|V^{\textup{out}}_r|+|V^{\textup{in}}_r|)\log{n})$. Moreover, the algorithm can output the values of $d^1(r,v)$ for all $v\in V^{\textup{out}}_r$ and $d^0(v,r)$ for all $v\in V^{\textup{in}}_r$.
	\end{lemma}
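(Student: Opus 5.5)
The plan is to run two Dijkstra-like searches in lockstep from $r$: a forward search computing $d^1(r,\cdot)$ in increasing order, and a backward search computing $d^0(\cdot,r)$ in increasing order.

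By the preprocessing of \Cref{lem:one-negative-outgoing-edge}, $r$ has a unique outgoing edge $(r,r')$, and it is negative. So $d^1(r,v)=w(r,r')+d^0(r',v)$, and the forward search is an ordinary Dijkstra from $r'$ on the nonnegative edges, shifted by $w(r,r')$. The one caveat is that paths using a single negative edge are already counted, so the search never relaxes other negative edges. The backward search is Dijkstra from $r$ on the reversed graph restricted to nonnegative edges. Both searches have nonnegative edge weights, so each pops vertices in nondecreasing order of its key. We may assume all relevant weights are on a current (possibly reweighted) graph, with the negative edges frozen.

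The key step is the interleaving rule. Let $a_i$ denote the $i$-th popped forward value, so $-a_i$ is nonincreasing and $a_i=-d^1(r,v)$ for the popped $v$. Let $b_j$ denote the $j$-th popped backward value $d^0(v,r)$. We want a threshold $\Delta_r$ with $a_i>\Delta_r$ exactly for the forward-popped set and $b_j<\Delta_r$ exactly for the backward-popped set. At each step we compare the next forward candidate value $a$ (the top of the forward heap, negated) with the next backward candidate value $b$, and we stop the first time $a\le b$. If $a>b$, we pop from whichever side currently has fewer popped vertices, keeping the counts within one.

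A subtle point is that the sequence of $a$'s is nonincreasing and the sequence of $b$'s is nondecreasing. Hence the first time they cross, every popped forward value is at least the current $a$ and every popped backward value is at most the current $b$. Choosing $\Delta_r$ strictly between $b$ and $a$ then fails, because at stopping time $a\le b$. So the comparison must be reversed: we continue while the last popped values still satisfy the desired separation. Concretely, we pop alternately, keeping counts balanced, and stop as soon as popping the next pair would violate $\max b\le\min a$, that is, when the next candidate forward value is at most the maximum popped backward value, or vice versa.

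At termination we set $\Delta_r$ strictly between the largest popped $b$ and the smallest popped $a$. If the stopping candidate ties, ties can be broken by a generic infinitesimal perturbation. Properties 1 and 3 then hold by construction. Properties 2 and 4 follow from monotonicity: every unpopped forward vertex has value at most the blocking candidate, which is $\le\Delta_r$, and symmetrically for the backward side. Making the inequalities strict is exactly why the perturbation is needed. Property 5 follows from the balanced alternation.

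The running time follows because each popped vertex costs its degree times $O(\log n)$ for heap operations, which is $O(\frac{m}{n})$ per vertex under the degree assumption. Vertices that are merely inserted into the heap but never popped are charged to the popped neighbour that relaxed them. The distances are read off directly from the searches.

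I expect the main obstacle to be getting the stopping rule exactly right: the interplay between balancing the counts and the crossing condition, together with strictness under ties. The approach I would try first is to make the value sequences strictly monotone via lexicographic perturbation, and then argue that the balanced alternation stops precisely at the first index where the separation fails.
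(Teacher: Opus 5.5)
You are right about the overall architecture, and it matches what the paper proves in its weighted version (\Cref{lem:forward-backward-search-weighted}): a Dijkstra from $r'$ on nonnegative edges shifted by $\omega(r,r')$, a reverse Dijkstra from $r$, always advancing the smaller side, and reading distances off the processed vertices. The gap is in the stopping rule and the choice of $\Delta_r$.

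First, the rule you wrote down and then rejected is exactly the paper's rule. You proposed to stop the first time the next candidates satisfy $a\le b$. The paper stops when $d^{\textup{in}}+d^{\textup{out}}\ge-\omega(r,r')$, which is the same condition, and then sets $\Delta_r=d^{\textup{in}}$. Your objection conflates two different constraints:
\begin{itemize}
\item the popped keys must satisfy $\max b_{\mathrm{popped}}<\Delta_r<\min a_{\mathrm{popped}}$;
\item the unpopped candidates must satisfy the \emph{reverse} order $a_{\mathrm{next}}<\Delta_r<b_{\mathrm{next}}$, which is what properties 2 and 4 require.
\end{itemize}
The crossing $a_{\mathrm{next}}\le b_{\mathrm{next}}$ is precisely what makes the second constraint satisfiable. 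The first is an invariant: every pop was made while $a_{\mathrm{next}}>b_{\mathrm{next}}$, and the keys are monotone, so every popped $a$ exceeds every popped $b$.

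Second, your final prescription fails. You take any $\Delta_r$ strictly between the largest popped $b$ and the smallest popped $a$, and argue properties 2 and 4 ``symmetrically.'' But only the side that triggered the stop is automatically separated from $\Delta_r$. Concretely:
\begin{itemize}
\item You have popped $r'$ (key $a=10$) and $r$ (key $b=0$), and the next candidates are $a_{\mathrm{next}}=-5$ and $b_{\mathrm{next}}=3$.
\item Both of your stopping rules halt here.
\item Choosing $\Delta_r=5\in(0,10)$ leaves an unpopped vertex with $d^0(v,r)=3<\Delta_r$ outside $V_r^{\textup{in}}$, which violates property 4.
\end{itemize}

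The fix is to pick
\[
\Delta_r\in\bigl(\max\{b_{\mathrm{last}},a_{\mathrm{next}}\},\ \min\{a_{\mathrm{last}},b_{\mathrm{next}}\}\bigr).
\]
This interval is nonempty under your genericity assumption, because both of your rules guarantee $a_{\mathrm{next}}\le b_{\mathrm{next}}$ at termination: each of your revised conditions, $a_{\mathrm{next}}\le b_{\mathrm{last}}$ and $b_{\mathrm{next}}\ge a_{\mathrm{last}}$, implies it.

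The paper avoids this bookkeeping altogether. It defines $V_r^{\textup{in}}=\{x:d^0(x,r)<\Delta_r\}$ and $V_r^{\textup{out}}=\{x:d^1(r,x)<-\Delta_r\}$ directly from the threshold $\Delta_r=d^{\textup{in}}$, then checks that these sets lie inside the processed sets. Properties 1 and 3, and non-strict versions of 2 and 4, then hold automatically. Size is controlled through the processed sets rather than by exact balance, which is also how ties are absorbed.
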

	
	\begin{lemma}[Lemma 9 of \cite{li2025shortcutting}]\label[lemma]{lem:total-size-bound}
		Under the betweenness reduction guarantee of \Cref{lem:betweenness-reduction}, we have $\sum_{r\in N}(|V_r^{\textup{out}}|+|V_r^{\textup{in}}|)^2\le O(kn^2/b)$ and $\sum_{r\in N}(|V_r^{\textup{out}}|+|V_r^{\textup{in}}|)\le O(kn/\sqrt b)$.
	\end{lemma}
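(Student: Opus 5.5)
We bound the first sum by double counting pairs that lie in both sets of the same negative vertex, and then get the second sum from the first by Cauchy--Schwarz.

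\textbf{Double counting.} Fix $r\in N$ and take any $s\in V^{\textup{in}}_r$ and $t\in V^{\textup{out}}_r$. Items 1 and 3 of \Cref{lem:dijkstra-both-ways} give
\[
d^0(s,r)+d^1(r,t)<\Delta_r-\Delta_r=0 .
\]
So $r$ is one of the negative vertices counted in $\text{BW}(s,t)$, and the number of triples $(r,s,t)$ with $s\in V^{\textup{in}}_r$ and $t\in V^{\textup{out}}_r$ is at most $\sum_{(s,t)\in V\times V}\text{BW}(s,t)$. By the guarantee of \Cref{lem:betweenness-reduction}, every pair has betweenness at most $k/b$, so this total is at most $n^2\cdot k/b$. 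This gives
\[
\sum_{r\in N}|V^{\textup{in}}_r|\cdot|V^{\textup{out}}_r|\le \frac{kn^2}{b}.
\]

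\textbf{Converting to squares.} Item 5 of \Cref{lem:dijkstra-both-ways} says the two sets have sizes within one of each other. Write $x=|V^{\textup{in}}_r|$ and $y=|V^{\textup{out}}_r|$ with $|x-y|\le 1$, so $(x+y)^2=4xy+(x-y)^2\le 4xy+1$. Summing over the $k$ negative vertices,
\[
\sum_{r\in N}(|V^{\textup{out}}_r|+|V^{\textup{in}}_r|)^2\le \frac{4kn^2}{b}+k .
\]
The additive $k$ is absorbed into $O(kn^2/b)$ as long as $b\le n^2$. This holds in every regime where the bound is used: the informal lemma takes $b$ at most polynomial in $n$, in practice $b\le k\le n$. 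This establishes the first inequality.

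\textbf{Second sum.} By Cauchy--Schwarz over the $k$ terms,
\[
\sum_{r\in N}(|V^{\textup{out}}_r|+|V^{\textup{in}}_r|)\le \sqrt{k\cdot O(kn^2/b)}=O\!\left(\frac{kn}{\sqrt b}\right).
\]

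The only delicate step is the double counting. It relies on the strict inequalities in items 1 and 3 of \Cref{lem:dijkstra-both-ways} matching the strict inequality in the betweenness definition. It also relies on the betweenness bound holding for all pairs $(s,t)\in V\times V$, including pairs with $s=t$ and pairs involving $r$ itself. The rest is routine algebra.
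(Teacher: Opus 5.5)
Your proof is correct and is essentially the argument the paper attributes to \cite{li2025shortcutting} in its technical overview. You bound $\sum_{r}|V_r^{\textup{in}}|\cdot|V_r^{\textup{out}}|$ by the total betweenness $\sum_{s,t}\text{BW}(s,t)\le kn^2/b$ using $d^0(s,r)+d^1(r,t)<0$, pass to squares via $\big||V_r^{\textup{out}}|-|V_r^{\textup{in}}|\big|\le 1$, and finish with Cauchy--Schwarz; absorbing the additive $k$ using $b\le k\le n$ correctly handles a detail the informal statement leaves implicit.
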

	
	Choosing $b=k/\textup{poly}\log{n}$, \Cref{lem:dijkstra-both-ways,lem:total-size-bound} imply that we can compute sets $V_r^{\textup{out}}$ and $V_r^{\textup{in}}$ for all $r\in N$ in time $\tilde{O}(m\sqrt{k})$. Using these sets, the algorithm \emph{shortcuts} the graph by adding \emph{Steiner vertices} with the following edges to the graph; see Figure~\ref{fig:shortcut} for a visual reference.
	\begin{enumerate}
		\item For each negative vertex $r\in N$, create a new vertex $\tilde r$.
		\item For each $r\in N$, $v\in V_r^{\textup{out}}\cup\{r\}$, and $w\in N^{\textup{out}}(v)$, add the edge $(\tilde r,w)$ of weight $d^1(r,v)+\Delta_r+w(v,w)$ \emph{only if it is non-negative},\label{item:shortcut-2}
		\item For each $r\in N$, $v\in V_r^{\textup{in}}\cup\{r\}$, and $u\in N^{\textup{in}}(v)$, add the edge $(u,\tilde r)$ of weight $w(u,v)+d^0(v,r)-\Delta_r$ \emph{only if it is non-negative},\label{item:shortcut-3}
		\item For each negative edge $(r,r')$ and each $u\in V_r^{\textup{out}}\cup\{r\}$, if there is a (unique) negative out-edge $(u,v)$ of $u$, add the edge $(r,v)$ of weight $d^1(r,u)+w(u,v)$, and\label{item:shortcut-4}
		\item For each negative edge $(r,r')$ and each $v\in V_r^{\textup{in}}\cup\{r\}$, if there is a (unique) negative in-edge $(u,v)$ of $v$, add the edge $(u,r')$ of weight $w(u,v)+d^0(v,r)+w(r,r')$.\label{item:shortcut-5}
	\end{enumerate}
	
	\begin{figure}
		\centering
		\includegraphics[scale=1]{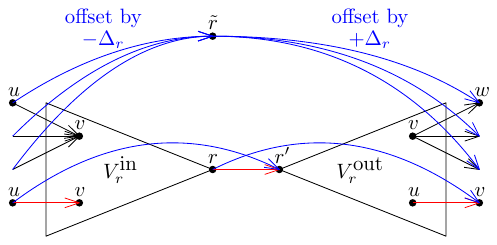}
		\caption{The shortcut construction from \cite{li2025shortcutting}. Existing negative edges are colored red, and the new shortcut edges are colored blue.}
		\label{fig:shortcut}
	\end{figure}
	
	The authors show that adding these shortcut edges suffices to reduce the number of hops in all shortest paths by a constant factor. Iterating this shortcutting procedure gives the full algorithm.
	
	
	\begin{lemma}\label{lem:shortcut}
		Consider any $s,t\in V$ and a shortest $(s,t)$-path $P$ with $h$ negative edges. After the shortcutting procedure above, there is an $(s,t)$-path with at most $h-\lfloor h/3\rfloor$ negative edges, and whose weight is at most that of $P$.
	\end{lemma}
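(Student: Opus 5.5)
Write $P$ with negative edges $(r_1,r_1'),\dots,(r_h,r_h')$ in order, so $P = P_0\,(r_1,r_1')\,P_1\,(r_2,r_2')\cdots(r_h,r_h')\,P_h$, where each $P_i$ has no negative edges. By \Cref{lem:one-negative-outgoing-edge}, each $r_i$ has $(r_i,r_i')$ as its only out-edge and each $r_i'$ has it as its only in-edge. The plan is to split the $h$ negative edges into $\lfloor h/3\rfloor$ disjoint consecutive triples $(r_{i-1},r_i,r_{i+1})$, one per window of three. Inside each triple we replace a subpath by shortcut edges, saving at least one negative edge and never increasing weight. The replacements act on disjoint segments, so their savings add up to at least $\lfloor h/3\rfloor$.

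Fix a triple centred at $r=r_i$ and let $x=r_{i-1}'$ and $y=r_{i+1}$. Consider the segment $Q$ of $P$ from the tail $r_{i-1}$ to the head $r_{i+1}'$, which contains three negative edges. The subpath from $x$ to $r$ is a $0$-negative path, and the subpath from $r$ to $y$ is a $1$-negative path. Since $P$ is shortest, subpaths are shortest, so the path weight from $x$ to $r$ is $d^0(x,r)$ and the path weight from $r$ to $y$ is $d^1(r,y)$. Compare these values with $\Delta_r$ using \Cref{lem:dijkstra-both-ways}; there are four cases.

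\emph{Case $y\in V_r^{\textup{out}}$ (or $y=r$).} Item~\ref{item:shortcut-4} with $u=y$ and $v=r_{i+1}'$ adds the edge $(r,r_{i+1}')$ of weight $d^1(r,y)+w(y,r_{i+1}')$. This replaces two negative edges by one, with equal weight.

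\emph{Case $x\in V_r^{\textup{in}}$.} Item~\ref{item:shortcut-5} with $v=x$, $u=r_{i-1}$ adds $(r_{i-1},r')$ of weight $w(r_{i-1},x)+d^0(x,r)+w(r,r')$. This is symmetric to the first case.

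\emph{Case $x\notin V_r^{\textup{in}}$ and $y\notin V_r^{\textup{out}}$.} Then $d^0(x,r)>\Delta_r$ and $d^1(r,y)>-\Delta_r$. Along the $0$-negative subpath from $x$ to $r$, take the last vertex $a$ outside $V_r^{\textup{in}}\cup\{r\}$. Its successor $a^+$ lies in $V_r^{\textup{in}}\cup\{r\}$, and suffixes of $P$ are shortest, so item~\ref{item:shortcut-3} yields an edge $(a,\tilde r)$ of weight $w(a,a^+)+d^0(a^+,r)-\Delta_r=d^0(a,r)-\Delta_r\ge 0$; the inequality holds because $a\notin V_r^{\textup{in}}$. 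Symmetrically, along the subpath from $r$ to $y$, take the first vertex $b$ outside $V_r^{\textup{out}}\cup\{r\}$, with predecessor $b^-$. Because $V_r^{\textup{out}}$ collects the prefix with $d^1<-\Delta_r$, the predecessor $b^-$ lies in $V_r^{\textup{out}}\cup\{r\}$, and item~\ref{item:shortcut-2} gives $(\tilde r,b)$ of weight $d^1(r,b)+\Delta_r\ge0$. The path $a\to\tilde r\to b$ therefore has weight $d^0(a,r)+d^1(r,b)$, which equals the original subpath weight. It uses zero negative edges in place of the one edge $(r,r')$. A subtlety is that $b$ might lie beyond $y$'s negative edge. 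If $b^-=y$, the relevant out-neighbour edge is negative, which would make item~\ref{item:shortcut-2} add a negative-weight edge. But that can only happen when $y\in V_r^{\textup{out}}$, which is the first case. So in this case $b$ lies at or before $y$, and symmetrically $a$ lies at or after $x$.

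The main obstacle is exactly these boundary cases: when $a$ or $b$ coincides with $r$ itself, or with an endpoint adjacent to a negative edge. I would handle them by checking, using that negative edges are the unique out- and in-edges of their endpoints, that an edge built from a negative $w(v,w)$ only arises in the situations covered by items~\ref{item:shortcut-4} and~\ref{item:shortcut-5}. The other step to verify is disjointness: each replacement lies inside the segment from $r_{i-1}$ to $r_{i+1}'$, and distinct triples use disjoint index windows. Together these give a path with at most $h-\lfloor h/3\rfloor$ negative edges and weight at most $w(P)$.
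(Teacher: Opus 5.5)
Your proposal is correct and follows the paper's argument essentially verbatim: the same grouping into $\lfloor h/3\rfloor$ disjoint triples and the same three cases. These are a shortcut via item~\ref{item:shortcut-5} when $x\in V_r^{\textup{in}}$, via item~\ref{item:shortcut-4} when $y\in V_r^{\textup{out}}$, and otherwise routing the last vertex outside $V_r^{\textup{in}}\cup\{r\}$ and the first vertex outside $V_r^{\textup{out}}\cup\{r\}$ through $\tilde r$. The only points to state explicitly, rather than defer to your ``boundary cases'' remark, are that $a$ and $b$ exist because $x\neq r$ and $y\neq r$ (consecutive negative edges on $P$ share no endpoints, since each vertex is incident to at most one negative edge), which is how the paper justifies it, and that your worry about $b$ lying beyond $y$ is vacuous because $b$ is chosen on the subpath ending at $y$.
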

	
	\subsection{Technical Overview}\label{sec:technical-overview}
	
	\paragraph{Improved Betweenness Reduction.} The primary bottleneck of the current approach is that there may be $\Omega(m\sqrt{k})$ edges added in Steps 2 and 3 of shortcutting. This is due to the fact that existing betweenness reduction analysis can only show an $O(n\sqrt{k})$ bound on the total sizes of the searches, and each of those vertices may have $\Theta(m/n)$ neighbors. Our first technical contribution is an improved bound on the search sizes, where we show the total search size is $n^{1+o(1)}$ under the \emph{same} betweenness reduction algorithm as before. Plugging this bound into the sketch above gives a shortcutting algorithm which adds only $m^{1+o(1)}$ edges. 
	
	The prior work bound the search size as follows: they first bound the total squared search sizes based on the total betweenness over all pairs $s,t\in V$: $\sum_{r\in N}|V_r|^2\le \sum_{s,t\in V}\text{BW}(s,t)$. Under the betweenness reduction guarantee, the betweenness of each pair is at most $n^{o(1)}$, so the total betweenness is at most $n^{2+o(1)}$. Finally, they apply Cauchy-Schwarz to bound the total search size by $\sum_{r\in N}|V_r|\le n^{1.5+o(1)}$. Informally, we go beyond this $n^{1.5}$ barrier by bounding the total search size directly: we use a \emph{subset betweenness reduction} argument to bound $\sum_{r\in N}|V_r|\le n^{1+o(1)}$. A self-contained proof of the improved betweenness reduction analysis is given in \Cref{sec:improved-betweenness-reduction}.
	
	\paragraph{Unfolding Lemma.} Applying this improved bound naively still (more than) doubles the number of edges on each shortcutting step. Since we need to shortcut $\Omega(\log{n})$ iterations, this still blows up the number of edges added to the graph. Our second contribution is a structural result which helps argue that this blowup does not happen. We show that under betweenness reduction, the number of edges added can be bounded by $m_0^{1+o(1)}$, where $m_0$ is the number of edges in the original graph before any shortcutting is done. 
	
	The idea is to charge each edge $(u,v)$ on the boundary of the searches to some original edge $(\tilde{u},\tilde{v})$ on the boundary of the searches. The number of such original edges is at most $m_0^{1+o(1)}$, so this suffices as long as we don't charge any original edge too many times. Any original edge on the boundary can simply be charged to itself, so consider a slightly harder case where $u\in V_r^{\textup{out}}$ is a Steiner vertex and $v$ is an original vertex. Suppose that there have already been $\tau$ iterations of shortcutting and $u$ was added in this final iteration. 
	
	Since $(u,v)$ is an edge incident to Steiner vertex $u$, we know that it was added to shortcut some $1$-hop path which existed after $\tau-1$ iterations of shortcutting. We can \emph{unfold} the edge into this $1$-hop path which it was designed to shortcut. The final edge in this unfolded path can now be iteratively unfolded until it becomes an edge $(\tilde u,\tilde v)$ between original vertices. This unfolded path has $O(\tau)$ negative hops, which is a certificate that $\tilde{u}$ is in the $\operatorname{polylog}(n)$-hop version of the forward search. The betweenness reduction algorithm guarantees that the $\operatorname{polylog}(n)$-hop searches are also bounded in size, so we can bound the number of such $(\tilde u,\tilde v)$ original edges.

	\paragraph{Other Technical Details.} The unfolding lemma is presented in \Cref{sec:unfolding}, but setting up the proof brings several additional technical details. The most prominent technicality is formalizing this intuition that we can unfold the paths. We formalize this as an invariant of our Steiner vertices in \Cref{sec:steiner}, and show that the invariant is satisfied by our definition of Steiner shortcut edges. In \Cref{sec:recursive-betweenness,sec:maintain-invariant}, we show that the invariants are preserved in our recursive calls for betweenness reduction and when we apply reweightings to the graph. To formalize the charging argument, we also need a weighted version of the forward and backward searches, which is presented in \Cref{sec:forward-backward-search}.

	\section{Improved Analysis of Betweenness Reduction}\label{sec:improved-betweenness-reduction}

	In this section, we prove our main guarantee of the betweenness reduction. Fix parameter $h$ throughout the section; we say that a potential function $\phi$ \emph{fixes} a subset of negative vertices $U\subseteq N$ if for all $s,t\in V$ and $r\in U$, we have 
    $d^h_\phi(s,r)+d^h_\phi(r,t)\ge0$. 
	\begin{theorem}\label{thm:betweenness}
		Let $U\subseteq N$ be a random sample where each vertex in $N$ is sampled with probability $1/b$. With high probability over the randomness of $U$, the following holds. Consider any potential function $\phi$ that fixes $U$, and for each negative vertex $r\in N$, consider any $V_r^{\textup{in}},V_r^{\textup{out}}\subseteq V$ such that for all $s\in V_r^{\textup{in}},\,t\in V_r^{\textup{out}}$, we have $d^h_\phi(s,r)+d^h_\phi(r,t)<0$. Then,
		\[ \sum_{r\in N}\min\{|V_r^{\textup{in}}|,|V_r^{\textup{out}}|\}\le\tilde{O}(nb) .\]
	\end{theorem}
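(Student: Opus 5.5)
The plan is to turn the hypothesis ``$\phi$ fixes $U$'' into a statement that depends only on $U$ and the graph, so that a union bound over polynomially many events suffices. After that, the sum $\sum_r\min\{|V_r^{\textup{in}}|,|V_r^{\textup{out}}|\}$ is charged to vertices, and I aim to show each vertex receives $\tilde O(b)$ charge. The charging step, and the lemma it needs in Step~3, are the part I have not worked out.

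\textbf{Step 1: remove $\phi$.} For any $s,t,r$ we have $d^h_\phi(s,r)+d^h_\phi(r,t)=d^h(s,r)+d^h(r,t)+\phi(s)-\phi(t)$. So $\phi$ enters only through one additive shift per pair $(s,t)$. For a fixed pair $(s,t)$, sort $N$ by the key $d^h(s,r)+d^h(r,t)$. Then the set of $r$ with $d^h_\phi(s,r)+d^h_\phi(r,t)<0$ is a \emph{prefix} of this fixed order, whatever $\phi$ is. If $\phi$ fixes $U$, this prefix contains no element of $U$. The first $Cb\log n$ elements of each order miss $U$ with probability at most $n^{-C'}$, and there are $n^2$ orders. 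Hence with high probability every such prefix, and so every pair's $\phi$-betweenness, has size $O(b\log n)$ simultaneously for all $\phi$ fixing $U$.

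\textbf{Step 2: generalize to ``subset'' prefixes.} A pairwise bound alone cannot give $\tilde O(nb)$: random bicliques with every pair covered $O(1)$ times already have $\sum\min=n^{1.5}$. So I would use the additive structure of the key. Fix a vertex $s$ and let $R_s=\{r: s\in V_r^{\textup{in}}\}$. Order $R_s$ by $d^h(s,r)$ together with a ``reach'' quantity on the out-side. The intended claim is: for $r$ late in this order, every $t\in V_r^{\textup{out}}$ also has the earlier $r'$ as a valid negative intermediate for the pair $(s,t)$, via $d(s,r')+d(r',t)\le d(s,r')+d(r',r)+d(r,t)$. The hop bound may double to $2h$ here. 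I would handle this by running Step~1 with hop parameter $2h$, or with $\log n$ levels of hop bounds, which costs only polylogarithmic factors. Each such order is determined by $(s,t)$ or by $(s,\text{threshold})$, so only polynomially many orders arise. The Step~1 prefix argument then shows that, w.h.p., at most $\tilde O(b)$ elements of $R_s$ can be ``extreme'' before a sampled element would appear in a forbidden prefix.

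\textbf{Step 3: charge and conclude.} For each $r$, charge $\min\{|V_r^{\textup{in}}|,|V_r^{\textup{out}}|\}$ to the vertices of the smaller side; by symmetry say it is $V_r^{\textup{in}}$. Then it suffices to show that each $s$ lies in the smaller side of $\tilde O(b)$ sets. Suppose $s$ lies in $V_r^{\textup{in}}$ with $|V_r^{\textup{in}}|\le|V_r^{\textup{out}}|$ for many $r$. I would take the $r$ with the largest out-side among them and try to transfer the many other $r$'s onto pairs $(s,t)$ with $t\in V_r^{\textup{out}}$, exceeding the $\tilde O(b)$ pairwise bound. The size comparison $|V_r^{\textup{in}}|\le|V_r^{\textup{out}}|$ should guarantee enough room for this. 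The dyadic bucketing by $\min$-size that I would use costs an extra $\log n$.

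The main obstacle is this last transfer step. It needs a triangle-inequality lemma that places a ``dominating'' $r'$ as a negative intermediate between $s$ and the out-side of another $r$, and I have not found it; the rest is routine. My first attempt would be to order $R_s$ by $d^h_\phi(s,r)$ and apply that inequality with the earliest element. If that fails, I would instead order by $\max_{t\in V_r^{\textup{out}}}d^h_\phi(r,t)$ and argue from the out-side symmetrically.
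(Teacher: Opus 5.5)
\textbf{There is a genuine gap in Steps~2--3, and the reduction Step~3 aims for is false.} Your Step~1 is correct. It is exactly the case $\ell=1$ of the paper's subset lemma (\Cref{clm:distinct-bound}): $\phi$ enters only through $\phi(s)-\phi(t)$, so the possible outcomes are prefixes of a fixed order. Steps~2--3, however, do not yet form a proof.

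\textbf{A counterexample to the per-vertex charging.} Step~3 requires each vertex to lie in the smaller side for only $\tilde O(b)$ negative vertices. Take a vertex $s$ and, for each $r\in N$, a negative edge $(r,x_r)$ of weight $-1$ and zero-weight edges $(s,r)$ and $(x_r,y_r)$. For any sample $U$, let $\phi\equiv 0$ except $\phi(x_r)=\phi(y_r)=-1$ for $r\in U$.
\begin{itemize}
\item This $\phi$ is valid and fixes $U$, because every path through $r\in U$ becomes non-negative.
\item For every $r\notin U$ you may take $V_r^{\textup{in}}=\{s\}$ and $V_r^{\textup{out}}=\{x_r,y_r\}$.
\item So for $b\ge 2$, $s$ lies in the strictly smaller side for all $r\in N\setminus U$, which is $\Theta(n)$ of them w.h.p. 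Meanwhile $\sum_r\min\{|V_r^{\textup{in}}|,|V_r^{\textup{out}}|\}=O(n)$, consistent with the theorem.
\end{itemize}
In this example the out-sides are pairwise disjoint and no other negative vertex reaches $r$. So there is no pair $(s,t)$ to transfer onto and no ``dominating'' intermediate. The inequality $d(s,r'')+d(r'',r)+d(r,t)<0$ you would need forces $r''$ to lie near a shortest $s$--$r$ path, and nothing guarantees that. No per-vertex charging can work; the amortization must be global.

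Separately, $\phi$ \emph{fixes} $U$ only at hop bound $h$. Since $d^{2h}\le d^h$, this says nothing at $2h$, so rerunning Step~1 with hop bound $2h$ is not available.

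\textbf{The missing idea: let the union bound scale with the size of the rectangle.} No triangle-inequality lemma is needed. For $|S|=|T|=\ell$:
\begin{itemize}
\item The pattern $\mathcal X_\phi[S,*,T]$ depends only on $\phi|_{S\cup T}\in\mathbb R^{2\ell}$. It is constant on the cells of an arrangement of $\ell^2|N|$ hyperplanes, so there are only $n^{O(\ell)}$ possible patterns.
\item A pattern whose set $R$ of participating $r$ has size at least $Cb\ell\ln n$ is compatible with a $U$-fixing $\phi$ only if $U\cap R=\emptyset$. That event has probability at most $n^{-C\ell}$.
\item Since $n^{-C\ell}$ beats $n^{O(\ell)}$, a union bound gives $|R_\phi|=O(b\ell\log n)$ for every $\ell\times\ell$ rectangle simultaneously. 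A union bound over polynomially many fixed orders cannot reach this.
\end{itemize}

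\textbf{How the paper turns the rectangle bound into the $\sum\min$ bound.}
\begin{itemize}
\item At scale $\ell$, take a triple set $\mathcal Y$ whose slices are $\ell$-sparse. Random partitions of $V$ into $\ell$-sets, combined with a count of matchings, show each $r$ contributes only $O(\log^2 n/\log\log n)$ triples per block. Hence $|\mathcal X_\phi\cap\mathcal Y|=\tilde O(n^2b/\ell)$.
\item The paper iterates over $\ell=2^i$, peeling off heavy slices, rows and columns.
\item Each negative vertex $r$, not each graph vertex, is charged $1$ per removed row or column, and the square root of a removed full slice.
\item One shows $r$ receives at least $\min\{|V_r^{\textup{in}}|,|V_r^{\textup{out}}|\}/2$, while the total charge is $\tilde O(nb)$. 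Because charges go to $r$ rather than to $s$, this scheme is unaffected by the example above.
\end{itemize}
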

	
	In this section, we work with subsets of $V\times N\times V$. For a given subset $\mathcal X\subseteq V\times N\times V$, we define ``slices'' of $\mathcal X$ as follows: for a tuple $(s,r,t)\in(V\cup\{*\})\times(N\cup\{*\})\times(V\cup\{*\})$ of vertices with possible ``wildcards'' $*$, define $\mathcal X[s,r,t]\subseteq\mathcal X$ as the following ``slice'' of $\mathcal X$:
	\[ \mathcal X[s,r,t]=\{(s',r',t')\in\mathcal X:(s'=s\textup{ or } s=*)\textup{ and }(r'=r\textup{ or } r=*)\textup{ and }(t'=t\textup{ or } t=*)\} .\]
	For example, for a vertex $r\in N$, $\mathcal X[*,r,*]$ is the set of tuples $\{(s',r',t')\in\mathcal X:r'=r\}$. We also extend this notation to subsets $S,T\subseteq V$ and $r\in N\cup\{*\}$:
	\[ \mathcal X[S,r,T]=\{(s',r',t')\in\mathcal X:s'\in S\textup{ and }(r'=r\textup{ or }r=*)\textup{ and }t'\in T\} .\]
	
	For a potential function $\phi$, let $\mathcal X_\phi\subseteq V\times N\times V$ be the set of tuples $(s,r,t)\in V\times N\times V$ for which $d_\phi^h(s,r)+d^h_\phi(r,t)<0$.
	
	We now state our main technical lemma.
	
	\begin{restatable}{lemma}{BetweennessSingleStep}\label{lem:betweenness-single-step}
		Let $U\subseteq N$ be a random sample where each vertex in $N$ is sampled with probability $1/b$. With high probability over the randomness of $U$, the following holds. Consider any parameter $\ell\ge1$ and any subset $\mathcal Y\subseteq V\times N\times V$ such that
		\begin{enumerate}
			\item $|\mathcal Y[*,r,*]|\le n^2/\ell^2$ for all $r\in N$,\label{item:sparsity-1}
			\item $|\mathcal Y[s,r,*]|\le n/\ell$ for all $s\in V$ and $r\in N$, and\label{item:sparsity-2}
			\item $|\mathcal Y[*,r,t]|\le n/\ell$ for all $r\in N$ and $t\in V$.\label{item:sparsity-3}
		\end{enumerate}
		Then, for any potential function $\phi$ that fixes $U$, we have $|\mathcal X_\phi\cap\mathcal Y|\le O(\frac{n^2}\ell\cdot\frac{b\log^3n}{\log\log n})$.
	\end{restatable}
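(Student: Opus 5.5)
The plan is to first remove the dependence on the arbitrary potential $\phi$, and then to run a union bound over a structured, potential-independent family of ``witness'' configurations. The starting observation is that for every triple $(s,r,t)$,
\[ d^h_\phi(s,r)+d^h_\phi(r,t)=d^h(s,r)+d^h(r,t)+\phi(s)-\phi(t). \]
Write $c_{st}(r)=d^h(s,r)+d^h(r,t)$ and $\theta_{st}=\phi(t)-\phi(s)$. Then $\mathcal X_\phi[s,*,t]=\{r:c_{st}(r)<\theta_{st}\}$, and the condition that $\phi$ fixes $U$ says exactly that $\theta_{st}\le\min_{u\in U}c_{st}(u)$ for all $s,t$. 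Hence $\mathcal X_\phi[s,*,t]$ is contained in the prefix $P_{st}(U)$ of $N$, ordered by $c_{st}$, that lies strictly before the first sampled vertex. Two further facts come from $\theta$ being a potential difference: $\theta_{st}+\theta_{tu}=\theta_{su}$, and $\theta_{ss}=0$, so fixing also gives $d^h(s,u)+d^h(u,s)\ge0$ for $u\in U$. These constraints tie the thresholds of different pairs together, and I will use them below.

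\textbf{Step 1 (per-$r$ structure).} Fix $r\in N\setminus U$ and set $a_s=d^h_\phi(s,r)$ and $b_t=d^h_\phi(r,t)$. Then $\mathcal X_\phi[*,r,*]=\{(s,t):a_s+b_t<0\}$ is a staircase in $V\times V$. I will decompose it dyadically into $O(\log n)$ rectangles $S_{r,i}\times T_{r,i}$, grouped by dyadic scale of $|S_{r,i}|$. By \Cref{item:sparsity-2,item:sparsity-3} each rectangle meets $\mathcal Y$ in at most $\min\{|S_{r,i}|,|T_{r,i}|\}\cdot n/\ell$ tuples, and by \Cref{item:sparsity-1} in at most $n^2/\ell^2$ tuples. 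It therefore suffices to show, for each scale $\sigma$, that the number of pairs $(r,i)$ with $\min\{|S_{r,i}|,|T_{r,i}|\}\approx\sigma$ is at most about $\tilde O(nb/\sigma)$. Summing over the $O(\log n)$ scales and the $O(\log n)$ rectangles per $r$ then gives the claimed $\frac{n^2}{\ell}\cdot b\,\mathrm{polylog}$. I expect the $\log^3 n/\log\log n$ to come from these two logarithms times an $O(\log n/\log\log n)$ loss in the concentration step below, which is a max-load type bound.

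\textbf{Step 2 (random sampling forbids many large rectangles).} A rectangle $S\times\{r\}\times T\subseteq\mathcal X_\phi$ means that for all $s\in S$ and $t\in T$, the vertex $r$ lies in $P_{st}(U)$, i.e., $r$ precedes every sampled vertex in the $c_{st}$-order. I want to show that, with high probability over $U$, no vertex $s$ can be the row of rectangles of size $\ge\sigma$ for more than $\tilde O(b)\cdot(n/\sigma)$-ish many $r$, counted with the appropriate weights. The plan is a potential-independent charging. For fixed $s$ and $r$, let $T$ be the columns of the rectangle. Consider the $c_{st}$-ranks of $r$ for $t\in T$. If $r$ had rank $\ge Cb\log n$ for some $t$, then $P_{st}(U)$ would contain a sampled vertex before $r$ with high probability. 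This holds after a union bound over the $n^2k$ triples, and the union bound is legitimate precisely because $c_{st}$ does not depend on $\phi$. So every $r$ appearing in a rectangle has $c_{st}$-rank $O(b\log n)$ for every $(s,t)$ in its rectangle. Double counting the pairs ``$(s,t)$ together with one of its $O(b\log n)$ low-rank $r$'' bounds $\sum_r|S_r||T_r|$ by $O(n^2b\log n)$. To upgrade this to a bound on $\sum_r\min\{|S_r|,|T_r|\}$ at the right scale, the plan is to use the staircase structure together with the transitivity of $\theta$. If $r$ has a rectangle $S\times T$ with $|T|\ge\sigma$, pick a $t\in T$. The relation $\theta_{st}=\theta_{st'}+\theta_{t't}$ lets me move the constraint to pairs $(t',t)$ inside $T$. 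This should show that each ``column'' vertex $t$ can be low-rank-witnessed for only $O(b\log n)$ values of $r$ whose rectangles have $\ge\sigma$ rows, which gives the count $\tilde O(nb/\sigma)$.

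The main obstacle is Step 2: passing from the easy quadratic bound $\sum_r|S_r||T_r|\le\tilde O(n^2b)$, which is what prior work effectively obtains via Cauchy--Schwarz, to a bound that is linear in $\min\{|S_r|,|T_r|\}$. The sparsity hypotheses on $\mathcal Y$ must do the work here. My first attempt will be to charge each tuple of $\mathcal X_\phi\cap\mathcal Y$ to the endpoint on the \emph{shorter} side of its rectangle. Each such endpoint $s$ lies in $\mathcal Y[s,r,*]$ for at most $n/\ell$ partners. The goal is then to show that a fixed endpoint is the short side for only $\tilde O(b)$ values of $r$ per dyadic scale. This is a ``subset betweenness'' statement that I would prove by a union bound over $s$, over the scale, and over the $O(b\log n)$-length rank prefixes, exploiting again that all the relevant orders $c_{st}$ are independent of $\phi$.
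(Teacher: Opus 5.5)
\textbf{There is a genuine gap: Step 2 carries the whole lemma, and it is a plan rather than a proof.} Worse, the per-vertex claims you intend to prove it with are false.

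Your rank-prefix union bound only gives $|\mathcal X_\phi[s,*,t]|\le O(b\log n)$ for each pair $(s,t)$. That is the quadratic bound you yourself call the barrier.

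Your claim that a fixed short-side endpoint serves only $\tilde O(b)$ values of $r$ per scale fails already at constant scale. Take the following graph: a vertex $s$ with zero-weight edges into every $r\in N$, each negative edge $(r,r')$ of weight $-1$, and three private zero-weight out-neighbours $t_{r,1},t_{r,2},t_{r,3}$ of each $r'$. Let $\phi$ be $0$ everywhere except $-1$ on $u'$ and on the $t_{u,j}$, for $u\in U$. This potential is valid and fixes $U$. Yet $\mathcal X_\phi[*,r,*]=\{s,r\}\times\{r',t_{r,1},t_{r,2},t_{r,3}\}$ for every $r\notin U$. So $s$ lies on the strictly shorter side for about $|N|$ values of $r$.

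The mirrored construction (one sink $t$, private in-neighbours of each $r$) defeats your column claim in the same way. The ``transitivity of $\theta$'' step also provides no mechanism. A vertex $t'\in T$ tells you about $d^h(r,t')$, but a pair $(t',t)$ involves $d^h(t',r)$, about which you know nothing.

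Only an aggregate count can hold. The aggregate you need is that the short sides of rectangles inside $\mathcal X_\phi$ sum to $\tilde O(nb)$. That is essentially \Cref{thm:betweenness}, which the paper derives \emph{from} this lemma.

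Your Step 1 reduction is sound in outline once repaired. Every tuple of the staircase $\mathcal X_\phi[*,r,*]$ has its row or its column in the Durfee square of side $d_r$. So hypotheses (2),(3) alone reduce the lemma to $\sum_r d_r\le\tilde O(nb)$. However, a staircase generally cannot be split into $O(\log n)$ rectangles: no rectangle inside an anti-diagonal staircase contains two of its corner cells. So your reduction moves all the difficulty into an unproved statement at least as strong as the theorem.

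\textbf{The missing idea is a union bound over $\phi$ taken per block rather than per pair.} Fix $|S|=|T|=\ell$. Membership of $(s,r,t)$ in $\mathcal X_\phi$ is a linear inequality in the $2\ell$ numbers $\phi(v)$, $v\in S\cup T$. Hence, as $\phi$ ranges over all potentials, $\mathcal X_\phi[S,*,T]$ takes only $n^{O(\ell)}$ values (cells of a hyperplane arrangement).

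Consider any such value whose set $R$ of occurring $r$ has $|R|\ge Cb\ell\ln n$. Fixing $U$ forces $U\cap R=\emptyset$, which has probability at most $n^{-C\ell}$. This beats both the $n^{O(\ell)}$ patterns and the $n^{2\ell}$ choices of $(S,T)$. The result is at most $O(b\ell\log n)$ distinct $r$ per $\ell\times\ell$ block: linear in $\ell$, whereas your per-pair prefixes give $\ell^2 b\log n$. This is \Cref{clm:distinct-bound}.

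The paper then uses all three sparsity hypotheses: Chernoff bounds for (2),(3), and a union bound over matchings for (1). With these it chooses partitions of $V$ into $n/\ell$ sets of size $\ell$ such that $|\mathcal Y[S_i,r,T_j]|=O(\log^2n/\log\log n)$ for every $r,i,j$. Summing $(n/\ell)^2\cdot O(b\ell\log n)\cdot O(\log^2n/\log\log n)$ gives the bound. This product is also where the $\log^3n/\log\log n$ comes from, not from a max-load argument.
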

	
	We first show how to ``bootstrap'' \Cref{lem:betweenness-single-step} to prove \Cref{thm:betweenness}.
	\begin{proof}[Proof (\Cref{lem:betweenness-single-step}$\implies$\Cref{thm:betweenness})]
		We assume that \Cref{lem:betweenness-single-step} holds for the random sample $U$, which occurs with high probability. The rest of the proof succeeds with probability $1$.
		
		Let $\phi$ be any potential function that fixes $U$. Initialize $\mathcal Y_0$ as the set of tuples $(s,r,t)\in V\times N\times V$ such that $r\in N,\,s\in V_r^{\textup{in}},\,t\in V_r^{\textup{out}}$, and set $\ell_i\gets2^i$ for all $i\in\{0,1,2,\ldots,\lceil\log_2n\rceil\}$. By the guarantee of $V_r^{\textup{in}},V_r^{\textup{out}}$ from \Cref{thm:betweenness}, we have $\mathcal Y_0\subseteq\mathcal X_\phi$. For iterations $i\in\{1,2,3,\ldots,\lceil\log_2n\rceil\}$, perform the following:
		\begin{enumerate}
			\item Apply \Cref{lem:betweenness-single-step} on $\mathcal Y_{i-1}$ and $\ell_{i-1}$, which must satisfy the requirements of \Cref{lem:betweenness-single-step}. We also ensure that $\mathcal Y_{i-1}\subseteq\mathcal X_\phi$. Since $\phi$ fixes $U$, we have $|\mathcal Y_{i-1}|=|\mathcal X_\phi\cap\mathcal Y_{i-1}|\le O(\frac{n^2}{\ell_{i-1}}\cdot\frac{b\log^3n}{\log\log n})$.
			\item Initialize $\mathcal Y_i\gets\mathcal Y_{i-1}$.
			\item Iteratively perform the following until no further action is possible:
			\begin{enumerate}
				\item If $|\mathcal Y_i[*,r,*]|\ge n^2/\ell_i^2$ for some $r\in N$, then update $\mathcal Y_i\gets\mathcal Y_i\setminus\mathcal Y_i[*,r,*]$.\label{item:remove-r-1}
				\item If $|\mathcal Y_i[s,r,*]|\ge n/\ell_i$ for some $s\in V,\,r\in N$, then update $\mathcal Y_i\gets\mathcal Y_i\setminus\mathcal Y_i[s,r,*]$.\label{item:remove-r-2}
				\item If $|\mathcal Y_i[*,r,t]|\ge n/\ell_i$ for some $r\in N,\,t\in V$, then update $\mathcal Y_i\gets\mathcal Y_i\setminus\mathcal Y_i[*,r,t]$.\label{item:remove-r-3}
			\end{enumerate}
			This guarantees that the next call to \Cref{lem:betweenness-single-step} satisfies the necessary requirements, as well as $\mathcal Y_i\subseteq\mathcal Y_{i-1}\subseteq\mathcal X_\phi$. Furthermore, for the last iteration $i=\lceil\log_2n\rceil$, we have $\mathcal Y_i=\emptyset$.
		\end{enumerate}
		We account for the updates in steps~(\ref{item:remove-r-1}) to (\ref{item:remove-r-3}) as follows:
		\begin{enumerate}[label=(\alph*)]
			\item There are at most $\frac{|\mathcal Y_{i-1}|}{n^2/\ell_i^2}=O(\ell_i\cdot\frac{b\log^3n}{\log\log n})$ negative vertices $r\in N$ processed in step~(\ref{item:remove-r-1}). For each such $r\in N$, since $|\mathcal Y_{i-1}[*,r,*]|\le n^2/\ell_{i-1}^2$, we removed at most $n^2/\ell_{i-1}^2$ entries from $\mathcal Y_i$. We charge to $r$ the square-root of the number of entries removed, which is at most $n/\ell_{i-1}$. The total charge is at most $O(n\cdot\frac{b\log^3n}{\log\log n})$.\label{item:charge-r-1}
			\item There are at most $\frac{|\mathcal Y_{i-1}|}{n/\ell_i}=O(n\cdot\frac{b\log^3n}{\log\log n})$ tuples $(s,r)\in V\times N$ processed in step~(\ref{item:remove-r-2}). For each such tuple $(s,r)$, we charge $1$ to $r$.\label{item:charge-r-2}
			\item There are at most $\frac{|\mathcal Y_{i-1}|}{n/\ell_i}=O(n\cdot\frac{b\log^3n}{\log\log n})$ tuples $(r,t)\in N\times V$ processed in step~(\ref{item:remove-r-3}). For each such tuple $(r,t)$, we charge $1$ to $r$.\label{item:charge-r-3}
		\end{enumerate}
		The total charge over all $\lceil\log_2n\rceil$ iterations is $O(n\cdot\frac{b\log^4n}{\log\log n})$. At the same time, we lower bound the charge as follows.
		
		\begin{subclaim}\label{clm:charge-r}
			Each negative vertex $r\in N$ is charged at least $\min\{|V_r^{\textup{in}}|,|V_r^{\textup{out}}|\}/2$.
		\end{subclaim}
		\begin{subproof}
			Fix a negative vertex $r\in N$, and consider the set $\mathcal Y_i[*,r,*]$ throughout the process, which only loses elements over time. Initially, $\mathcal Y_0[*,r,*]=V_r^{\textup{in}}\times\{r\}\times V_r^{\textup{out}}$, which we view as a matrix with rows indexed by $V_r^{\textup{in}}$ and columns indexed by $V_r^{\textup{out}}$. If step~(\ref{item:remove-r-2}) is executed on tuple $(s,r)$, then the row $\mathcal Y_i[s,r,*]$ is removed, and if step~(\ref{item:remove-r-3}) is executed on tuple $(r,t)$, then the column $\mathcal Y_i[*,r,t]$ is removed. In particular, each tuple $(s,r)$ and $(r,t)$ is responsible for charging $r$ at most once from~\ref{item:charge-r-2} and \ref{item:charge-r-3}. Finally, if step~(\ref{item:remove-r-1}) is executed on $r$, then the entire matrix is cleared.
			
			If there are at least $\min\{|V_r^{\textup{in}}|,|V_r^{\textup{out}}|\}/2$ charges to $r$ from~\ref{item:charge-r-2} or \ref{item:charge-r-3}, then we are done. Otherwise, the remaining elements in $\mathcal Y_i[*,r,*]$ must be removed by a charge from~\ref{item:charge-r-1}. At this point, the remaining matrix has dimensions at least $|V_r^{\textup{in}}|-\min\{|V_r^{\textup{in}}|,|V_r^{\textup{out}}|\}/2$ by $|V_r^{\textup{out}}|-\min\{|V_r^{\textup{in}}|,|V_r^{\textup{out}}|\}/2$, so there are at least
			\[ (|V_r^{\textup{in}}|-\min\{|V_r^{\textup{in}}|,|V_r^{\textup{out}}|\}/2)\cdot(|V_r^{\textup{out}}|-\min\{|V_r^{\textup{in}}|,|V_r^{\textup{out}}|\}/2)\ge(\min\{|V_r^{\textup{in}}|,|V_r^{\textup{out}}|\}/2)^2 \]
			many elements remaining in $\mathcal Y_i[*,r,*]$. It follows that $r$ is charged at least $\min\{|V_r^{\textup{in}}|,|V_r^{\textup{out}}|\}/2$, as promised.
		\end{subproof}
		By \Cref{clm:charge-r}, the total charge is at least $\sum_{r\in N}\min\{|V_r^{\textup{in}}|,|V_r^{\textup{out}}|\}/2$. At the same time, the total charge is at most $O(n\cdot\frac{b\log^4n}{\log\log n})$, so $\sum_{r\in N}\min\{|V_r^{\textup{in}}|,|V_r^{\textup{out}}|\}/2\le O(n\cdot\frac{b\log^4n}{\log\log n})$, completing the proof of \Cref{thm:betweenness}.
	\end{proof}
	
	It remains to prove \Cref{lem:betweenness-single-step}. We begin with the following \emph{subset} betweenness reduction guarantee, which is a generalization of the standard betweenness reduction to subsets $S,T\subseteq V$. (Observe that the statement is the same for $\ell=1$.)
	\begin{lemma}\label{clm:distinct-bound}
		With high probability over the randomness of $U$, the following holds for all integers $\ell\in[n]$ and all subsets $S,T\subseteq V$ with $|S|=|T|=\ell$: for any potential function $\phi$ that fixes $U$, the set $R_\phi=\{r:\mathcal X_\phi[S,r,T]\ne\emptyset\}$ has size at most $O(b\ell\log n)$.
	\end{lemma}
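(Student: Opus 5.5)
The plan is a union bound over all possible sets $R_\phi$, using the fact that $\phi$ enters only through $2\ell$ real parameters. First I would record the basic avoidance property. If $\phi$ fixes $U$, then $R_\phi\cap U=\emptyset$: an element $u\in U\cap R_\phi$ would give $s\in S,t\in T$ with $d^h_\phi(s,u)+d^h_\phi(u,t)<0$, contradicting that $\phi$ fixes $U$.

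Second, I would remove the dependence on $\phi$ except through few parameters. Reweighting does not change which edges count as negative hops, since the negative edges are frozen, so the set of $h$-hop paths is unchanged and each $(x,y)$-path shifts by exactly $\phi(x)-\phi(y)$. Hence
\[ d^h_\phi(s,r)+d^h_\phi(r,t)=d^h(s,r)+d^h(r,t)+\phi(s)-\phi(t). \]
Writing $a_s=\phi(s)$ for $s\in S$ and $c_t=-\phi(t)$ for $t\in T$, we get that $r\in R_\phi$ if and only if some pair $(s,t)\in S\times T$ has $a_s+c_t<-d^h(s,r)-d^h(r,t)$. So, for fixed $S,T$, the set $R_\phi$ is determined entirely by the sign pattern of the $n\ell^2$ affine functions $(a,c)\mapsto a_s+c_t+d^h(s,r)+d^h(r,t)$, indexed by $r\in N,s\in S,t\in T$, in the $2\ell$ real variables $(a,c)$. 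If $S$ and $T$ overlap, some coordinates are identified, which only reduces the number of variables.

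Third, I would count the realizable sets. By the standard bound on sign patterns of $M$ affine (or low-degree polynomial) functions in $D$ variables, of the form $(O(M/D))^{D}$ as in Warren's theorem, or simply by counting cells of a hyperplane arrangement, the number of distinct sign patterns is at most $(O(n\ell^2/\ell))^{2\ell}=n^{O(\ell)}$. Therefore, for fixed $S,T$, the family $\mathcal R_{S,T}$ of all sets $R_\phi$ arising from arbitrary $\phi$ has $|\mathcal R_{S,T}|\le n^{O(\ell)}$. Crucially, this family does not depend on $U$.

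Finally, I would take the union bound. For any fixed $R\subseteq N$ with $|R|\ge Cb\ell\log n$, we have $\Pr[U\cap R=\emptyset]=(1-1/b)^{|R|}\le n^{-C\ell}$. Summing over $\ell\in[n]$, over the at most $n^{2\ell}$ choices of $(S,T)$, and over the $n^{O(\ell)}$ sets in $\mathcal R_{S,T}$, the probability that any such large set avoids $U$ is at most $\sum_\ell n^{(O(1)-C)\ell}\le n^{-\Omega(C)}$ for $C$ a large enough constant. Outside this bad event, every $R_\phi$ with $\phi$ fixing $U$ avoids $U$ by the first step, and so has size $O(b\ell\log n)$.

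The main obstacle is the counting step: a naive description of $R_\phi$ as a union of $\ell^2$ prefixes, one per pair $(s,t)$, yields $n^{\ell^2}$ possible sets, which is too many for the union bound. The point is that all $\ell^2$ thresholds are governed by only $2\ell$ shared parameters $a_s,c_t$. I would first try the hyperplane-arrangement cell count, since the functions are affine in $(a,c)$, with Warren's theorem as a fallback.
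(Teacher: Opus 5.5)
Your proposal is correct and follows essentially the same route as the paper. Both arguments use three ingredients:
\begin{itemize}
\item the avoidance property $R_\phi\cap U=\emptyset$;
\item a hyperplane-arrangement count showing that, for fixed $S,T$, only $n^{O(\ell)}$ outcomes are possible;
\item a union bound against $(1-1/b)^{Cb\ell\ln n}\le n^{-C\ell}$.
\end{itemize}
The only differences are cosmetic. The paper counts the sets $\mathcal X_\phi[S,*,T]$ in $|S\cup T|$ variables, while you count the sets $R_\phi$ in $2\ell$ variables.
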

	\begin{proof}
		We want to perform a union bound over all $S,T\subseteq V$ and all possible potential functions $\phi$. Since the number of potential functions is infinite, we instead union bound over all possible \emph{distinct} sets $\mathcal X_\phi[S,*,T]$.
		
		\begin{subclaim}\label{clm:signing}
			For any $S,T\subseteq V$, the number of possible distinct sets $\mathcal X_\phi[S,*,T]$ over all possible potential functions $\phi$ is $O((|S|\cdot|T|\cdot|N|)^{|S\cup T|})$.
		\end{subclaim}
		\begin{subproof}
			For a given potential function $\phi$, we have $(s,r,t)\in\mathcal X_\phi[S,*,T]$ if and only if $d^h_\phi(s,r)+d^h_\phi(r,t)<0$, or equivalently $d^h(s,r)+d^h(r,t)+\phi(s)-\phi(t)<0$. Treating $\phi(s)$ and $\phi(t)$ as variables and $d^h(s,r)+d^h(r,t)$ as a constant, we conclude that $(s,r,t)\in\mathcal X_\phi[S,*,T]$ if and only if a certain linear inequality on variables $\phi(s)$ and $\phi(t)$ is satisfied. Each linear inequality corresponds to a hyperplane in $\mathbb R^{S\cup T}$, and there are $|S|\cdot|T|\cdot|N|$ many linear inequalities over all $s\in S,\,t\in T,\,r\in N$. It is well known that an arrangement of $x$ hyperplanes in $\mathbb{R}^d$ has $O(x^d)$ cells. It follows that the number of cells is $O((|S|\cdot|T|\cdot|N|)^{|S\cup T|})$, and each region corresponds to a possible set $\mathcal X_\phi[S,*,T]$.
		\end{subproof}
		Using \Cref{clm:signing}, we now prove \Cref{clm:distinct-bound}. Consider subsets $S,T\subseteq V$ with $|S|=|T|=\ell$, and consider a set $\mathcal X\subseteq V\times N\times V$ for which the set $R=\{r:\mathcal X[S,r,T]\ne\emptyset\}$ has size at least $Cb\ell\ln n$, where $C>0$ is a large enough constant. Our goal is to bound the probability, over the random choice of $U$, that there \emph{exists} a potential function $\phi$ that fixes $U$ and satisfies $\mathcal X_\phi[S,*,T]=\mathcal X[S,*,T]$. Then, we finish by taking a union bound over all possible $S,T\subseteq V$ and $\mathcal X[S,*,T]$.
		
		The key claim is that if such a potential function exists, then $U$ must be disjoint from $R$. Otherwise, if $r\in U\cap R$, then consider $s\in S$ and $t\in T$ with $(s,r,t)\in\mathcal X$. We have $d^h_\phi(s,r)+d^h_\phi(r,t)<0$ since $(s,r,t)\in\mathcal X_\phi$, but also $d^h_\phi(s,r)+d^h_\phi(r,t)\ge0$ since $\phi$ fixes $U$, a contradiction.
		
		Therefore, if such a potential function $\phi$ exists, then the sampled subset $U$ must be disjoint from $R$. Each vertex in $R$ is sampled into $U$ with probability $1/b$, so the probability that none are sampled is
		\[ \left(1-\frac1b\right)^{|R|}\le\left(1-\frac1b\right)^{Cb\ell\ln n}\le\exp\left(-\frac1b\cdot Cb\ell\ln n\right)=n^{-C\ell} .\]
		Finally, taking a union bound over all $S,T\subseteq V$ with $|S|=|T|=\ell$ and possible set $\mathcal X[S,*,T]$, using the bound from \Cref{clm:signing}, the overall probability of failure is at most
		\[ n^\ell\cdot n^\ell\cdot O((n^3)^{2\ell})\cdot n^{-C\ell} ,\]
		which is polynomially small when $C>0$ is large enough. Taking a union bound over $\ell\in[n]$ finishes the proof.
	\end{proof}
	
	We now prove \Cref{lem:betweenness-single-step}, restated below.
	\BetweennessSingleStep*
	\begin{proof}
		We assume that \Cref{clm:distinct-bound} holds for the random sample $U$, which occurs with high probability. The rest of the proof succeeds with probability $1$.
		
		Consider any parameter $\ell\ge1$ and any subset $\mathcal Y\subseteq V\times N\times V$ satisfying assumptions~(\ref{item:sparsity-1}) to~(\ref{item:sparsity-3}). To bound $|\mathcal X_\phi\cap\mathcal Y|$, it will be more convenient to bound $|(\mathcal X_\phi\cap\mathcal Y)[S_i,*,T_j]|$ where $V\times V$ is partitioned into rectangular subsets $S_i\times T_j$ with desirable properties:
		\begin{subclaim}\label{clm:r-bound}
			There exist partitions $\{S_1,S_2,\ldots,S_{n/\ell}\}$ and $\{T_1,T_2,\ldots,T_{n/\ell}\}$ of $V$, each with $n/\ell$ subsets of size $\ell$, such that $|\mathcal Y[S_i,r,T_j]|\le O(\frac{\log^2n}{\log\log n})$ for each $r\in N$ and $i,j\in[n/\ell]$.
		\end{subclaim}
		\begin{subproof}
			Let $\{S_1,S_2,\ldots,S_{n/\ell}\}$ and $\{T_1,T_2,\ldots,T_{n/\ell}\}$ be two independent random partitions of $V$, each with $n/\ell$ subsets of size $\ell$. It suffices to show that the statement holds with nonzero probability.
			
			For the entire proof, fix $r\in N$ and $i,j\in[n/\ell]$. For each $s\in S_i$, we have $|\mathcal Y[s,r,*]|\le n/\ell$ by assumption~(\ref{item:sparsity-2}) of \Cref{lem:betweenness-single-step}. Each tuple $(s,r,t)\in\mathcal Y[s,r,*]$ satisfies $t\in T_j$ with probability $\ell/n$, so by a Chernoff bound (with negative correlation), we have $|\mathcal Y[s,r,T_j]|\le O(\log n)$ with high probability. For each $t\in T_j$, by the same argument with assumption~(\ref{item:sparsity-3}), we have $|\mathcal Y[S_i,r,t]|\le O(\log n)$ with high probability.
			
			Next, we view $\mathcal Y[*,r,*]$ as a bipartite graph with (a copy of) the vertex set $V$ on each side. Formally, we define the graph $H_{\mathcal Y,r}$ on vertex set $\{v_1:v\in V\}\sqcup\{v_2:v\in V\}$ with the edge set $\{(s_1,t_2):(s,r,t)\in\mathcal Y]$. Consider the induced subgraph $H_{\mathcal Y,r}[W]$ with $W=\{v_1:v\in S_i\}\sqcup\{v_2:v\in T_j\}$. By the previous argument, the maximum degree of $H_{\mathcal Y,r}[W]$ is $O(\log n)$ with high probability. Observe that $|\mathcal Y[S_i,r,T_j]|$ is exactly the number of edges of $H_{\mathcal Y,r}[W]$, and a greedy matching procedure gives a matching of size $\Omega(|\mathcal Y[S_i,r,T_j]|/\log n)$. Therefore, in order to upper bound $|\mathcal Y[S_i,r,T_j]|$, it suffices to upper bound the maximum matching of $H_{\mathcal Y,r}[W]$.
			
			Let $q\ge1$ be an integer. By assumption~(\ref{item:sparsity-1}), the initial bipartite graph $H_{\mathcal Y,r}$ has at most $n^2/\ell^2$ edges, so the number of matchings of size $q$ is at most $\binom{n^2/\ell^2}q\le(\frac{e\cdot n^2/\ell^2}q)^q=(\frac eq)^q\cdot(\frac n\ell)^{2q}$. Each such matching is included in $H_{\mathcal Y,r}[W]$ if and only if each edge $(s_1,t_2)$ in the matching satisfies $s\in S_i$ and $t\in T_j$. For each edge $(s_1,t_2)$ in the matching, the probability that $s\in S_i$ and $t\in T_j$ is exactly $(\ell/n)^2$. Since each edge in the matching has distinct endpoints, the probabilities are negatively correlated over the edges, so the overall probability that a matching is included in $H_{\mathcal Y,r}[W]$ is at most $(\ell/n)^{2q}$. Taking a union bound over all matchings, the probability that $H_{\mathcal Y,r}[W]$ has a matching of size $q$ is at most $(\frac eq)^q\cdot(\frac n\ell)^{2q}\cdot(\frac\ell n)^{2q}=(\frac eq)^q$. For $q=\Theta(\frac{\log n}{\log\log n})$, the subgraph $H_{\mathcal Y,r}[W]$ has maximum matching at most $q-1$ with high probability. We conclude that $|\mathcal Y[S_i,r,T_j]|\le O((q-1)\log n)=O(\frac{\log^2n}{\log\log n})$ with high probability. Taking a union bound over all $r\in N$ and $i,j\in[n/\ell]$ proves the subclaim.
		\end{subproof}
		
		For the rest of the proof, consider the partitions $\{S_1,S_2,\ldots,S_{n/\ell}\}$ and $\{T_1,T_2,\ldots,T_{n/\ell}\}$ guaranteed by \Cref{clm:r-bound}. 
		
		\begin{subclaim}\label{clm:submatrix-bound}
			For any potential function $\phi$ that fixes $U$, we have $|(\mathcal X_\phi\cap\mathcal Y)[S_i,*,T_j]|\le O(\frac{b\ell\log^3n}{\log\log n})$ for each $i,j\in[n/\ell]$.
		\end{subclaim}
		\begin{subproof}
			By \Cref{clm:distinct-bound}, the set $R_\phi=\{r:\mathcal X_\phi[S_i,r,T_j]\ne\emptyset\}$ has size at most $O(b\ell\log n)$. By \Cref{clm:r-bound}, we have $|\mathcal Y[S_i,r,T_j]|\le O(\frac{\log^2n}{\log\log n})$ for each $r\in N$. Both of these statements still hold with $\mathcal X_\phi$ and $\mathcal Y$ replaced by $\mathcal X_\phi\cap\mathcal Y$, so we conclude that $|(\mathcal X_\phi\cap\mathcal Y)[S_i,*,T_j]|\le O(\frac{b\ell\log^3n}{\log\log n})$.
		\end{subproof}
		
		Finally, we finish the proof of \Cref{lem:betweenness-single-step}. The set $\mathcal X_\phi\cap\mathcal Y$ is partitioned into $(n/\ell)^2$ many subsets $(\mathcal X_\phi\cap\mathcal Y)[S_i,*,T_j]$, each of size $O(\frac{b\ell\log^3n}{\log\log n})$ by \Cref{clm:submatrix-bound}, for a total size of $O(\frac{n^2}\ell\cdot\frac{b\log^3n}{\log\log n})$.
	\end{proof}
	
	\subsection{Weighted Betweenness Reduction}
	
	We now introduce a weighted version of our betweenness reduction guarantee in \Cref{thm:betweenness}. Assume that every vertex $v\in V$ is assigned a non-negative weight $\lambda(v)$. We show an analogue of \Cref{thm:betweenness} where the sizes $|V_r^{\textup{in}}|$, $|V_r^{\textup{out}}|$ are replaced by the total weight of vertices in $V_r^{\textup{in}}$, $V_r^{\textup{out}}$. 
	
	\begin{corollary} \label{lem:betweenness-weighted}
		Consider a non-negative weight $\lambda(v)$ on every vertex $v$ in the graph. Let $\Lambda=\sum_{v\in V}\lambda(v)$ denote the total weight. Let $U\subseteq N$ be a random sample where each vertex in $N$ is sampled with probability $1/b$. With high probability over the randomness of $U$, the following holds. Consider any potential function $\phi$ that fixes $U$, and for each negative vertex $r\in N$, consider any $V_r^{\textup{in}},V_r^{\textup{out}}\subseteq V$ such that for all $s\in V_r^{\textup{in}},\,t\in V_r^{\textup{out}}$, we have $d^h_\phi(s,r)+d^h_\phi(r,t)<0$. Then,
		\[ \sum_{r\in N}\min\Big\{\sum_{v\in V_r^{\textup{in}}}\lambda(v),\sum_{v\in V_r^{\textup{out}}}\lambda(v)\Big\}\le\tilde{O}(\Lambda b) .\]
	\end{corollary}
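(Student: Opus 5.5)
We reduce \Cref{lem:betweenness-weighted} to \Cref{thm:betweenness} by splitting vertices according to their weight. First normalize so that the weights are integers: scale $\lambda$ so that $\Lambda$ becomes a suitable polynomial in $n$, say $\Lambda'=\Theta(n)$ after rescaling. Round each weight up to $\lceil\lambda(v)\cdot n/\Lambda\rceil$. This increases the total weight to at most $2n$. It only increases the left-hand side, by at most a factor $n/\Lambda$ relative to the scaled quantity. The rescaling is harmless, since both sides of the claimed inequality are homogeneous of degree one in $\lambda$.

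Next, build an auxiliary graph $G'$ in which each vertex $v$ is replaced by $\lambda'(v)=\lceil\lambda(v)n/\Lambda\rceil$ copies. Concretely, keep $v$ and attach $\lambda'(v)$ ``clone'' vertices $v^{(1)},\dots,v^{(\lambda'(v))}$, each with a zero-weight edge to $v$ and a zero-weight edge from $v$. Then $G'$ has $n'=O(n)$ vertices. It has the same negative vertices $N$, and the same negative edges, so hop counts are unchanged. All $h$-hop distances involving clones equal those of the original vertex, because clones are connected to $v$ by zero edges in both directions. The negative-vertex set is unchanged, so the same random sample $U\subseteq N$ serves for both graphs.

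Any potential $\phi$ on $G$ extends to $G'$ by setting $\phi(v^{(j)})=\phi(v)$. Under this extension, $d^h_\phi(v^{(j)},r)=d^h_\phi(v,r)$ and $d^h_\phi(r,v^{(j)})=d^h_\phi(r,v)$. Hence $\phi$ fixes $U$ in $G'$ as well, since fixing only concerns pairs $s,t$, and clone distances coincide with original ones. Now take the sets $\widehat V_r^{\textup{in}}=\{v^{(j)}:v\in V_r^{\textup{in}}\}$ and similarly $\widehat V_r^{\textup{out}}$. These satisfy the hypothesis of \Cref{thm:betweenness} in $G'$, and their sizes are $\sum_{v\in V_r^{\textup{in}}}\lambda'(v)$ and $\sum_{v\in V_r^{\textup{out}}}\lambda'(v)$. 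Applying \Cref{thm:betweenness} to $G'$ gives
\[
\sum_r\min\Big\{\sum_{v\in V_r^{\textup{in}}}\lambda'(v),\ \sum_{v\in V_r^{\textup{out}}}\lambda'(v)\Big\}\le\tilde O(n'b)=\tilde O(nb).
\]
Since $\lambda'\ge \lambda n/\Lambda$, multiplying through by $\Lambda/n$ gives the claim $\tilde O(\Lambda b)$.

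The main obstacle is the quantifier order. \Cref{thm:betweenness} is a high-probability statement over $U$ for a fixed graph, but here $G'$ depends on $\lambda$. If $\lambda$ is chosen adversarially after $U$ is sampled, we would need a union bound over graphs. There are two options. The first is to note that the high-probability event inside the proof is really \Cref{clm:distinct-bound} for $G'$. Its failure bound, via \Cref{clm:signing}, depends only on distances $d^h(s,r)+d^h(r,t)$ between vertex pairs. These are just duplicated from $G$, so the event for $G$ implies the event for $G'$ deterministically: a pair of sets $S,T$ in $G'$ projects to sets in $G$ of size at most $\ell$, with the same $R_\phi$. The second option, if that is awkward, is to union bound over the at most $(O(n))^n$ integer weight vectors $\lambda'$ with total $O(n)$. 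A per-graph failure probability of $n^{-Cn}$ is not directly available, so I would pursue the first approach and rerun the proof of \Cref{thm:betweenness} on $G'$ using only the projected event.
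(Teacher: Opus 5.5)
Your proof is correct and follows the paper's proof: rescale and round the weights to integers with total $O(n)$, give each vertex that many copies attached by zero-weight edges in both directions, extend $\phi$ by copying values, and apply \Cref{thm:betweenness} to the enlarged graph. The quantifier concern does not arise, because in the statement $\lambda$ is fixed before $U$ is sampled, so $G'$ is a fixed graph and \Cref{thm:betweenness} applies to it directly; your projection argument is a valid but optional strengthening that also covers weights chosen after $U$.
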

	
	\begin{proof}
		We reduce to the unweighted case in \Cref{thm:betweenness}. Define the rounded weights\footnote{If $\Lambda = 0$ then the corollary holds immediately, so we can assume $\Lambda > 0$.} $\lambda'(v)=\lfloor \frac{\lambda(v)}\Lambda\cdot n\rfloor+1$. It suffices to prove that
		\[ \sum_{r\in N}\min\Big\{\sum_{v\in V_r^{\textup{in}}}\lambda'(v),\sum_{v\in V_r^{\textup{out}}}\lambda'(v)\Big\}\le\tilde{O}(nb) .\]
		Next, construct an auxiliary graph $G'$ as follows. Starting with the original graph $G$, for each vertex $v$, add $\lambda'(v)-1$ extra copies of $v$, and attach them to the original vertex $v$ with bidirectional zero-weight edges. This auxiliary graph $G'$ has $O(n)$ vertices since $\sum_u \lambda'(u)=O(n)$.
		
		Consider a potential function $\phi$ that fixes $U$ in $G$. We extend $\phi$ to a potential function that fixes $U$ in $G'$ by setting $\phi(v') = \phi(v)$ for all vertex copies $v'$ of each vertex $v$. Since each $v$ has $\lambda'(v)$ copies (including the original one), \Cref{thm:betweenness} implies that
		\[ \sum_{r\in N}\min\Big\{\sum_{v\in V_r^{\textup{in}}}\lambda'(v),\sum_{v\in V_r^{\textup{out}}}\lambda'(v)\Big\}\le\tilde{O}(nb) ,\]
		which completes the proof.
		%
		%
	\end{proof}
	
	\section{Preliminaries}

	Let $G=(V,E)$ be a directed graph with a set $N$ of negative vertices; our goal is to compute single-source shortest paths in $G$. For technical reasons in the recursive subproblems, we are also given as input a set of auxiliary edges $E_{\textup{fake}}$. We will need to include these edges in our betweenness reduction and forward/backward searches in order for our \emph{unfolding lemma} to be valid and to ensure that the shortest path distance returned by our recursive problems satisfies additional properties that we need. We emphasize that each of our recursive problems solves the single-source shortest path problem for the graph without any of the auxiliary edges $E_{\textup{fake}}$. In the top level of recursion, we simply set $E_{\textup{fake}}=\emptyset$ to solve the original problem.

	For these auxiliary edges, we define $\tilde d^h(u,v)$ to be the $h$-hop shortest path distance that may use auxiliary edges as hops. For shortest path distances in the graph without auxiliary edges, we still use $d(u,v)$ and $d^h(u,v)$. For a subset $S\subseteq V$, we also let $d(S,v)$ and $d^h(S,v)$ be the shortest path and $h$-hop shortest paths from any vertex in $S$ to $v$. Finally, when we say the volume of a subset $S$ of vertices, we are referring to the sum of the degrees of those vertices.
	
	For simplicity, we maintain the following invariant on (both real and auxiliary) negative edges:
	
	\begin{enumerate}
		\item[{\crtcrossreflabel{(I1)}[item:negative-edge-invariant-1]}] Every vertex is adjacent to at most one (real or auxiliary) negative edge. If a vertex $r$ is the tail of a real negative edge, then that edge is the only outgoing edge from $r$. In addition, if a vertex $r'$ is the head of a real negative edge, then that edge is the only incoming edge to $r'$.
	\end{enumerate}
	
	For the original graph, we use \Cref{lem:one-negative-outgoing-edge} to ensure this property. During the algorithm, we maintain the invariant whenever we construct new auxiliary graphs.

	\subsection{Steiner Vertices}\label{sec:steiner}
	
	Our shortcutting algorithm will require adding additional nodes, called Steiner vertices, to the graph in order to shortcut more efficiently. In our algorithms, the Steiner vertices are the $\tilde{r}$ vertices from \cite{li2025shortcutting} added in the shortcutting step. These vertices will then be passed into the recursive SSSP instances in our algorithm. Below, we enforce some important invariants of Steiner vertices.
	
	As input, the vertex set $V$ will be partitioned into subsets $V=S_0\sqcup S_1\sqcup\cdots\sqcup S_{\tau}$, where $\tau$ is the number of shortcutting iterations already performed. Informally, $S_0$ are the original vertices and vertex copies and $S_i$ for $i\ge1$ are the Steiner vertices added on the $i^{th}$ iteration of shortcutting. To avoid clutter, we use the notation $S_{<i}=S_0\sqcup\cdots\sqcup S_{i-1}$. We now formally state the invariants we wish to maintain on the Steiner vertices:
	
	\begin{enumerate}
		\item[{\crtcrossreflabel{(I2)}[item:steiner-invariant-1]}] There are no edges between Steiner vertices in the same level. In other words, there are no edges $(u,v)$ for $u,v\in S_i$, $i\neq 0$.
		\item[{\crtcrossreflabel{(I3)}[item:steiner-invariant-2]}] If a Steiner vertex $u\in S_i$ has real incident edges $(w,u)$ and $(u,v)$ for $v,w\in S_{<i}$, then there is a path $P$ with exactly one negative edge containing only vertices in $S_{<i}$ from $w$ to $v$ of weight $\omega(w,u)+\omega(u,v)$. Furthermore, letting $(u',v)$ denote the final edge in $P$, we have $\omega(u',v)\ge \omega(u,v)$. In particular, this implies that the weight of the subpath of $P$ between $w$ and $u'$ is at most $\omega(w,u)$. Symmetrically, letting $(w,u')$ denote the first edge in $P$, we have $\omega(w,u')\ge \omega(w,u)$, which implies that the weight of the subpath of $P$ between $u'$ and $v$ is at most $\omega(u,v)$.
		\item [{\crtcrossreflabel{(I4)}[item:steiner-invariant-3]}] For any (real or auxiliary) negative edge $(r,r')$, both $r$ and $r'$ are in level 0.
	\end{enumerate}

	In our iterative shortcutting algorithm, we need to reweight the graph, after which the invariants may not hold anymore. We show the following lemma in \Cref{sec:maintain-invariant}, asserting that after any reweighting, the shortcut edges can be modified so that the invariants are maintained and the shortcutting progress is not affected.
	
	\begin{lemma}[Informal version of \Cref{lem:maintain-invariant-detailed}]\label{lem:maintain-invariant}
		Let $G=(V,E)$ be a directed graph with no Steiner vertices.
		We can maintain data structure $\textsc{MaintainInvariant}(G)$ with the following operations:
		\begin{enumerate}
			\item $\textsc{Insert}(\vec{\Delta})$: Create a new set of Steiner vertices $S_{t}$ defined by $\vec{\Delta}$. The data structure will add and maintain shortcut edges to and from $S_{t}$ required by the shortcutting algorithm. 
			
			\item $\textsc{Reweight}(\phi)$: Reweight the graph via a valid potential function $\phi$. The data structure will maintain Steiner edges so that they still satisfy invariants \
			\ref{item:steiner-invariant-1}--\ref{item:steiner-invariant-3}.
		\end{enumerate}
	\end{lemma}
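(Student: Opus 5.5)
The plan is to implement $\textsc{MaintainInvariant}(G)$ by storing the shortcut edges \emph{implicitly}, relative to the current potential, and recomputing the explicit edges lazily. Concretely, for each Steiner vertex $\tilde r\in S_t$ created by $\textsc{Insert}(\vec\Delta)$ we record the threshold $\Delta_r$ together with the underlying one-hop certificates from which each incident edge was built. For an out-edge $(\tilde r,w)$ this certificate is the pair $(v,w)$ with $v\in V_r^{\textup{out}}\cup\{r\}$. For an in-edge $(u,\tilde r)$ it is the pair $(u,v)$ with $v\in V_r^{\textup{in}}\cup\{r\}$. We also assign the Steiner vertex a potential of its own, $\phi(\tilde r)=\phi(r)$, so that a reweighting shifts every edge weight into or out of $\tilde r$ consistently with the distances $d^1(r,\cdot)$ and $d^0(\cdot,r)$ that define it. Then $w_\phi(\tilde r,w)=d^1_\phi(r,v)+\Delta_r+w_\phi(v,w)$ holds automatically, since $\phi(r)$ and $\phi(w)$ telescope. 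This means $\textsc{Reweight}(\phi)$ only needs to apply $\phi$ to all vertices, including the Steiner ones.

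The substance lies in checking the three invariants after each operation.
\begin{itemize}
\item \ref{item:steiner-invariant-1} holds at creation time, because all edges of $\tilde r\in S_t$ go to or from vertices in $S_{<t}$, and the edge set never changes under reweighting.
\item \ref{item:steiner-invariant-3} holds because shortcut edges of types 2 and 3 are only added when non-negative. Types 4 and 5 connect level-0 endpoints of negative edges, by \ref{item:steiner-invariant-3} and \ref{item:negative-edge-invariant-1} applied inductively.
\item For \ref{item:steiner-invariant-2}, consider $w\to\tilde r\to v$. The certificate gives a path $w\to x\rightsquigarrow r\to r'\rightsquigarrow y\to v$, with exactly one negative edge $(r,r')$, lying in $S_{<t}$. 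Its weight is $w(w,x)+d^0(x,r)+d^1(r,y)+w(y,v)=\omega(w,\tilde r)+\omega(\tilde r,v)$, since the two $\Delta_r$ terms cancel. The final edge is $(y,v)$, and $\omega(\tilde r,v)=d^1(r,y)+\Delta_r+w(y,v)\le w(y,v)$ because $y\in V^{\textup{out}}_r$ gives $d^1(r,y)<-\Delta_r$. The case $y=r'$ is handled symmetrically with $V^{\textup{in}}$.
\end{itemize}
All of these identities are invariant under potentials, because both sides shift by the same amount. The exception is the comparison $d^1(r,y)+\Delta_r<0$, which is not potential-invariant as written.

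That comparison is the main obstacle. After reweighting, $d^1_\phi(r,y)$ changes by $\phi(r)-\phi(y)$, so a stored edge might violate the inequality $\omega(u',v)\ge\omega(u,v)$ required by \ref{item:steiner-invariant-2}, or a previously dropped edge might need reinstating. My first attempt is to note that $\omega_\phi(\tilde r,v)\le\omega_\phi(y,v)$ is equivalent to $\phi(\tilde r)-\phi(y)\le -d^1(r,y)-\Delta_r$. Rather than tying $\phi(\tilde r)$ rigidly to $\phi(r)$, I would \emph{replace} the edge $(\tilde r,v)$ by the minimum over certificates, i.e.\ keep only the certificate $y$ achieving the smallest value of $\omega(\tilde r,v)$. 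Any Steiner edge that violates the inequality would then be deleted. Deletion is harmless, because the direct path through $y$ remains and is no longer shortcut only when it was not beneficial. Verifying that this deletion does not destroy the hop-reduction of \Cref{lem:shortcut} for shortest paths is the step I expect to require the careful argument in \Cref{lem:maintain-invariant-detailed}. The reason it should work is that a valid $\phi$ keeps non-negative edges non-negative, so a shortest path that used $\tilde r$ can be rerouted through the surviving certificate with no more negative hops.
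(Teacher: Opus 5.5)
Your argument breaks where you resolve the obstacle you identified. Deleting a Steiner edge $(\tilde r,v)$ whose certificate $y$ no longer satisfies $d^1(r,y)<-\Delta_r$ in the current weights is not harmless. After every reweighting, the data structure must still provide a \emph{non-negative} path from $\tilde r$ to $v$ of weight $d^1(r,y)+\omega(y,v)+\Delta_r$. This is the shortcut property in \Cref{lem:maintain-invariant-detailed}. It is needed because the hop reduction of \Cref{lem:shortcut-constant-factor} from earlier iterations must survive all later reweightings, or the $O(\log k)$ rounds of shortcutting do not compose. Once $(\tilde r,v)$ is deleted, you are in general left only with the certificate path $u\to x\leadsto r\to r'\leadsto y\to v$, which uses the negative edge $(r,r')$ again. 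So your claim that rerouting costs ``no more negative hops'' is false, and one-hop certificates give no alternative zero-hop route out of $\tilde r$.

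Your starting point has two further problems. First, forcing $\phi(\tilde r)=\phi(r)$ turns $\omega_\phi(\tilde r,w)$ into $d^1_\phi(r,w)+\Delta_r$, which can be negative, so it is not a valid reweighting. Second, the case $y=r'$ in your check of \ref{item:steiner-invariant-2} is not symmetric to the $V^{\textup{in}}$ case. It needs $\Delta_r\le-\omega(r,r')$, which holds at creation only because of clamping and can fail after reweighting unless something enforces it.

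The missing idea is to \emph{re-root} shortcut edges rather than delete them. The paper does this as follows:
\begin{itemize}
\item $\tilde r_t$ receives its potential from the valid $\phi$, and the paper tracks $\Delta_r(t)\leftarrow\Delta_r(t)+\phi(\tilde r_t)-\phi(r)$.
\item It stores full shortest-path trees $T_r^{\textup{out}}(t)$ and $T_r^{\textup{in}}(t)$ instead of one-hop certificates.
\item Whenever a tree vertex $p\notin\{r,r'\}$ gets $d_{T_r^{\textup{out}}(t)}(r,p)\ge-\Delta_r(t)$, it prunes the subtree of $p$. It then replaces the edges into that subtree by the single edge $(\tilde r_t,p)$ of weight $d_{T_r^{\textup{out}}(t)}(r,p)+\Delta_r(t)\ge 0$.
\end{itemize}
Then $\tilde r_t\to p\leadsto w$ is a non-negative path of the old weight, so shortcutting progress is kept. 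Also, every leaf's parent other than $r'$ has tree distance $<-\Delta_r(t)$, which gives the edge-weight inequality in \ref{item:steiner-invariant-2}.

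To complete the argument you would also need three further ingredients.
\begin{itemize}
\item The auxiliary edges of step \ref{item:shortcut-step-6} force $\Delta_r(t)\in[0,-\omega(r,r')]$ after every valid reweighting. This handles the parents $r'$ (out-tree) and $r$ (in-tree).
\item Levels must be processed in increasing order. Each removed lower-level Steiner edge inside a higher-level tree is replaced by the non-negative path $(\tilde r_t,p)\circ T_r^{\textup{out}}(t)[p,w]$.
\item \Cref{lem:unfold-to-original-graph} shows that the modified tree paths still unfold to $O(\tau^3)$-hop paths in $G(t)$ of weight below the original $-\Delta_r$. The rest of \ref{item:steiner-invariant-2}, and the edge-count bound, rely on this.
\end{itemize}
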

    We will assume that our input graphs come with this data structure initialized. For the original input graph, this is just an empty data structure. For the input graphs to recursive calls on a different graph, we will always initialize this data structure before the recursive call.

    In our proofs in \Cref{sec:unfolding}, we will need to simultaneously reason about graphs from different iterations. For this, we introduce some notation. Let $G(t)$ denote the graph after $t$ iterations of shortcutting, after applying the reweighting but before applying the data structure. Let $\phi_t$ denote the (betweenness reduction) reweighting which we apply at iteration $t$ and define $\phi_{(i,j]}(u)=\sum_{t\in(i,j]}\phi_t(u)$ to be the total reweighting applied after iteration $i$. Finally, let $i(u)$ for vertex $u$ denote the iteration such that $u\in S_{i(u)}$.
        
		
		

	\section{Shortcutting Algorithm}
	
	As outlined in the technical overview, our algorithm is an iterative shortcutting algorithm, which proceeds over $\Theta(\log{k})$ iterations. In each iteration, we add some nodes and edges to the graph in order to decrease the number of negative edges on shortest paths. Let $G_0=(V_0,E_0)$ be the original input graph, with $m_0=|E_0|$ and $n_0=|V_0|$.
    
	

    \begin{theorem}\label{thm:one-iteration-shortcutting}
		Let $G=(V,E)$ be a (directed, weighted) graph with $m$ edges and $k$ negative vertices satisfying invariants \ref{item:negative-edge-invariant-1}--\ref{item:steiner-invariant-3}. There is a randomized algorithm that outputs a supergraph $G'=(V',E')$ with $V'\supseteq V$ and $E'\supseteq E$, and a potential $\phi$ on $V$ such that
		\begin{enumerate}
			\item For any $s,t\in V$, we have $d_{G'}(s,t)=d_{G}(s,t)+\phi(s)-\phi(t)$,\label{item:one-iteration-shortcutting-1}
			\item For any $s,t\in V$ and hop bound $h$, we have $d_{G'}^{h-\lfloor h/3\rfloor}(s,t)\le d_{G}^h(s,t)+\phi(s)-\phi(t)$,
			\item $V'$ has at most $O(k)$ additional Steiner vertices, so $|V'|\le |V|+O(k)$, 
			\item $E'$ has at most $|E_0|^{1+o(1)}$ additional edges\footnote{In recursive instances, the size of the graph increases by a factor of $\operatorname{polylog}(n)$, so the size of the corresponding ``input graph'' to those recursive instance increase by the same factor. Since the recursion depth is at most $O(\sqrt{\log{n}})$ in our algorithm, this increase is by a factor of at most $n^{o(1)}$ in total, so the sizes of the input graph for the recursive instances and the original input graph are the same, up to sub-polynomial factors.}, so $|E'|\le |E|+|E_0|^{1+o(1)}$,\label{item:one-iteration-shortcutting-3}
			\item There are no new negative vertices.
		\end{enumerate}
		The algorithm uses one call to negative weight shortest paths on $\tilde{O}(|E|)$ edges and $O(k/2^{\sqrt{\log{n}}})$ negative vertices, and takes $m^{1+o(1)}$ additional time.
	\end{theorem}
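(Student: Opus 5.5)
The plan is to carry out one round of the \cite{li2025shortcutting} shortcutting, and to use the improved analysis of \Cref{sec:improved-betweenness-reduction} together with an unfolding argument to control the number of added edges.

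\textbf{Step 1: betweenness reduction.} Fix $b = 2^{\sqrt{\log n}}$ and a hop parameter $h=\operatorname{polylog}(n)$. Sample $U\subseteq N$ with probability $1/b$ per vertex. Build the auxiliary graph in which only the negative edges out of $U$ are kept negative; all other negative edges are raised to $0$ or removed, in the standard way. This graph has $O(k/b)=O(k/2^{\sqrt{\log n}})$ negative vertices whp. Make the one recursive call on it, on $\tilde O(|E|)$ edges, including the $E_{\textup{fake}}$ edges so that $\tilde d^h$ is respected. From the returned distances, build a valid potential $\phi$ that fixes $U$.

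Apply $\phi$ using $\textsc{Reweight}(\phi)$ from \Cref{lem:maintain-invariant}, so invariants \ref{item:steiner-invariant-1}--\ref{item:steiner-invariant-3} persist. Since $\phi$ is a potential, item~1 holds for distances inside $G$, and validity gives item~5. Because the shortcut edges added later are all realized by paths, distances in $G'$ equal those in $G_\phi$.

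\textbf{Step 2: searches and shortcuts.} For each $r\in N$, run the two-sided Dijkstra search of \Cref{lem:dijkstra-both-ways}, stopping when the in-volume and out-volume balance rather than the vertex counts. By construction, every $s\in V_r^{\textup{in}}$ and $t\in V_r^{\textup{out}}$ satisfy $d_\phi(s,r)+d_\phi(r,t)<0$. Apply \Cref{lem:betweenness-weighted} with $\lambda(v)=\deg(v)$. This bounds $\sum_r \min\{\mathrm{vol}(V_r^{\textup{in}}),\mathrm{vol}(V_r^{\textup{out}})\}$ by $\tilde O(|E|b)$, and the balancing turns this into a bound on the total search volume, which is also the running time.

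Next, call $\textsc{Insert}$ to create one Steiner vertex $\tilde r$ per $r$, which gives item~3, and add edges of types 2--5. \Cref{lem:shortcut}, applied to a shortest $h$-hop path in $G_\phi$, gives item~2.

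\textbf{Step 3: edge count, the main obstacle.} The bound in Step 2 is in terms of $|E|$, not $|E_0|$, which is not enough for item~4. To fix this, use the unfolding lemma. Each boundary edge $(u,v)$ with $u$ Steiner is unfolded via invariant \ref{item:steiner-invariant-2}, recursively down to level $0$. This produces an original edge $(\tilde u,\tilde v)$ that lies in the $O(\tau)$-hop version of the same search; the inequalities in \ref{item:steiner-invariant-2} ensure the prefix weight only decreases. Charge $(u,v)$ to $(\tilde u,\tilde v)$.

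Then apply \Cref{lem:betweenness-weighted} again, with $\lambda$ supported on original vertices and weighted by original degree, and with hop bound $h\ge O(\tau)\cdot\operatorname{polylog}$. This bounds the number of distinct charged original edges by $|E_0|\,b\,\operatorname{polylog}(n)=|E_0|^{1+o(1)}$.

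The delicate part is the multiplicity of charges. I expect to show that for a fixed $r$, each original edge is charged only $\operatorname{polylog}$ times, since it is charged at most once per level of the unfolding chain. I also expect to need that reweighting by $\phi_{(i,t]}$ does not break the unfolding inequalities, which is exactly what \textsc{MaintainInvariant} guarantees. Apart from this, the remaining overhead is $m^{1+o(1)}$ time for the searches and the data-structure updates.
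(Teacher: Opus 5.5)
Your overall plan matches the paper's: sample $U$, reduce betweenness, run balanced searches, add the \cite{li2025shortcutting} shortcut edges, and bound the new edges by unfolding boundary edges down to original edges. However, the step you flag as delicate is where the argument fails, and what is missing there is the core of the edge bound.

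The claim that, for fixed $r$, an original edge $(\tilde u,\tilde v)$ is charged only polylogarithmically often (``once per level'') is false. The problematic boundary edges are $(u,v)$ with $u$ original and $v$ a Steiner vertex, a case your unfolding step does not even treat. By \ref{item:shortcut-step-3}, an original $\tilde u$ essentially receives an out-edge to $\tilde r''$ for \emph{every} earlier negative vertex $r''$ whose backward search contains an out-neighbor $\tilde v$ of $\tilde u$. All these distinct Steiner vertices, many of them on the same level, unfold to the same original edge $(\tilde u,\tilde v)$. So the multiplicity is $\textup{count}^-(\tilde v)$, the number of earlier backward searches containing $\tilde v$. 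Betweenness reduction bounds only the \emph{total} search size, not how many searches contain a given vertex, so this count can be polynomially large. With $\lambda$ equal to original degree, nothing prevents a low-degree $\tilde u$ carrying many Steiner neighbors from lying in many forward searches. You therefore get no $|E_0|^{1+o(1)}$ bound this way.

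The paper instead builds the multiplicity into the vertex weight. For $v\in S_0$, $\lambda(v)$ includes $\sum_{w\in N^{\textup{out}}(v)\cap S_0}\textup{count}^-(w)+\sum_{u\in N^{\textup{in}}(v)\cap S_0}\textup{count}^+(u)$. The weighted search bound then directly controls $\sum_r\sum_{\tilde u\in\tilde V_r^{\textup{out}}\cap S_0}\sum_{\tilde v\in N^{\textup{out}}(\tilde u)\cap S_0}\textup{count}^-(\tilde v)$.

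Because this weight grows with the counts, the paper also needs the inductive argument in \Cref{obs:three-bounds-on-search-sizes}. Taking the coefficient of $\deg_{G[S_0]}(v)$ to be $h\log^{C+1}n\cdot b$ makes each iteration's count increment at most a $1/\log n$ fraction of the current total weight. Hence $\Lambda$ stays $\tilde O(m_0hb+nhb^2)$ over $O(\log n)$ iterations, with the growth of $\deg_{G[S_0]}$ from \ref{item:shortcut-step-4}--\ref{item:shortcut-step-5} edges accounted for separately.

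Two further points in your plan also need repair.
\begin{itemize}
\item \Cref{lem:betweenness-weighted} bounds only $\sum_r\min\{\lambda(V_r^{\textup{in}}),\lambda(V_r^{\textup{out}})\}$. So you cannot balance 1-hop searches with respect to current degree and then bound a different weight of the $h$-hop sets by a second application. The searches must be balanced as $h$-hop searches (\Cref{lem:multi-hop-forward-backward-search-weighted}) with respect to the same weight you later need to bound, with the 1-hop sets obtained from them by clamping $\Delta_r$.
\item If, as written, the non-$U$ negative edges are zeroed or removed, the resulting potential fixes $U$ only for $0$/$1$-hop distances. It does not fix $U$ for the $h$-hop distances through other real and auxiliary negative edges that your unfolded certificates require. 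This is why the paper uses the layered graph of \Cref{lem:betweenness-reduction-weighted}, which also preserves the invariants needed for the recursive call.
\end{itemize}
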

	
	Before we prove \Cref{thm:one-iteration-shortcutting}, we first show how to apply it to prove \Cref{thm:main}.
	
	\begin{proof}[Proof of \Cref{thm:main}]
		If $k\le 2^{\sqrt{\log{n}}}$ at any point, then naively computing $k$-hop shortest paths suffices since the running time $\tilde O(mk)=m^{1+o(1)}$ meets the desired bound. Otherwise, each time we apply \Cref{thm:one-iteration-shortcutting}, the number of negative hops along shortest paths drops by a constant factor. By applying the shortcutting procedure $O(\log{k})\le O(\log{n})$ times, we guarantee that all shortest paths can be replaced with a path that has at most 2 negative edges. On the resulting graph, we can run $2$-hop shortest paths to compute the desired single-source distances in the final graph in $m^{1+o(1)}$ time. Using the reweightings $\phi$ and property~(\ref{item:one-iteration-shortcutting-1}), we can then recover the shortest paths in the original graph.
		
		By property~(\ref{item:one-iteration-shortcutting-3}), each iteration increases the number of edges by $|E_0|^{1+o(1)}$, so all graphs have at most $m+|E_0|^{1+o(1)}$ edges. Thus, the runtime is dominated by $O(\log{n})$ recursive calls to single-source shortest paths, and $m^{1+o(1)}$ additional time, giving the following recursion:
		\begin{align*}
			T(m,k)\le O(\log{n})\cdot T\left(m+|E_0|^{1+o(1)},k/2^{\sqrt{\log{n}}}\right)+m^{1+o(1)}.
		\end{align*}
		Applying this on the original graph $G_0$, there are at most $\sqrt{\log{n}}$ levels of recursion since the number of negative vertices drops by $2^{\sqrt{\log{n}}}$ in each recursive call. On recursion level $i\ge 1$, there are $O(\log n)^i$ recursive instances, each with at most $|E_0|^{1+o(1)}$ edges. Since $i\le \sqrt{\log{n}}$, the total size of all instances is at most $|E_0|^{1+o(1)}$ so the total runtime is still $|E_0|^{1+o(1)}$.
	\end{proof}

	We now give our algorithm for \Cref{thm:one-iteration-shortcutting}. The high level approach is the same as in \cite{li2025shortcutting}. First, we apply a reweighting to the graph to reduce the betweenness for every pair of vertices. In the reweighted graph, we compute forward and backward searches from each negative vertex such that every vertex in the backward search can negatively reach every vertex in the forward search using few hops. Finally, we add shortcut edges between the backward and forward searches for each negative vertex to decrease the number of hops needed in a shortest path by a constant factor. To implement this strategy more efficiently than \cite{li2025shortcutting}, we need to strengthen each of the steps.

	\subsection{Betweenness Reduction}\label{sec:recursive-betweenness}
	
	For betweenness reduction, we use the multi-hop betweenness reduction algorithm in \cite{li2025shortcutting}, which constructs a layered graph, and reduces the problem to a shortest path problem on this layered graph. We show that this version of betweenness reduction has the following properties. First, it fixes a subset of negative vertices, so we can apply \Cref{lem:betweenness-weighted} to get an almost-linear bound on search sizes. In addition, all of our invariants on negative edges and Steiner vertices still hold in the layered graph, so the structure of Steiner vertices is well maintained in recursions.
	
	\begin{lemma}\label{lem:betweenness-reduction-weighted}
		There is an algorithm that, given a graph $G$ with vertex partition $S_0\sqcup S_1\sqcup\cdots\sqcup S_t$ that satisfies invariants~\ref{item:negative-edge-invariant-1}--\ref{item:steiner-invariant-3}, a subset $U\subset N$ of its negative vertices, and a hop bound $h$, either reports a negative cycle in $G$ or finds a valid reweighting $\phi$ such that after reweighting,
		\[\tilde d^h(s,r)+\tilde d^h(r,t)\ge 0 \quad \forall r\in U,\ \forall s,t\in V.\]
		The algorithm computes this reweighting by solving the single-source shortest path problem on an auxiliary graph $G'$ with vertex partition $S'_0\sqcup S'_1\sqcup\cdots\sqcup S'_t$, which satisfies invariants~\ref{item:negative-edge-invariant-1} and \ref{item:steiner-invariant-1}-\ref{item:steiner-invariant-3}. The auxiliary graph $G'$ contains $O(h)$ copies of the original graph $G$, along with the data structure for maintaining invariants in \Cref{lem:maintain-invariant}. All negative edges in these copies are treated as auxiliary edges. In addition, the graph has $O(|V(G)|)$ extra vertices, $O(h|E(G)|)$ extra edges, and $|U|$ real negative edges. 
	\end{lemma}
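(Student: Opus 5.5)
We build the layered graph $G'$ from \cite{li2025shortcutting} and read the potential off its shortest-path distances. Take $2h+2$ copies $G^{(0)},\dots,G^{(2h+1)}$ of $G$, each keeping its $S_0\sqcup\cdots\sqcup S_t$ partition, so copy $j$ of a vertex in $S_i$ goes into $S'_i$. Inside each copy keep all non-negative (real and auxiliary) edges. Each negative edge $(u,v)$ of $G$ becomes an inter-layer edge from $u^{(j)}$ to $v^{(j+1)}$ within the first half ($j<h$) and within the second half ($h+1\le j<2h+1$); these are labelled auxiliary. For each $r\in U$ with its unique negative edge $(r,r')$, add one real negative edge from $r^{(h)}$ to $r'^{(h+1)}$ of weight $w(r,r')$. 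This is the crossing from the ``into $r$'' half to the ``out of $r$'' half. Add a super-source $x$ (placed in $S'_0$) with zero-weight edges to every $v^{(0)}$, and zero-weight ``copy'' edges $v^{(j)}\to v^{(j+1)}$ so that paths may use fewer than $h$ hops in either half. Set $\phi(v)=d_{G'}(x,v^{(2h+1)})$. If $G'$ has a negative cycle, its projection onto $G$ is a closed walk of negative weight, and we report it.

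\textbf{Validity and the fixing property.} The key point is that $v^{(j)}$ reaches $v^{(2h+1)}$ by zero-weight copy edges, so $\phi(v)\le d_{G'}(x,v^{(j)})$ for every layer $j$. Edges of $G$ that are non-negative appear in layer $2h+1$, so the triangle inequality there gives $w_\phi\ge 0$ on them. Hence $\phi$ is valid.

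For the fixing property, fix $r\in U$ and $s,t\in V$. A $\tilde d^h(s,r)$ path followed by a $\tilde d^h(r,t)$ path lifts to a path in $G'$. It starts at $s^{(j)}$ for some $j\le h$, uses at most $h$ hops through the first half to reach $r^{(h)}$, crosses the real edge $(r,r')$, and uses the remaining hops through the second half to end at $t^{(2h+1)}$. Some care is needed at $r$: since $r$'s only out-edge is $(r,r')$ by \ref{item:negative-edge-invariant-1}, the second walk starts with that edge. So the crossing edge should be counted as a hop of the second half, and the index ranges adjusted by one. Because $s^{(j)}$ is reachable from $x$ with weight at most $\phi(s)$, we get
\[\phi(t)\le \phi(s)+\tilde d^h(s,r)+\tilde d^h(r,t).\]
Rearranged, this is exactly $\tilde d^h_\phi(s,r)+\tilde d^h_\phi(r,t)\ge 0$.

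\textbf{Invariants and size.} Each copy of $G$ inherits \ref{item:negative-edge-invariant-1}--\ref{item:steiner-invariant-3}, since we only rename vertices. Copy edges join copies of the same vertex $v$ in adjacent layers. If $v\in S_i$ with $i\ge1$, such an edge would join two vertices of $S'_i$ and violate \ref{item:steiner-invariant-1}. So we only add copy edges for level-$0$ vertices. A path that pauses at a Steiner vertex can pause at an adjacent level-$0$ vertex instead, since Steiner vertices sit between original ones, or we simply never need to pause there. Invariant \ref{item:steiner-invariant-3} then holds because the edges through a Steiner copy are exactly the corresponding edges of $G$, with endpoints in the same layer or shifted identically. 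Invariant \ref{item:steiner-invariant-3} (the level-0 condition on negative edges) holds because every negative edge of $G'$ comes from a negative edge of $G$. For \ref{item:negative-edge-invariant-1}, each $r^{(h)}$ with $r\in U$ has the real negative edge as its only out-edge, since we route $r^{(h)}$ only through it. The extra vertices are $x$ plus $O(|V|)$ bookkeeping vertices, and the extra edges number $O(h|E|)$. Finally, we initialize a fresh \textsc{MaintainInvariant} structure on $G'$.

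I expect the main obstacle to be the exact accounting at the copy/layer boundaries. The lifted paths must keep the hop count at most $h$ per side, which is sensitive to off-by-one choices. At the same time, no edge between same-level Steiner vertices may be introduced, and uniqueness of negative in- and out-edges must hold at the layer junctions.
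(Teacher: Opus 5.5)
There is a genuine gap in the fixing step. You correctly show $\phi(v)\le d_{G'}(x,v^{(j)})$ for every layer $j$. The fixing argument, however, uses the reverse inequality: that the lifted path can start at a copy $s^{(j)}$ with $d_{G'}(x,s^{(j)})\le\phi(s)$. In your $G'$ every edge goes from a layer to the same or a later layer. So $d_{G'}(x,s^{(0)})=0$, while $\phi(s)=d_{G'}(x,s^{(2h+1)})$ can be very negative, and nothing leads from layer $2h+1$ back into the first half. The lifting only gives $\phi(t)\le\tilde d^h(s,r)+\tilde d^h(r,t)$, which does not control $\phi(t)-\phi(s)$.

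Concretely, take $h=1$, $U=\{r\}$, and the path $a\to a'\to b\to b'\to r\to r'\to t$. Its negative edges $(a,a'),(b,b'),(r,r')$ have weights $-10,-5,-1$, and all other edges have weight $0$. Your construction gives $\phi(b)=-10$ and $\phi(t)=-6$, so $\tilde d^1_\phi(b,r)+\tilde d^1_\phi(r,t)=-6+\phi(b)-\phi(t)=-10<0$. There is also a quicker symptom. Your $G'$ is acyclic across layers and non-negative inside each layer, so it never has a negative cycle. But if $G$ has a negative cycle through some $r\in U$ with at most $h$ negative edges, then $\tilde d^h(r,r)<0$, no potential can satisfy the conclusion, and the lemma's dichotomy cannot be met.

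The missing idea is that $\phi$ must be a self-consistent fixed point, which is why the paper's auxiliary graph is \emph{cyclic}: $G_0\to G_1^{\textup{forward}}\to\cdots\to G_h^{\textup{forward}}\to G_h^{\textup{backward}}\to\cdots\to G_1^{\textup{backward}}\to G_0$, with the potential read off $G_0$. The lifted path then runs from $s_0$ to $t_0$, so validity of $\phi$ on all of $G'$ directly gives $\phi(t_0)-\phi(s_0)\le\tilde d^h(s,r)+\tilde d^h(r,t)$. A negative cycle of $G'$ then projects to a negative closed walk of $G$.

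Once the layers close up, the transition edges can no longer carry weight $\omega(u,v)<0$. As real edges they would give the recursive instance $\Theta(h|N|)$ negative edges rather than $|U|$. As auxiliary edges (your labelling) they are ignored by the recursive SSSP solver, so your lifted paths would not exist there. The paper therefore uses these weights:
\begin{itemize}
\item every transition edge gets weight $M+\omega(u,v)\ge0$;
\item every self edge on $S_0$ gets weight $M$;
\item the single crossing edge $r^{\textup{tail}}\to r^{\textup{head}}$ gets weight $-2hM$. It sits on two new level-$0$ vertices, which also makes invariant (I1) immediate.
\end{itemize}
Then each trip once around the layers has exactly the weight of the corresponding walk in $G$.
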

	
	\begin{proof}
		The auxiliary graph $G'$ is constructed as follows. Let $G^+$ denote the graph $G$ where all negative edges are considered as auxiliary. We start with $2h+1$ copies of $G^+$, denoted as $G_0$, $G_1^{\textup{forward}},\ldots,G_h^{\textup{forward}}$, and $G_1^{\textup{backward}},\ldots,G_h^{\textup{backward}}$. We use the name of the vertex with the subscript of a graph copy, such as $v_1^{\textup{forward}}$, to refer the vertex in that copy. For each vertex $v\in S_i$, all copies of $v$ belong to $S'_i$. That is, the copies of original vertices are also original in $G'$, and the copies of Steiner vertices retain their level in $G'$.

		{Since we maintain some of the shortcut edges in $G$ by the data structure in \Cref{lem:maintain-invariant}, we initialize such a data structure for $G'$ to maintain the shortcut edges in each copy of $G^+$ in further recursions, as follows. For each copy of $G^+$ in $G'$, we make an independent copy of the data structure on $G$ that works on this copy of $G^+$, which can function normally in $G'$ since we keep all negative edges as auxiliary edges. These together initialize the data structure for $G'$.}
		
		Next, we add edges between these copies as follows; see Figure~\ref{fig:betweenness-graph} for a visual reference. Let $M\ge0$ be a value larger than the absolute value of any (real or auxiliary) negative edge weight. For every real negative or auxiliary negative edge $(u,v)$ in $G$, we add the following edges of weight $M+\omega(u,v)\ge0$:
		\begin{itemize}
			\item An edge from $u_0$ to $v_1^{\textup{forward}}$.
			\item An edge from $u_i^{\textup{forward}}$ to $v_{i+1}^{\textup{forward}}$ for every $i\in [h-1]$.
			\item An edge from $u_{i+1}^{\textup{backward}}$ to $v_i^{\textup{backward}}$ for every $i\in [h-1]$.
			\item An edge from $u_1^{\textup{backward}}$ to $v_0$.
		\end{itemize}
		Next, for every original vertex $v\in S_0$, we add the following ``self edges'' of weight $M$:
		\begin{itemize}
			\item An edge from $v_0$ to $v_1^{\textup{forward}}$.
			\item An edge from $v_i^{\textup{forward}}$ to $v_{i+1}^{\textup{forward}}$ for every $i\in [h-1]$.
			\item An edge from $v_{i+1}^{\textup{backward}}$ to $v_i^{\textup{backward}}$ for every $i\in [h-1]$.
			\item An edge from $v_1^{\textup{backward}}$ to $v_0$.
		\end{itemize}
		Finally, for every vertex $v\in U$, we add two vertices $v^{\textup{tail}}$ and $v^{\textup{head}}$, which are also considered original vertices in $G'$. We add zero-weight edges from $v_h^{\textup{forward}}$ to $v^{\textup{tail}}$, and from $v^{\textup{head}}$ to $v_h^{\textup{backward}}$. Moreover, we add a real negative edge of weight $-2hM$ from $v^{\textup{tail}}$ to $v^{\textup{head}}$. These are the only negative edges in $G'$. This completes the construction of $G'$. From this construction, $G'$ clearly satisfies invariant \ref{item:negative-edge-invariant-1} since the only edges adjacent to $v^{\textup{tail}}$ and $v^{\textup{head}}$ are the three types of edges above.
		
		\begin{figure}
			\centering
			\includegraphics{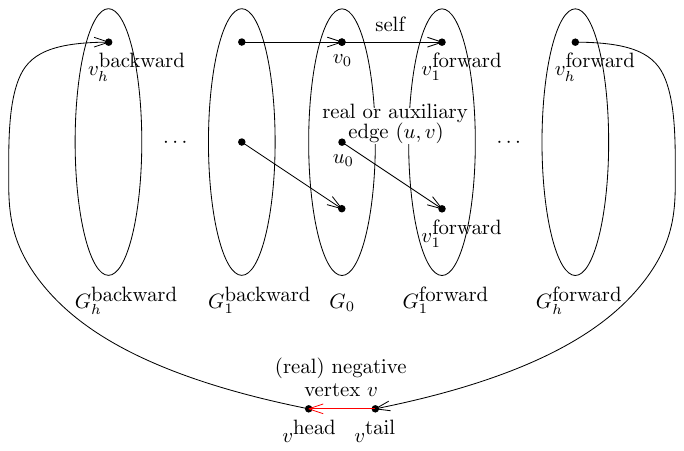}
			\caption{The auxiliary graph $G'$. Only the red edges $(v^{\textup{tail}},v^{\textup{head}})$ can be negative.}
			\label{fig:betweenness-graph}
		\end{figure}
		
		The algorithm computes single-source shortest path on graph $G'$. If no negative cycle is found, then let $\phi$ be the resulting distance function, which neutralizes the real negative edges in $G'$. The algorithm then sets $\phi(v_0)$ as the potential for $v$ in $G$, and returns this potential.
		
		We now prove the correctness of this algorithm. First, suppose that the algorithm finds a potential $\phi$. Since $G_0$ is a copy of the non-negative edges of $G$, taking $\phi(v_0)$ as $v$'s potential gives a valid potential in $G$. To show that this potential gives $\tilde d^h(s,r)+\tilde d^h(r,t)\ge 0$ for every $r\in U$ and $s,t\in V(G)$ after reweighting, we prove the following.
		\begin{subclaim}\label{subcl:layered-graph-dist-bw}
			For a valid potential $\phi$ neutralizing all negative edges in $G'$ and any $s,t\in V(G)$, we have
			\[\phi(t_0)-\phi(s_0)\le \min_{r\in U}\tilde d_G^h(s,r)+\tilde d_G^h(r,t).\]
		\end{subclaim}
		\begin{subproof}
			We claim that it suffices to find a path from $s_0$ to $t_0$ with total weight $\displaystyle\min_{r\in U}\tilde d_G^h(s,r)+\tilde d_G^h(r,t)$ in $G'$. To see this claim, observe that the weight of this path is $\displaystyle\min_{r\in U}\tilde d_G^h(s,r)+\tilde d_G^h(r,t)+\phi(s_0)-\phi(t_0)$ after reweighting by $\phi$. This new weight is also non-negative since all edges are non-negative after reweighting, so we conclude that $\displaystyle\min_{r\in U}\tilde d_G^h(s,r)+\tilde d_G^h(r,t)+\phi(s_0)-\phi(t_0)\ge0$. For the rest of the proof, we establish this path from $s_0$ to $t_0$ in $G'$.
			
			Let $r\in U$ be the vertex minimizing $\tilde d_G^h(s,r)+\tilde d_G^h(r,t)$. By definition of $\tilde d_G^h(s,r)$, there is a path from $s$ to $r$ in $G$ with weight $\tilde d_G^h(s,r)$ that has $i\le h$ negative edges. We trace this path in $G'$ by starting at $s_0$ in $G_0$ and ending at $r_i^{\textup{forward}}$ in $G_i^{\textup{forward}}$, taking the edge of weight $M+\omega(u,v)$ to the next copy whenever a negative edge $(u,v)$ is encountered. Since $r$ is a negative vertex and thus an original vertex, we can append self edges from $r_i^{\textup{forward}}$ to get the path $s_0\leadsto r_i^{\textup{forward}}\leadsto r_h^{\textup{forward}}$, whose total weight is $\tilde d_G^h(s,r)+hM$. Similarly, there is a path of weight $\tilde d_G^h(r,t)+hM$ from $r_h^{\textup{backward}}$ to $t_0$. By combining these two paths with negative edge $(r^{\textup{tail}}, r^{\textup{head}})$, we get a path from $s_0$ to $t_0$ with total weight $\tilde d_G^h(s,r)+\tilde d_G^h(r,t)$, as promised.
		\end{subproof}
		
		
		Second, suppose that the algorithm detects a negative cycle in $G'$. We claim that $G$ also contains a negative cycle.
		\begin{subclaim}
			For a negative cycle in $G'$, we can construct a closed walk in $G$ with the same total weight.
		\end{subclaim}
		\begin{subproof}
			Since edges inside a copy are non-negative and edges between copies form the \[G_0\rightarrow G_1^{\textup{forward}}\rightarrow\cdots\rightarrow G_h^{\textup{forward}}\rightarrow G_h^{\textup{backward}}\rightarrow\cdots\rightarrow G_1^{\textup{backward}}\rightarrow G_0\] cycle structure, any negative cycle must visit $G_0$ at least once. We split the cycle into subpaths that start and end at $G_0$, and which cycle through the copies exactly once. In particular, each subpath uses exactly one negative edge $(r^{\textup{tail}},r^{\textup{head}})$ when going from $G_h^{\textup{forward}}$ to $G_h^{\textup{backward}}$.
			We now show that each subpath corresponds to a walk in $G$ with the same weight. This implies that the whole negative cycle corresponds to a closed walk in $G$ with the same weight, as promised. Assume that the negative edge used in this path is $(r^{\textup{tail}},r^{\textup{head}})$, and the path is of the form $s_0\leadsto r_h^{\textup{forward}}\to r^{\textup{tail}}\to r^{\textup{head}}\to r_h^{\textup{backward}}\leadsto t_0$. By mapping each vertex $v_i^{\textup{forward}}$ to the original vertex $v$, the subpath $s_0\leadsto r_h^{\textup{forward}}$ in $G'$ corresponds to a walk $s\leadsto r$ in $G$ whose weight is lower by exactly $hM$. Similarly, the subpath $r_h^{\textup{backward}}\leadsto t_0$ corresponds to a walk $r\leadsto t$ in $G$ whose weight is lower by exactly $hM$. Finally, the negative edge $(r^{\textup{tail}},r^{\textup{head}})$ in $G'$ has weight $-2hM$, so the total weight of the subpath $s_0\leadsto t_0$ in $G'$ and the walk $s\leadsto t$ in $G$ are equal.
		\end{subproof}
		
		Let $n=|V(G)|$. We now prove the size guarantees of $G'$. We make $2h+1$ copies of the original graph, with all negative edges treated as auxiliary. Since $G$ satisfies invariant \ref{item:negative-edge-invariant-1}, it contains at most $n$ (real or auxiliary) negative edges, so the number of additional vertices and edges between layers is $O(nh)$. The only real negative edges are $(r^{\textup{tail}},r^{\textup{head}})$ for each $r\in U$, so there are $|U|$ many negative edges.

		Finally, we prove invariants \ref{item:steiner-invariant-1}-\ref{item:steiner-invariant-3} hold in $G'$ assuming that they hold for $G$.
		Since we make full copies of the graph $G$ along with the data structure on $G$, these invariants hold within each copy. For edges between different copies, the observation is that they only connect between original copies.
		
		\begin{subclaim}
			Suppose that an edge in $G'$ is not contained in a copy of the original graph, then both terminals of this edge are either original vertices on a copy of the original graph, or are vertices $r^{\textup{tail}},r^{\textup{head}}$ for some $r\in U$, which are also original vertices.
		\end{subclaim}
		
		\begin{subproof}
			We examine all edges between different layers in $G'$. First, for every negative edge $(u,v)$ in $G$, we add edges from a copy of $u$ to a copy of $v$. Since we assume that $G$ satisfies invariant \ref{item:steiner-invariant-3}, these vertices are original vertices.
			Second, for every original vertex $u$, we add self edges between copies of $u$, which are naturally between original vertices.
			Finally, for every $r\in U$, we add a path $r_h^{\textup{forward}}\to r^{\textup{tail}}\to r^{\textup{head}}\to r_h^{\textup{backward}}$. Since $G$ satisfies invariant \ref{item:steiner-invariant-3}, all negative vertices are original vertices, so these edges are still between original vertices and $r^{\textup{tail}},r^{\textup{head}}$.
		\end{subproof}
		
		Since edges between copies are all between original vertices, invariant \ref{item:steiner-invariant-1}-\ref{item:steiner-invariant-3} holds for them since these invariants are only about Steiner vertices.
    \end{proof}

	\subsection{Forward and Backward Searches}\label{sec:forward-backward-search}
	
	We now provide a weighted version of the algorithm in \cite{li2025shortcutting} that computes forward and backward searches from each vertex in $N$.
	To illustrate the idea, we first prove a $1$-hop version of the result, and then give the full $h$-hop version by reducing the problem to the $1$-hop version.
	
	\begin{lemma}\label{lem:forward-backward-search-weighted}
		Consider a graph $G$ that satisfies invariant \ref{item:negative-edge-invariant-1}, with non-negative weight $\lambda(v)$ on every vertex $v$. Let $\Lambda=\sum_{v\in V}\lambda(v)$ denote the total weight. There is an algorithm that, for every negative vertex $r$, computes $\Delta_r$ with the following:
		\begin{itemize}
			\item A set $V_r^{\textup{in}}=\{x\in V:d^0(x,r)<\Delta_r\}$ along with the distance $d^0(x,r)$ for every $x$ in $V_r^{\textup{in}}$.
			\item A set $V_r^{\textup{out}}=\{x\in V:d^1(r,x)<-\Delta_r\}$ along with the distance $d^1(r,x)$ for every $x$ in $V_r^{\textup{out}}$.
		\end{itemize}
		Moreover, the algorithm also has the following guarantees. Let $U\subseteq N$ be a random sample where each vertex in $N$ is sampled with probability $1/b$. With high probability over the randomness of $U$, the following holds. For any valid potential function $\phi$ such that 
		\[d^0_{\phi}(s,r)+d^1_{\phi}(r,t)\ge 0\quad\forall r\in U,\ \forall s,t\in V,\] 
		the algorithm on the reweighted graph $G_\phi$ satisfies
		\[ \sum_{r\in N}\Big(\sum_{v\in V_r^{\textup{in}}}\lambda(v)+\sum_{v\in V_r^{\textup{out}}}\lambda(v)\Big)\le\tilde{O}(\Lambda b) ,\]
		and the algorithm runs in $\tilde O(mb)$ time.
	\end{lemma}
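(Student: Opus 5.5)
We prove \Cref{lem:forward-backward-search-weighted} by a weighted, balanced version of the simultaneous Dijkstra searches of \Cref{lem:dijkstra-both-ways}, then charge the total weight explored to \Cref{lem:betweenness-weighted} with $h=1$.

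\textbf{The searches.} For each negative vertex $r$, run two Dijkstra searches in parallel:
\begin{itemize}
\item a backward Dijkstra from $r$ over non-negative edges, which computes $d^0(x,r)$ in increasing order;
\item a forward search from $r$ that computes $d^1(r,x)$ in increasing order. By invariant \ref{item:negative-edge-invariant-1}, $r$'s only outgoing edge is its negative edge $(r,r')$, so this search is the one-hop Dijkstra--Bellman-Ford hybrid: Dijkstra from $r'$ with initial key $w(r,r')$. Forward negative edges met later are not relaxed, since we only need one hop.
\end{itemize}
Interleave the two searches so that the settled vertex weights stay balanced, always advancing the side with smaller accumulated $\lambda$-weight. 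Stop at the first moment the next backward key $\beta$ and next forward key $\alpha$ satisfy $\beta+\alpha\ge 0$. Choose $\Delta_r$ in the gap, e.g.\ with $\Delta_r$ between the last settled backward key and $\beta$, and with $-\Delta_r$ between the last settled forward key and $\alpha$. Perturb weights or break ties so that the inequalities are strict.

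\textbf{Correctness.} Then $V_r^{\textup{in}}$ is exactly the set of settled backward vertices and $V_r^{\textup{out}}$ the set of settled forward vertices, with their exact distances known. Before stopping, every settled pair $x\in V_r^{\textup{in}}$, $y\in V_r^{\textup{out}}$ satisfies $d^0(x,r)+d^1(r,y)<0$, which holds by monotonicity of the keys. One subtlety is that a single heavy vertex can unbalance the two sides. To handle this, use the rule "advance the lighter side," which guarantees $\min\{\lambda(V_r^{\textup{in}}),\lambda(V_r^{\textup{out}})\}$ is at least the larger side's weight minus one vertex's weight. If needed, replace $\lambda$ by $\lambda'(v)=\lfloor \lambda(v)n/\Lambda\rfloor+1$ as in \Cref{lem:betweenness-weighted}, and charge the single excess vertex, whose weight is at most $\Lambda$ times one unit, separately.

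\textbf{Weight bound.} We are given that $\phi$ satisfies $d^0_\phi(s,r)+d^1_\phi(r,t)\ge 0$ for $r\in U$. Since $d^h\le d^1$ and $d^h\le d^0$, the hypothesis of \Cref{lem:betweenness-weighted} is not literally this condition. So we reprove the needed instance: \Cref{thm:betweenness} and its proof go through verbatim with $d^h(s,r)$ replaced by $d^0$ and $d^h(r,t)$ replaced by $d^1$. Only the hyperplane-counting step of \Cref{clm:signing} and the fixing property are used, and both are insensitive to which distance notion is used. Applying this to the sets produced, whose pairs are all negative, gives
\[ \sum_r \min\{\lambda(V_r^{\textup{in}}),\lambda(V_r^{\textup{out}})\}\le \tilde O(\Lambda b). \]
Balancing turns the $\min$ into the sum, up to a factor $2$ plus the one-vertex overshoot per $r$, which sums to at most $\sum_r\max_v\lambda(v)$. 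This last term is the \textbf{main obstacle}. I would handle it by noting that the overshoot vertex $v$ together with the other side's settled vertices still forms negative pairs. Alternatively, I would cap the final step by stopping before adding a vertex whose weight would exceed the other side, so the sets remain exact threshold sets only if $\Delta_r$ is chosen accordingly; this needs care with the strict threshold characterization.

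\textbf{Running time.} Each search costs $\tilde O$ of the volume it explores. To obtain $\tilde O(mb)$, take $\lambda(v)=\deg(v)$ in the analysis, so $\Lambda=O(m)$. The same bound then gives total explored volume $\tilde O(mb)$, using the balancing invariant in terms of volume. If the balancing must simultaneously control volume and $\lambda$, balance on $\lambda+\deg$, which is also a valid weight, so both bounds hold at once.
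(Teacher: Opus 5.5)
Your search procedure and stopping rule are fine, and so is your remark that the proof of \Cref{thm:betweenness} goes through verbatim with $d^0_\phi(s,r)+d^1_\phi(r,t)$ in place of $d^h_\phi(s,r)+d^h_\phi(r,t)$; the paper applies \Cref{lem:betweenness-weighted} here without addressing that mismatch. The gap is the step you flagged as the main obstacle, and neither proposed fix closes it. Under the rule ``advance the side of smaller accumulated weight'', the one-vertex overshoot is not a lower-order term.

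Here is an example. Take gadgets $q_i\to s_i\to r_i\to r_i'$ with weights $0.4,0.5,-1$. Add zero-weight edges $(r_i',v)$ into one shared vertex $v$ with $\lambda(v)=\Lambda/2$, and an edge $(v,x)$ of weight $0.6$. Spread the remaining weight evenly over the other vertices. Sampled $r_i$ are fixed by the valid potential $\phi=1$ on $\{q_i,s_i,r_i\}$. For every unsampled $r_i$, however, your rule settles $v$ at a moment when the forward side is lighter, whatever the tie-breaking. It then stops, since the next forward key is $-0.4$ and the next backward key is $0.5$ or $0.9$. So the total is $\Omega(k\Lambda)$, while $\sum_r\min\{\cdot\}$ stays below $\Lambda$.

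Your fix (1) does not help. The vertex $v$ already lies in the final $V_r^{\textup{out}}$, and the sum-of-min bound is governed by the light side. Your fix (2) is impossible. While $\beta+\alpha<0$, no $\Delta_r$ excludes both next vertices, since that would need $-\alpha\le\Delta_r\le\beta$, so some side must advance.

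The missing idea has two parts. First, balance \emph{with lookahead}: advance the forward side iff $\lambda'(P^{\textup{out}}\cup\{u^{\textup{out}}\})\le\lambda'(P^{\textup{in}}\cup\{u^{\textup{in}}\})$, where $u^{\textup{out}},u^{\textup{in}}$ are the next vertices to be settled. Second, apply the betweenness bound to intermediate threshold sets, not to the final output.

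Before any step we have $\beta+\alpha<0$. So for $\Delta'=\beta+\epsilon$, the sets $\{x:d^0(x,r)<\Delta'\}\supseteq P^{\textup{in}}\cup\{u^{\textup{in}}\}$ and $\{x:d^1(r,x)<-\Delta'\}\supseteq P^{\textup{out}}\cup\{u^{\textup{out}}\}$ form an all-negative pair. The side that advances is the lighter lookahead side, so its new weight is at most the minimum of these two threshold weights. The other side is unchanged and is bounded by induction. Thus at termination $\max\{\lambda'(P_r^{\textup{in}}),\lambda'(P_r^{\textup{out}})\}$ is at most the min for one such threshold pair per $r$. Summing over these pairs with your asymmetric version of \Cref{lem:betweenness-weighted} gives the bound.

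Separately, balancing on the unnormalized $\lambda+\deg$ only yields $\tilde O((\Lambda+m)b)$ for each part. Use $\lambda'=\lambda+\frac{\Lambda}{2m}\deg$ instead, whose total is $2\Lambda$. Then both the $\tilde O(\Lambda b)$ weight bound and the $\tilde O(mb)$ volume bound follow.
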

	\begin{proof}
		To bound the running time, we define $\lambda'(v)=\lambda(v)+\frac{\operatorname{deg}(v)}{2m}\cdot \Lambda$ and mainly use $\lambda'(v)$ in the algorithm. The extra term $\frac{\operatorname{deg}(v)}{2m}\cdot \Lambda$ is used to bound the running time of searches, which is related to the total volume of $V_r^{\textup{in}}$ and $V_r^{\textup{out}}$. In addition, $\sum_v \lambda'(v)=\Lambda+\sum_v \lambda(v)=2\Lambda$, so this change only increases the total weight by a constant factor.
		
		For each negative edge $(r,r')$, the algorithm runs two Dijkstra's algorithms in parallel, called the forward search and the backward search. The forward search runs on all non-negative edges of $G$, and starts from $r'$; the backward search runs on all non-negative edges but with edge directions reversed, and starts from $r$. For the forward search, Dijkstra's algorithm maintains a list $P^{\textup{out}}$ of processed vertices, and the minimum distance $d^{\textup{out}}$ to the next unprocessed vertex. Similarly, the algorithm maintains $P^{\textup{in}}$ and $d^{\textup{in}}$ for the backward search. In each step, the algorithm chooses one search and processes the next vertex in this search. The way of choosing the search is as follows:
		\begin{itemize}
			\item If $d^{\textup{in}}+d^{\textup{out}}\ge -\omega(r,r')$, set $\Delta_r=d^{\textup{in}}$ and immediately terminate. Let $V_r^{\textup{in}}$ be the set of vertices $x$ with $d^0(x,r)<\Delta_r$ in the backward search, which is a subset of $P^{\textup{in}}$ since the vertices outside $P^{\textup{in}}$ have distance at least $d^{\textup{in}}=\Delta_r$ to $r$. Similarly, let $V_r^{\textup{out}}$ be the set of vertices $x$ with $d^0(r',x)<-\Delta_r-\omega(r,r')$ in the forward search, which is a subset of $P^{\textup{out}}$ since the vertices outside $P^{\textup{out}}$ have distance at least $d^{\textup{out}}\ge-d^{\textup{in}}-\omega(r,r')=-\Delta_r-\omega(r,r')$ from $r'$. Therefore, the algorithm can find $V_r^{\textup{in}}$ and $V_r^{\textup{out}}$ with their distances by checking the processed vertices $P^{\textup{in}}$ and $P^{\textup{out}}$.
			\item Otherwise, compare the total weight of processed vertices, plus the next vertex to be processed, in both searches, and choose the smaller side to continue. More formally, let $u^{\textup{out}}$ denote the next vertex to be explored in the forward search, and $u^{\textup{in}}$ denote the next vertex to be explored in the backward search. The algorithm chooses the forward search if and only if
			\[\sum_{x\in P^{\textup{out}}\cup \{u^{\textup{out}}\}} \lambda'(x)\le \sum_{x\in P^{\textup{in}}\cup \{u^{\textup{in}}\}} \lambda'(x).\]
		\end{itemize}
		We first prove the weighted size bound on processed vertices in searches. To connect the total weight of both searches to the sum of the smaller side in \Cref{lem:betweenness-weighted}, we have the following observation.
		
		\begin{subclaim}
			After each step in the above algorithm, we can find a $\Delta_r'$ that satisfies the following: Let $(V_r^{\textup{out}})'=\{x\in V:d^1(r,x)<-\Delta_r'\}$, $(V_r^{\textup{in}})'=\{x\in V:d^0(x,r)<\Delta_r'\}$; the total weight of processed vertices $P^{\textup{out}}$ and $P^{\textup{in}}$ can be bounded by 
			\[\max\Big\{\sum_{x\in P^{\textup{out}}}\lambda'(x),\sum_{x\in P^{\textup{in}}}\lambda'(x)\Big\}\le \min\Big\{\sum_{x\in (V_r^{\textup{out}})'}\lambda'(x),\sum_{x\in (V_r^{\textup{in}})'}\lambda'(x)\Big\}.\]
		\end{subclaim}
		\begin{subproof}
			We induct on the steps of the algorithm. The base case of no vertices processed is trivial. Next, assuming that all previous steps satisfy the induction hypothesis, we consider the next step performed by the algorithm. Since the algorithm does not terminate, we have $d^{\textup{out}}+d^{\textup{in}}<-\omega(r,r')$. Let $\epsilon>0$ be small enough that $d^{\textup{out}}+d^{\textup{in}}+\epsilon<-\omega(r,r')$; we choose $\Delta_r'=d^{\textup{in}}+\epsilon$ for the statement in the claim. Since $\Delta_r'>d^{\textup{in}}$, we have $P^{\textup{in}}\cup\{u^{\textup{in}}\}\subset (V_r^{\textup{in}})'$ because the latter subset contains all vertices with distance less than $\Delta_r'$. Similarly, we claim that $P^{\textup{out}}\cup\{u^{\textup{out}}\}\subset (V_r^{\textup{out}})'$: since each vertex $x\in P^{\textup{out}}\cup\{u^{\textup{out}}\}$ satisfies $d^0(r',x)\le d^{\textup{out}}<-\Delta'_r-\omega(r,r')$, we have $d^1(r,x)\le\omega(r,r')+d^0(r',x)<-\Delta'_r$, so $x\in(V_r^{\textup{out}})'$. Therefore,
			\[\min\Big\{\sum_{x\in P^{\textup{out}}\cup \{u^{\textup{out}}\}}\lambda'(x),\sum_{x\in P^{\textup{in}}\cup \{u^{\textup{in}}\}}\lambda'(x)\Big\}\le \min\Big\{\sum_{x\in (V_r^{\textup{out}})'}\lambda'(x),\sum_{x\in (V_r^{\textup{in}})'}\lambda'(x)\Big\}.\]
			Now, we can establish the inductive statement after this step. Without loss of generality, suppose that $\sum_{x\in P^{\textup{out}}\cup \{u^{\textup{out}}\}} \lambda'(x)\le \sum_{x\in P^{\textup{in}}\cup \{u^{\textup{in}}\}} \lambda'(x)$ and the algorithm chooses the forward search. We prove this bound by considering two cases.
			\begin{itemize}
				\item Suppose that $\sum_{x\in P^{\textup{out}}\cup \{u^{\textup{out}}\}} \lambda'(x)\ge \sum_{x\in P^{\textup{in}}} \lambda'(x)$. Then, using the previous bound, we are already done.
				\item Otherwise, $\sum_{x\in P^{\textup{in}}} \lambda'(x)$ is the larger side. In this case, we can use the induction hypothesis on the previous step to bound $\sum_{x\in P^{\textup{in}}} \lambda'(x)$, using the value of $\Delta'_r$ from that previous step.
			\end{itemize}
			Therefore, in both cases, we can find a $\Delta_r'$ to complete the induction and the proof.
		\end{subproof}
		
		By applying \Cref{lem:betweenness-weighted} with $\lambda'(v)$ as the vertex weight, the sets $(V_r^{\textup{out}})'$ and $(V_r^{\textup{in}})'$ found in the above subclaim satisfy
		\[\sum_r\min\Big\{\sum_{x\in (V_r^{\textup{out}})'}\lambda'(x),\sum_{x\in (V_r^{\textup{in}})'}\lambda'(x)\Big\}\le \tilde O(\Lambda b).\]
		Therefore, the subclaim shows that
		\[\sum_r\Big(\sum_{x\in P^{\textup{out}}_r}\lambda'(x)+\sum_{x\in P^{\textup{in}}_r}\lambda'(x)\Big)\le \tilde O(\Lambda b),\]
		where $P^{\textup{out}}_r$ and $P^{\textup{in}}_r$ denote the processed vertices in the forward and backward search from negative edge $(r,r')$. Recall that $\lambda'(v)=\lambda(v)+\frac{\operatorname{deg}(v)}{2m}\cdot \Lambda$, so both the sum of $\lambda(v)$ and the sum of $\frac{\operatorname{deg}(v)}{2m}\cdot \Lambda$ are bounded by $\tilde O(\Lambda b)$. The former one implies the statement in \Cref{lem:forward-backward-search-weighted} because $V_r^{\textup{out}}$ is a subset of $P_r^{\textup{out}}$ and $V_r^{\textup{in}}$ is a subset of $P_r^{\textup{in}}$; the latter one implies that the total volume of all searches is $\tilde O(mb)$, so the total running time of the forward and backward searches is $\tilde O(mb)$.
	\end{proof}
	
	
	For the multi-hop case, we reduce it to the $1$-hop case above, using the idea of the layered graph in \cite{li2026bellmanfordalmostlineartimedense}. The main difference between the multi-hop case and the $1$-hop case is that we now force the searches $V_r^{\textup{in}}$ and $V_r^{\textup{out}}$ to include all intermediate vertices along an $h$-hop path of small enough weight, even if the vertices themselves have large $h$-hop distance.
	
	\begin{lemma}\label{lem:multi-hop-forward-backward-search-weighted}
		Consider a graph $G$ (with auxiliary edges $E_{\textup{fake}}$) that satisfies invariant \ref{item:negative-edge-invariant-1}, with non-negative weight $\lambda(v)$ on every vertex $v$, and a hop bound $h$. Let $\Lambda=\sum_{v\in V}\lambda(v)$ be the total weight. There is an algorithm that, for every negative vertex $r$, computes $\Delta_r$ with the following sets: 
		\begin{itemize}
			\item A set $\tilde V_r^{\textup{in}}$ that contains all vertices on any $h$-hop path (which may use auxiliary edges) that ends at $r$ and whose total weight is less than $\Delta_r$.
			\item A set $\tilde V_r^{\textup{out}}$ that contains all vertices on any $(h+1)$-hop path (which may use auxiliary edges) that begins at $r$ and whose total weight is less than $-\Delta_r$.
		\end{itemize}
		Moreover, the algorithm also has the following guarantees. Let $U\subseteq N$ be a random sample where each vertex in $N$ is sampled with probability $1/b$. With high probability over the randomness of $U$, the following holds. For any valid potential $\phi$ that satisfies
		\[\tilde d^h_{\phi}(s,r)+\tilde d^{h+1}_{\phi}(r,t)\ge 0,\quad\forall r\in U,\ \forall s,t\in V,\]
		the algorithm on the reweighted graph $G_\phi$ satisfies
		\[ \sum_{r\in N}\Big(\sum_{v\in V_r^{\textup{in}}}\lambda(v)+\sum_{v\in V_r^{\textup{out}}}\lambda(v)\Big)\le\tilde{O}(\Lambda hb) ,\]
		and the algorithm runs in $\tilde O(mhb)$ time.
	\end{lemma}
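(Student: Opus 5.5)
We reduce \Cref{lem:multi-hop-forward-backward-search-weighted} to the $1$-hop case, \Cref{lem:forward-backward-search-weighted}, using a layered graph in the spirit of \cite{li2026bellmanfordalmostlineartimedense}.

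\textbf{Construction.} Build a graph $H$ with copies $G^{\textup{in}}_h,\ldots,G^{\textup{in}}_1$ and $G^{\textup{out}}_1,\ldots,G^{\textup{out}}_{h}$ of the non-negative edges of $G$. In each copy, every (real or auxiliary) negative edge of $G$ becomes a non-negative edge of weight $M+\omega(u,v)$ going to the next layer, as in the proof of \Cref{lem:betweenness-reduction-weighted}. Self edges of weight $M$ join consecutive copies of each original vertex, which allows paths with fewer hops.

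For each negative vertex $r$ with negative edge $(r,r')$, the new graph has a single negative edge from $r^{\textup{in}}_1$ to $r'^{\textup{out}}_1$ of weight $\omega(r,r')-2hM$. This negative edge is the only one leaving its tail, so invariant~\ref{item:negative-edge-invariant-1} holds after splitting via \Cref{lem:one-negative-outgoing-edge}. With this setup:
\begin{itemize}
\item a $0$-hop path in $H$ from some $x^{\textup{in}}_j$ to $r^{\textup{in}}_1$ corresponds to an $h$-hop path in $G$ ending at $r$, shifted by a known multiple of $M$;
\item a $1$-hop path in $H$ starting at $r^{\textup{in}}_1$ corresponds to an $(h+1)$-hop path in $G$ starting at $r$.
\end{itemize}
To make the shifts uniform, the searches start from the top layer $h$ so that every path accumulates exactly $hM$, with self edges padding paths that use fewer hops. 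Give each copy of $v$ weight $\lambda(v)$, so the total weight is $\Lambda_H=O(h\Lambda)$. Finally, run the algorithm of \Cref{lem:forward-backward-search-weighted} on $H$ and set $\tilde V^{\textup{in}}_r$ and $\tilde V^{\textup{out}}_r$ to be the projections to $G$ of the processed sets.

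\textbf{Correctness of the sets.} I will show that the Dijkstra searches in $H$ process every vertex on a qualifying path, not just its endpoint. Consider a prefix of an $h$-hop path in $G$ ending at $r$ with weight less than $\Delta_r$. Because the prefix is part of that path, all of its intermediate vertices have layered distance less than $\Delta_r$ (after the $M$ shift) from the backward source. Hence the backward search processes them. The forward search is handled symmetrically. This is why I use the processed sets $P$ rather than only the threshold sets. The remaining work is bookkeeping of the $M$-offsets so that the threshold $\Delta_r$ translates consistently between $G$ and $H$.

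\textbf{Size bound and running time.} We need that $\phi$ induces a potential on $H$ satisfying the hypothesis of \Cref{lem:forward-backward-search-weighted}. Extend $\phi$ by copying $\phi(v)$ onto every copy of $v$, with offsets $\pm jM$ per layer to absorb the $M$ shifts; these offsets cancel on any source-to-sink path. Then $d^0_{H,\phi}(s,r)+d^1_{H,\phi}(r,t)$ equals $\tilde d^h_\phi(s',r)+\tilde d^{h+1}_\phi(r,t')$ for the projected endpoints $s',t'$, which is non-negative for $r\in U$. Here we identify each sampled negative vertex of $H$ with $r\in U$ and sample with the same probability. Applying \Cref{lem:forward-backward-search-weighted} then gives a total processed weight of $\tilde O(\Lambda_H b)=\tilde O(\Lambda hb)$. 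Projecting to $G$ only decreases the sum, since several copies collapse to one vertex. Since $H$ has $O(hm)$ edges, the running time is $\tilde O(mhb)$.

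The main obstacle is the offset bookkeeping in the second step: showing that the layered $0$-hop distance to $r^{\textup{in}}_1$ is exactly the $h$-hop distance plus a constant independent of the number of hops used, so that one threshold $\Delta_r$ works uniformly. I would first try forcing all searches to end at the extreme layer via self edges, which makes the constant $hM$ universal.
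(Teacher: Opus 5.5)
Your proposal has a genuine gap, and it sits exactly at the ``offset bookkeeping'' you deferred. The sets must contain every vertex on a path whose \emph{total} weight is below the threshold. The portion of such a path between $r$ and an intermediate vertex can lie far above the threshold when the rest of the path contains negative edges. For example, the $1$-hop path $s\to s'\to x\to r$ with $\omega(s,s')=-100$, $\omega(s',x)=0$, $\omega(x,r)=50$ has total weight $-50<\Delta_r=10$, so $x$ must lie in $\tilde V_r^{\textup{in}}$, yet $d^0(x,r)=50$.

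Your $M$-shifted layered graph cannot handle this, whichever way you treat the shifts.
\begin{itemize}
\item \emph{The search sees the shifts.} Then all layered edges are non-negative, and that is the only reason your claim that intermediate vertices fall within the threshold holds. But a copy at layer $j$ carries roughly $(j-1)M$ in its distance, so no single threshold is meaningful across layers. Moreover $H$ violates the hypothesis of \Cref{lem:forward-backward-search-weighted}, since $d^0_H(r^{\textup{in}}_1,r^{\textup{in}}_1)+d^1_H(r^{\textup{in}}_1,(r')^{\textup{out}}_1)\le\omega(r,r')-2hM<0$ for every $r$, including $r\in U$. Concretely, the balanced search stops only once $d^{\textup{in}}+d^{\textup{out}}\ge 2hM-\omega(r,r')$. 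So both searches keep sweeping essentially everything reachable in the lower layers, and no size bound follows.
\item \emph{You absorb the shifts with per-layer offsets $\pm jM$.} Then each inter-layer edge of weight $M+\omega(u,v)$ reverts to $\omega_\phi(u,v)$, which may be negative. That potential is not valid, and the reweighted $H$ no longer has only the $r$-edges negative, so the $1$-hop search cannot be run on it. Also, distances revert to the $G$-weight of the part of the path between $r$ and the vertex, which by the example above misses required vertices.
\end{itemize}
Padding with self edges does not help, because a single-source search from $r$ never sees the part of the path beyond the current vertex.

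The missing idea is to reweight each layer by the best \emph{completion} achievable with the remaining hop budget. The paper uses inter-layer edges of weight $\omega(u,v)$ and zero-weight self edges, with no $M$. It then applies $\phi(v_i^{\textup{out}})=-\tilde d^{h+1-i}_G(v,V)$ and $\phi(v_i^{\textup{in}})=\tilde d^{h-i}_G(V,v)$, computed in $\tilde O(mh)$ time. This potential does three things at once.
\begin{itemize}
\item By the triangle inequality across hop budgets, every layered edge becomes non-negative, so only the $(r^{\textup{tail}},r^{\textup{head}})$ edges can be negative.
\item It gives $d_H(r^{\textup{tail}},u_i^{\textup{out}})-\phi(r^{\textup{tail}})=\tilde d^i_G(r,u)+\tilde d^{h+1-i}_G(u,V)$. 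This is exactly the minimum total weight of an $(h+1)$-hop path from $r$ through $u$ with at most $i$ hops before $u$. Hence a single threshold characterizes the required sets at every layer, and symmetrically on the in-side.
\item The $1$-hop fixing property in $H$ follows directly from the $h$-hop hypothesis, because $d_H(u_i^{\textup{in}},r^{\textup{tail}})+d_H(r^{\textup{tail}},v_j^{\textup{out}})=\tilde d^{h-i}_G(V,u)+\tilde d^i_G(u,r)+\tilde d^j_G(r,v)+\tilde d^{h+1-j}_G(v,V)\ge 0$ for $r\in U$.
\end{itemize}
With this construction in place of the $M$-shifted one, your weight assignment, projection argument, and running-time accounting go through.
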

	
	\begin{proof}
		We construct a layered graph $H$ similar to \cite{li2026bellmanfordalmostlineartimedense} and run the algorithm in \Cref{lem:forward-backward-search-weighted} on it. To define this layered graph, we first compute $\tilde d^i(V,v)$ and $\tilde d^i(v,V)$ in $G$ for each $v\in V$ and $i\le h$, which takes $\tilde O(mh)$ time using the hybrid Dijkstra and Bellman-Ford algorithm~\cite{dinitz2017hybrid}. With these distances, we construct the layered graph as follows; see Figure~\ref{fig:layered-graph} for a visual reference. First, let $G^+$ denote the graph $G$ with all negative edges removed. We start with $2h+3$ copies of $G^+$, denoted as $G_0^{\textup{in}},G_1^{\textup{in}},\ldots,G_h^{\textup{in}}$ and $G_0^{\textup{out}},G_1^{\textup{out}},\ldots,G_{h+1}^{\textup{out}}$. We use $v_i^{\textup{in}}$ to denote the copy of $v$ in $G_i^{\textup{in}}$, and $v_i^{\textup{out}}$ to denote the copy of $v$ in $G_i^{\textup{out}}$. Between these copies, we add edges as follows. For each negative (both real and auxiliary) edge $(u,v)$, we add the following edges with weight $\omega(u,v)$.
		\begin{figure} \centering
			\includegraphics{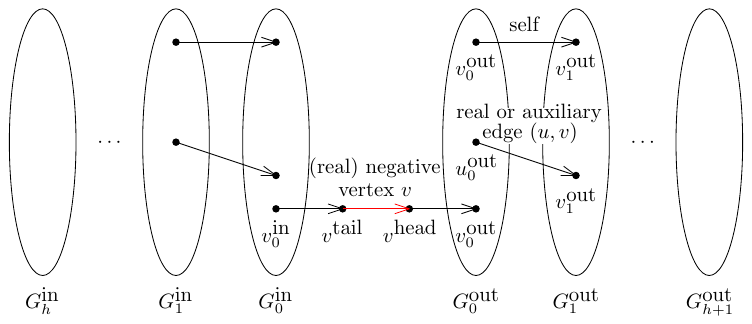}
			\caption{The layered graph $H$. Only the red edges $(v^{\textup{tail}},v^{\textup{head}})$ can be negative after reweighting by $\phi$.}\label{fig:layered-graph}
		\end{figure}
		\begin{itemize}
			\item An edge from $u_i^{\textup{out}}$ to $v_{i+1}^{\textup{out}}$, for every $i\in\{0,1,\ldots,h\}$.
			\item An edge from $u_{i+1}^{\textup{in}}$ to $v_i^{\textup{in}}$, for every $i\in\{0,1,\ldots,h-1\}$.
		\end{itemize}
		In addition, for each vertex $v$, we add the following ``self edges'' with weight $0$.
		\begin{itemize}
			\item An edge from $v_i^{\textup{out}}$ to $v_{i+1}^{\textup{out}}$, for every $i\in\{0,1,\ldots,h\}$.
			\item An edge from $v_{i+1}^{\textup{in}}$ to $v_i^{\textup{in}}$, for every $i\in\{0,1,\ldots,h-1\}$.
		\end{itemize}
		Next, for each negative vertex $v$, we add two vertices $v^{\textup{tail}}$ and $v^{\textup{head}}$ into $H$, with three zero-weight edges $(v_0^{\textup{in}},v^{\textup{tail}})$, $(v^{\textup{tail}},v^{\textup{head}})$, and $(v^{\textup{head}},v_0^{\textup{out}})$.
		Finally, we apply a weight function $\phi$ to reweight this graph, defined as follows. For vertices in $G_i^{\textup{out}}$, $\phi(v_i^{\textup{out}})=-\tilde d^{h+1-i}_G(v,V)$. For vertices in $G_i^{\textup{in}}$, $\phi(v_i^{\textup{in}})=\tilde d^{h-i}_G(V,v)$. Finally, $\phi(v^{\textup{tail}})=\phi(v_0^{\textup{in}})$ and $\phi(v^{\textup{head}})=\phi(v_0^{\textup{out}})$. We use $H$ to denote the graph reweighted with potential $\phi$. We note that this is not a valid reweighting (i.e., edge $(v^{\textup{tail}},v^{\textup{head}})$ might become a negative edge after reweighting). We emphasize that hops in $G$ are counted based on before the reweighting; in particular, $(v^{\textup{tail}},v^{\textup{head}})$ is not counted as a hop in $d^i_G(\cdot,\cdot)$.
		
		We first prove that, after reweighting, only the edge from $v^{\textup{tail}}$ to $v^{\textup{head}}$ can be negative in $H$. The non-negativity of other edges can be shown by the triangle inequality:
		\begin{itemize}
			\item For a non-negative edge $(u_i^{\textup{out}},v_i^{\textup{out}})$ in $G_i^{\textup{out}}$, its weight after reweighting is \[\omega_H(u_i^{\textup{out}},v_i^{\textup{out}})=\omega(u,v)-\tilde d^{h+1-i}_G(u,V)+\tilde d^{h+1-i}_G(v,V)\ge 0.\]
			\item For a non-negative edge $(u_i^{\textup{in}},v_i^{\textup{in}})$ in $G_i^{\textup{in}}$, its weight after reweighting is \[\omega_H(u_i^{\textup{in}},v_i^{\textup{in}})=\omega(u,v)+\tilde d^{h-i}_G(V,u)-\tilde d^{h-i}_G(V,v)\ge 0.\]
			\item For an inter-layer edge $(u_i^{\textup{out}},v_{i+1}^{\textup{out}})$ between $G_i^{\textup{out}}$ that corresponds to a negative edge $(u,v)$, its weight after reweighting is \[\omega_H(u_i^{\textup{out}},v_{i+1}^{\textup{out}})=\omega(u,v)-\tilde d^{h+1-i}_G(u,V)+\tilde d^{h-i}_G(v,V)\ge 0.\]
			\item For an inter-layer edge $(u_{i+1}^{\textup{in}},v_i^{\textup{in}})$ between $G_i^{\textup{in}}$ that corresponds to a negative edge $(u,v)$, its weight after reweighting is \[\omega_H(u_{i+1}^{\textup{in}},v_i^{\textup{in}})=\omega(u,v)+\tilde d^{h-i-1}_G(V,u)-\tilde d^{h-i}_G(V,v)\ge 0.\]
			\item For a self edge $(v_i^{\textup{out}},v_{i+1}^{\textup{out}})$ between $G_i^{\textup{out}}$, its weight after reweighting is \[\omega(v_i^{\textup{out}},v_{i+1}^{\textup{out}})=\tilde d^{h-i}_G(v,V)-\tilde d^{h+1-i}_G(v,V)\ge 0.\]
			\item For a self edge $(v_{i+1}^{\textup{in}},v_i^{\textup{in}})$ between $G_i^{\textup{in}}$, its weight after reweighting is \[\omega(v_{i+1}^{\textup{in}},v_i^{\textup{in}})=\tilde d^{h-i-1}_G(V,v)-\tilde d^{h-i}_G(V,v)\ge 0.\]
			\item Finally, the edge $(v_0^{\textup{in}},v^{\textup{tail}})$ and $(v^{\textup{head}},v_0^{\textup{out}})$ still have zero weights, since $\phi(v_0^{\textup{in}})=\phi(v^\textup{tail})$ and $\phi(v^{\textup{head}})=\phi(v_0^{\textup{out}})$, and $(v^{\textup{tail}},v^{\textup{head}})$ is negative after the reweighting. 
		\end{itemize}
		Therefore, the only possible negative edges in $H$ are the edges from $v^{\textup{tail}}$ to $v^{\textup{head}}$, so the graph satisfies invariant \ref{item:negative-edge-invariant-1}. We now show the key property of this layered graph $H$: the forward and backward search (defined in \Cref{lem:forward-backward-search-weighted}) starting from $r^{\textup{tail}}$ in $H$ is exactly the search we desired from $r$ in $G$.
		
		
		\begin{subclaim}\label{subcl:strong-bw-search-to-regular-search}
			For any vertex $u$ and any negative vertex $r$,
			\begin{itemize}
				\item $u$ is on an $(h+1)$-hop path of total weight less than $-\Delta_r$ that starts from $r$ if and only if there exists some $i\in\{0,1,\ldots,h+1\}$ such that $d_H(r^{\textup{tail}},u_i^{\textup{out}})<-\Delta_r+\phi(r^{\textup{tail}})$.
				\item $u$ is on an $h$-hop path of total weight less than $\Delta_r$ that ends at $r$ if and only if there exists some $i\in\{0,1,\ldots,h\}$ such that $d_H(u_i^{\textup{in}},r^{\textup{tail}})<\Delta_r-\phi(r^{\textup{tail}})$.
			\end{itemize}
		\end{subclaim}
		\begin{subproof}
			First, from the layered graph structure, the shortest path distances in $H$ satisfy the following.
			\begin{align}
				d_H(u_0^{\textup{out}},v_i^{\textup{out}})&=\tilde d^i_G(u,v)+\phi(u_0^{\textup{out}})-\phi(v_i^{\textup{out}})\label{eq:layered-graph-Vout}\\
				d_H(u_i^{\textup{in}},v_0^{\textup{in}})&=\tilde d^i_G(u,v)+\phi(u_i^{\textup{in}})-\phi(v_0^{\textup{in}})\label{eq:layered-graph-Vin}
			\end{align}
			
			Now we prove the subclaim. First, vertex $u$ is on an $(h+1)$-hop path of total weight less than $-\Delta_r$ that starts from $r$ if and only if there exists some $i\in\{0,1,\ldots,h+1\}$ such that $\tilde d^i_G(r,u)+\tilde d^{h+1-i}_G(u,V)<-\Delta_r$. On the other hand, using \eqref{eq:layered-graph-Vout} and the fact that edge $(r^{\textup{tail}},r^{\textup{head}})$ and $(r^{\textup{head}},r_0^{\textup{out}})$ both have weight zero before reweighting, we have
			\[
			d_H(r^{\textup{tail}},u_i^{\textup{out}})=\tilde d^i_G(r,u)+\phi(r^{\textup{tail}})-\phi(u_i^{\textup{out}})=\tilde d^i_G(r,u)+\tilde d^{h+1-i}_G(u,V)+\phi(r^{\textup{tail}}).
			\]
			Therefore, $\tilde d^i_G(r,u)+\tilde d^{h+1-i}_G(u,V)<-\Delta_r$ is equivalent to $d_H(r^{\textup{tail}},u_i^{\textup{out}})<-\Delta_r+\phi(r^{\textup{tail}})$.
			
			Similarly, for the $G_i^{\textup{in}}$ part, $u$ is on an $h$-hop path of total weight less than $\Delta_r$ that ends at $r$ if and only if there exists some $i\in\{0,1,\ldots,h\}$ such that $\tilde d^{h-i}_G(V,u)+\tilde d^i_G(u,r)<\Delta_r$. Because \eqref{eq:layered-graph-Vin} and edge $(r_0^{\textup{in}},r^{\textup{tail}})$ has weight zero, we get
			\[
			d_H(u_i^{\textup{in}},r^{\textup{tail}})=\tilde d^i_G(u,r)+\phi(u_i^{\textup{in}})-\phi(r^{\textup{tail}})=\tilde d^{h-i}_G(V,u)+\tilde d^i_G(u,r)-\phi(r^{\textup{tail}}).
			\]
			Therefore, $\tilde d^{h-i}_G(V,u)+\tilde d^i_G(u,r)<\Delta_r$ is equivalent to $d_H(u_i^{\textup{in}},r^{\textup{tail}})<\Delta_r-\phi(r^{\textup{tail}})$.
		\end{subproof}
		
		By this result, we can convert the constraint of being on an $h$-hop path of low total weight to regular distance constraints. In particular, it shows that an vertex $u$ is in $\tilde V_r^{\textup{out}}$ if and only if one of its copies $u_i^{\textup{out}}$ satisfies $d_H(r^{\textup{tail}},u_i^{\textup{out}})<-\Delta_r+\phi(r^{\textup{tail}})$; Similarly, $u$ is in $\tilde V_r^{\textup{in}}$ if and only if one of its copies $u_i^{\textup{in}}$ satisfies $d_H(u_i^{\textup{in}},r^{\textup{tail}})<\Delta_r-\phi(r^{\textup{tail}})$.
		
		Moreover, we can also convert the guarantee of $\tilde d^h_G(s,r)+\tilde d^{h+1}_G(r,t)\ge 0$ to a $1$-hop version in graph $H$. Recall that the statement says that the (reweighted) graph $G$ satisfies that 
		\[\tilde d^h_G(s,r)+\tilde d^{h+1}_G(r,t)\ge 0, \forall r\in U, \forall s,t\in V,\]
		so for any $u,v\in V$ and any $i\in\{0,1,\ldots,h\}$ and any $j\in\{0,1,\ldots,h+1\}$, we have
		\[
		\tilde d^{h-i}_G(s,u)+\tilde d^i_G(u,r)+\tilde d^j_G(r,v)+\tilde d^{h+1-j}_G(v,t)\ge 0, \forall r\in U, \forall s,t\in V.
		\]
		Therefore, if we consider any vertices $u_i^{\textup{in}}$ and $v_j^{\textup{out}}$ in $H$, then for any $r\in U$, we have
		\begin{align*}
			d_H(u_i^{\textup{in}},r^{\textup{tail}})+d_H(r^{\textup{tail}},v_j^{\textup{out}})&=\tilde d^i_G(u,r)+\tilde d^j_G(r,v)+\phi(u_i^{\textup{in}})-\phi(v_j^{\textup{out}})\\
			&=\tilde d^{h-i}_G(V,u)+\tilde d^i_G(u,r)+\tilde d^j_G(r,v)+\tilde d^{h+1-j}_G(v,V)\\
			&\ge 0,
		\end{align*}
		so the assumption of \Cref{lem:forward-backward-search-weighted} is satisfied. Therefore, we can apply it to compute these searches, while keep the total weighted search size bounded. To apply this lemma, for every vertex $v\in G$, we set the weight of all copies of $v$ in $H$, including $v_i^{\textup{out}}$, $v_i^{\textup{in}}$, $v^{\textup{tail}}$ and $v^{\textup{head}}$, as $\lambda(v)$. Since each vertex has $O(h)$ copies in $H$, the number of vertices in $H$ is $O(nh)$, and the total weight is $O(\Lambda h)$. For each negative vertex $r^{\textup{tail}}$, the lemma finds a $\Delta_{r^{\textup{tail}}}$, with the following two sets.
		\begin{itemize}
			\item $V_{r^{\textup{tail}}}^{\textup{out}}=\{x\in V(H):d_H^1(r^{\textup{tail}},x)<-\Delta_{r^{\textup{tail}}}\}$,
			\item $V_{r^{\textup{tail}}}^{\textup{in}}=\{x\in V(H):d_H^0(x,r^{\textup{tail}})<\Delta_{r^{\textup{tail}}}\}$.
		\end{itemize}
		In addition, the lemma ensures that the total weighted search size
		\[\sum_{r^{\textup{tail}}}\Big(\sum_{x\in V_{r^{\textup{tail}}}^{\textup{out}}}\lambda(x)+\sum_{x\in V_{r^{\textup{tail}}}^{\textup{in}}}\lambda(x)\Big)\]
		is $\tilde O(\Lambda hb)$, and the algorithm uses $\tilde O(mhb)$ time. 
		By the structure of the layered graph $H$ (see \Cref{fig:layered-graph}), any path from $r^{\textup{tail}}$ to other vertices uses at most one negative edge, and any path from other vertices to $r^{\textup{tail}}$ uses no negative edges. Therefore, the sets $V_{r^{\textup{tail}}}^{\textup{out}}$ and $V_{r^{\textup{tail}}}^{\textup{in}}$ satisfies that
		\begin{align*}
			V_{r^{\textup{tail}}}^{\textup{out}}&=\{x\in V(H):d_H(r^{\textup{tail}},x)<-\Delta_{r^{\textup{tail}}}\},\\
			V_{r^{\textup{tail}}}^{\textup{in}}&=\{x\in V(H):d_H(x,r^{\textup{tail}})<\Delta_{r^{\textup{tail}}}\},
		\end{align*}
		so by \Cref{subcl:strong-bw-search-to-regular-search}, $\tilde V_r^{\textup{out}}$ is the set of vertices that have at least one copy in $V_{r^{\textup{tail}}}^{\textup{out}}$, and $\tilde V_r^{\textup{in}}$ is the set of vertices that have at least one copy in $V_{r^{\textup{tail}}}^{\textup{in}}$.
		In addition, this also shows that the total weight 
		\[\sum_{r}\Big(\sum_{x\in V_r^{\textup{out}}}\lambda(x)+\sum_{x\in V_r^{\textup{in}}}\lambda(x)\Big)\]
		is at most $\tilde O(\Lambda hb)$. This concludes the proof of \Cref{lem:multi-hop-forward-backward-search-weighted}.
	\end{proof}

	\subsection{The Shortcutting Algorithm}\label{sec:algorithm}
	Using these tools, we present our algorithm to shortcut $G$ by a constant factor. This algorithm is similar to that of \cite{li2025shortcutting}, but we include the full description and proof for completeness.

	\paragraph{Step 1: Betweenness Reduction.} First, we reduce the (weighted) total betweenness of the graph. For a vertex $v\in S_0$, let $\textup{count}^+(v)$ and $\textup{count}^-(v)$ be the number of times $v$ has been in the forward and backward search of some negative vertex in previous iterations of shortcutting, respectively. Note that this can be easily stored as counters throughout the algorithm and passed in for recursive calls. Let $h=\Theta(\log^{10}{n})$; we use weight function $\lambda:V\to\mathbb{Z}_{\ge0}$ defined as
	\[ \lambda(v)=\begin{cases}\displaystyle
		\frac{m_0hb+nhb^2}{n}+\tilde O(hb)\cdot \deg_{G[S_0]}(v)+\sum_{w\in N^{\textup{out}}(v)\cap S_0}\textup{count}^-(w)+\sum_{u\in N^{\textup{in}}(v)\cap S_0}\textup{count}^+(u)&\text{if }v\in S_0\\
		0&\text{if }v\notin S_0
	\end{cases} \]
	We will apply the betweenness reduction algorithm (\Cref{lem:betweenness-reduction-weighted}) with hop parameter $h+1$ to compute reweighting $\phi$. We apply the reweighting to $G$ and also update the data structure by calling $\textsc{Reweight}(\phi)$. By the weighted version of our betweenness reduction lemma (\Cref{lem:betweenness-weighted}), this gives a weighted betweenness guarantee in $G_{\phi}$ with respect to weight function $\lambda$.

	\paragraph{Step 2: Forward/Backward Searches.} The second step is to compute (via \Cref{lem:multi-hop-forward-backward-search-weighted}) the forward and backward searches for each $r\in N$ with the same parameter $h=\Theta(\log^{10}{n})$ and weight function $\lambda$ on the graph $G_{\phi}$ with reduced betweenness. For each $r\in N$, this outputs a forward search $\tilde{V}_{r}^{\textup{out}}(r)$ and a backward search $\tilde{V}_r^{\textup{in}}(r)$, along with a scalar $\Delta_r$, such that
	\begin{align*}
		\tilde{V}^{\textup{out}}_r&\supseteq\{v\in V:d^{h+1}(r,v)<-\Delta_r\},\\
		\tilde{V}^{\textup{in}}_r&\supseteq\{v\in V:d^h(v,r)<\Delta_r\}.
	\end{align*}
	Let $\Delta_r'=\min\{-\omega(r,r'),\max\{0,\Delta_r\}\}$ be the value $\Delta_r$ clamped to the range $[0,-\omega(r,r')]$; we define $V^{\textup{out}}_r$ and $V^{\textup{in}}_r$ as follows.
	\begin{align*}
		V^{\textup{out}}_r&=\{v\in V:d^1(r,v)<-\Delta_r'\},\\
		V^{\textup{in}}_r&=\{v\in V:d^0(v,r)<\Delta_r'\}.
	\end{align*}
	First, observe that any $1$-hop path starting from $r$ is of weight at least $\omega(r,r')$, so $V^{\textup{out}}_r$ is empty if $\Delta_r\ge -\omega(r,r')$. Therefore, for any $v\in V^{\textup{out}}_r$, we have $d^1(r,v)<-\Delta_r$ because $\Delta_r<-\omega(r,r')$ when $v$ exists. Similarly, any $0$-hop path has a non-negative total weight, so $V^{\textup{in}}_r$ is empty if $\Delta_r<0$, and $\Delta_r'\le \Delta_r$ otherwise. Therefore, for any $v\in V^{\textup{in}}_r$, we have $d^0(v,r)<\Delta_r$. In addition, we have that $d^1(r,v)\ge d^{h+1}(r,v)$ and $d^0(v,r)\ge d^h(v,r)$. These together imply that $V^{\textup{out}}_r\subseteq \tilde{V}_r^{\textup{out}}$ and $V^{\textup{in}}_r\subseteq \tilde{V}_r^{\textup{in}}$. 
	In particular, the bounds on $\tilde{V}_r^{\textup{in}}$ and $\tilde{V}_r^{\textup{out}}$ which follow from \Cref{lem:multi-hop-forward-backward-search-weighted} also apply to $V^{\textup{in}}_r$ and $V^{\textup{out}}_r$.

	\paragraph{Step 3: Shortcutting.} Finally, we define the shortcut edges we add as follows. Edges in \ref{item:shortcut-step-2}--\ref{item:shortcut-step-5} are added as real edges, as they are used to shortcut the shortest paths, the same as in \cite{li2025shortcutting}. Edges in \ref{item:shortcut-step-6} are added as auxiliary edges, and their only goal is to ensure that the reweighted graph has some additional structural guarantees so that \Cref{lem:maintain-invariant} works, as we will show later in \Cref{sec:maintain-invariant}.
	\begin{itemize}
		\item[{\crtcrossreflabel{(S1)}[item:shortcut-step-1]}] For each negative vertex $r\in N$, create a new base vertex $\tilde{r}$, called a Steiner vertex.
		\item[{\crtcrossreflabel{(S2)}[item:shortcut-step-2]}] For each $r\in N$, $v\in V^{\textup{out}}_r\cup\{r'\}$, and $w\in N^{\textup{out}}(v)$, we add an edge $(\tilde{r},w)$ with weight $d^1(r,v)+\Delta_r'+\omega(v,w)$ if it is non-negative. 
		\item[{\crtcrossreflabel{(S3)}[item:shortcut-step-3]}] For each $r\in N$, $v\in V^{\textup{in}}_r\cup\{r\}$, and $u\in N^{\textup{in}}(v)$, we add an edge $(u,\tilde{r})$ with weight $\omega(u,v)+d^0(v,r)-\Delta_r'$ if it is non-negative. 
		\item[{\crtcrossreflabel{(S4)}[item:shortcut-step-4]}] For each negative vertex $r\in N$ and each $v\in V^{\textup{out}}_r$, if $v$ is an endpoint of a negative edge $(v,v')$, add an edge $(r,v')$ with weight $d^1(r,v)+\omega(v,v')$.
		\item[{\crtcrossreflabel{(S5)}[item:shortcut-step-5]}] For each negative vertex $r\in N$ and each $u'\in V^{\textup{in}}_r$, if $u'$ is the endpoint of a negative edge $(u,u')$, add an edge $(u,r')$ with weight $\omega(u,u')+d^0(u',r)+\omega(r,r')$.
		\item [{\crtcrossreflabel{(S6)}[item:shortcut-step-6]}] Finally, for each negative vertex $r\in N$, we add an edge $(\tilde r,r)$ with weight $\Delta_r'$, and an edge $(r',\tilde r)$ with weight $-\omega(r,r')-\Delta_r'$.
		In addition, if $\Delta_r'=0$, then we add a zero-weight edge $(r,\tilde r)$; if $\Delta_r'=-\omega(r,r')$, then we add a zero-weight edge $(\tilde r,r')$. In other words, if we clamp the value $\Delta_r$ because it is outside the range $(0,-\omega(r,r'))$, then we enforce $\Delta_r$ to be fixed at this value. We remind the reader that the edges added in this step are auxiliary edges, while edges added in previous steps are real edges.
	\end{itemize}
    Once we add these shortcut edges to the graph, we also add edges from \ref{item:shortcut-step-1}--\ref{item:shortcut-step-5} to the \textsc{MaintainInvariant} data structure. Finally, we apply \Cref{lem:one-negative-outgoing-edge} on the shortcutted graph to get the final $G'$ in order to ensure invariant \ref{item:negative-edge-invariant-1}.

    \subsection{Properties of Shortcutting Algorithm}

    In this subsection, we prove a few claims about the shortcutting algorithm. First, we show adding the shortcut edges does not decrease the distances in $G_{\phi}$. 

    \begin{claim}\label{cl:distances-dont-decrease}
        Let $G'$ denote the graph after our shortcutting procedure. Then $d_{G'}(s,t)=d_{G_{\phi}}(s,t)=d_G(s,t)+\phi(s)-\phi(t)$ for all $s,t\in V$.
    \end{claim}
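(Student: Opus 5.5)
The second equality $d_{G_\phi}(s,t)=d_G(s,t)+\phi(s)-\phi(t)$ is immediate, since reweighting by a potential shifts every $(s,t)$-path by the same additive $\phi(s)-\phi(t)$. The inequality $d_{G'}(s,t)\le d_{G_\phi}(s,t)$ is also immediate, because $G'\supseteq G_\phi$ apart from the preprocessing of \Cref{lem:one-negative-outgoing-edge}, which preserves distances. The content of the claim is therefore the reverse inequality $d_{G'}(s,t)\ge d_{G_\phi}(s,t)$. Distances in $G'$ are taken without auxiliary edges, so the (S6) edges play no role here.

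The plan is to show that every new real edge, and every two-edge path through a Steiner vertex $\tilde r$, has weight at least the $G_\phi$-distance between its endpoints. Then any $s$--$t$ path in $G'$ can be converted into a walk in $G_\phi$ of no larger weight. Since $G_\phi$ has no negative cycles (otherwise betweenness reduction would have reported one), the weight of this walk is at least $d_{G_\phi}(s,t)$.

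\textbf{Edges (S4) and (S5).} These are direct.
\begin{itemize}
\item An (S4) edge $(r,v')$ has weight $d^1(r,v)+\omega(v,v')$, which is the weight of an actual $r\to v\to v'$ walk in $G_\phi$.
\item An (S5) edge $(u,r')$ has weight $\omega(u,u')+d^0(u',r)+\omega(r,r')$, which is likewise the weight of a walk $u\to u'\leadsto r\to r'$.
\end{itemize}
So each such edge can be replaced by the corresponding walk.

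\textbf{Paths through $\tilde r$.} For $s,t\in V$, any path in $G'$ that visits $\tilde r$ enters it via an (S3) edge $(u,\tilde r)$ and leaves via an (S2) edge $(\tilde r,w)$. By (S2)/(S3), the two edges have weights
\[
\omega(u,v_1)+d^0(v_1,r)-\Delta_r' \quad\text{and}\quad d^1(r,v_2)+\Delta_r'+\omega(v_2,w).
\]
The $\Delta_r'$ terms cancel, so the total weight equals that of the walk $u\to v_1\leadsto r\leadsto v_2\to w$ in $G_\phi$. Here the case $v_2=r'$ is handled using $d^1(r,r')=\omega(r,r')$, and $v_1=r$ with $d^0=0$. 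Hence every maximal segment $u\to\tilde r\to w$ can be replaced by a $G_\phi$ walk of equal weight.

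Since $\tilde r$ has no other real incident edges, any $s$--$t$ path with $s,t\in V$ decomposes into original edges, (S4)/(S5) edges, and such two-edge segments. Substituting the walks above gives a $G_\phi$ walk of equal weight, which proves the claim.

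\textbf{Main obstacle.} The delicate point is bookkeeping. The distances $d^1$ and $d^0$ used in the shortcut weights must be distances in $G_\phi$ that are realized by genuine walks avoiding auxiliary edges. Also, the final application of \Cref{lem:one-negative-outgoing-edge} must not change distances; this holds by part 1 of that lemma with $h$ unbounded.
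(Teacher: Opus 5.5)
Your proof is correct and takes essentially the same route as the paper. The (S4)/(S5) edges carry the weights of existing walks, and each real edge into (resp.\ out of) $\tilde r$ is the weight of a walk to (resp.\ from) $r$ shifted by $-\Delta_r'$ (resp.\ $+\Delta_r'$). So every two-edge segment through $\tilde r$ unfolds to a walk through $r$ of equal weight.

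The only difference is scope. The paper's proof also treats the auxiliary (S6) edges, checking the extra segments $u\to\tilde r\to r$, $r'\to\tilde r\to w$ and $r'\to\tilde r\to r$; the last one only closes a zero-weight cycle with $(r,r')$. Setting these edges aside is legitimate for $d_{G'}$, since the paper's convention is that $d$ excludes auxiliary edges. However, the stronger fact that the auxiliary edges create no shortcuts or negative cycles is what the paper needs downstream, where $\tilde d$ is computed and reweighting is done with the auxiliary edges present.
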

    \begin{proof}
        Since we only add edges to the graph, it suffices to show that the added edges do not decrease distances in $G_{\phi}$. The weights of shortcut edges \ref{item:shortcut-step-4}--\ref{item:shortcut-step-5} correspond to weights of paths in the graph, so they do not decrease the distances. For other edges added to the Steiner vertex $\tilde r$, we have the following properties:
    	\begin{itemize}
    		\item For every shortcut edge to $\tilde r$ in step \ref{item:shortcut-step-3} and \ref{item:shortcut-step-6}, with the exception of edge $(r',\tilde r)$ added in \ref{item:shortcut-step-6}, for any such edge $(u,\tilde r)$, there is a corresponding path $P$ from $u$ to $r$ such that $\omega(u,\tilde r)=\omega(P)-\Delta_r'$.
    		\item For every shortcut edge from $\tilde r$ in step \ref{item:shortcut-step-2} and \ref{item:shortcut-step-6}, with the exception of edge $(\tilde r,r)$ added in \ref{item:shortcut-step-6}, for any such edge $(\tilde r,w)$, there is a corresponding path $P'$ from $r$ to $w$ such that $\omega(\tilde r,w)=\omega(P')+\Delta_r'$.
    	\end{itemize}
    	We now show that any two-edge subpath that uses $\tilde r$ as the middle vertex does not decrease the distances.
    	\begin{itemize}
    		\item For a path $u\to \tilde r\to w$ where $u\ne r',w\ne r$, we can concatenate path $P$ and $P'$ above to get a path in the previous graph from $u$ to $w$ that has the same weight. Therefore, this subpath never decreases the distances.
    		\item For a path $u\to \tilde r\to r$ where $u\ne r'$, since $\omega(\tilde r,r)=\Delta_r'$, the total weight of this path equals $\omega(P)$, which also begins at $u$ and ends at $r$. Therefore, this subpath never decreases the distances.
    		\item For a path $r'\to \tilde r\to w$ where $w\ne r$, since $\omega(r',\tilde r)=-\omega(r,r')-\Delta_r'$, the weight of this path equals $\omega(P')-\omega(r,r')$. Because we assumed in invariant \ref{item:negative-edge-invariant-1} that negative vertex $r$ only has this outgoing edge $(r,r')$, path $P'$ also starts with this edge, so there is a path of weight $\omega(P')-\omega(r,r')$ from $r'$ to $w$ in the original graph. Therefore, this subpath never decreases the distances.
    		\item Finally, we consider the path $r'\to \tilde r\to r$. Its weight equals $-\omega(r,r')$, so it does not form a negative cycle with the original negative edge $(r,r')$. Since we assume invariant \ref{item:negative-edge-invariant-1} holds, $r$ does not have other outgoing edges, and $r'$ does not have other incoming edges, so this path cannot be extended further.\qedhere
    	\end{itemize}
    \end{proof}

    Next, we show that invariants \ref{item:negative-edge-invariant-1}--\ref{item:steiner-invariant-3} are maintained by the shortcutting procedure. 

    \begin{claim}
        Invariants \ref{item:negative-edge-invariant-1}--\ref{item:steiner-invariant-3} hold for $G'$. 
    \end{claim}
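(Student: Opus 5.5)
The plan is to check each invariant in turn for the graph obtained after Step 3 and then argue that the final application of \Cref{lem:one-negative-outgoing-edge} does not break them. The new vertices $\tilde r$ all go into a fresh level $S_{\tau+1}$. The reweighting by $\phi$ in Step 1 is handled by the $\textsc{Reweight}$ operation of \Cref{lem:maintain-invariant}, so I take as given that the graph $G_\phi$ satisfies \ref{item:negative-edge-invariant-1}--\ref{item:steiner-invariant-3}. Only the edges added in \ref{item:shortcut-step-1}--\ref{item:shortcut-step-6} and the splitting need attention.

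\ref{item:steiner-invariant-1} holds because every new edge incident to a $\tilde r$ has its other endpoint in $V$, hence in $S_{\le\tau}$. This covers \ref{item:shortcut-step-2}, \ref{item:shortcut-step-3} and \ref{item:shortcut-step-6}, since $w,u,r,r'\in V$. So no two vertices of $S_{\tau+1}$ are adjacent. Edges among old vertices from \ref{item:shortcut-step-4}--\ref{item:shortcut-step-5} join $r$ (or $u$) to $v'$ (or $r'$), which are endpoints of negative edges. By \ref{item:steiner-invariant-3} these lie in $S_0$, so they add no edges inside a Steiner level.

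For \ref{item:steiner-invariant-3}: the new edges of \ref{item:shortcut-step-2}, \ref{item:shortcut-step-3} and the auxiliary edges of \ref{item:shortcut-step-6} are non-negative, except $(r',\tilde r)$ of weight $-\omega(r,r')-\Delta_r'\ge0$ and $(\tilde r,r)$ of weight $\Delta_r'\ge0$. This holds by the clamping $\Delta_r'\in[0,-\omega(r,r')]$. Hence every negative edge incident to $\tilde r$ is absent. The new negative edges from \ref{item:shortcut-step-4}--\ref{item:shortcut-step-5} join level-0 vertices. When \Cref{lem:one-negative-outgoing-edge} splits a negative vertex, the created $u',u''$ are placed in $S_0$, which preserves \ref{item:steiner-invariant-3}. \ref{item:negative-edge-invariant-1} then holds directly by \Cref{lem:one-negative-outgoing-edge}, applied treating real and auxiliary negative edges uniformly.

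The main work is \ref{item:steiner-invariant-2} for $u=\tilde r$ with real edges $(w,\tilde r)$ from \ref{item:shortcut-step-3} and $(\tilde r,v)$ from \ref{item:shortcut-step-2}; the auxiliary \ref{item:shortcut-step-6} edges are excluded since the invariant concerns real edges. By construction $(w,\tilde r)$ comes from a path $w\to x\leadsto r$ with $x\in V_r^{\textup{in}}\cup\{r\}$. This path has weight $\omega(w,x)+d^0(x,r)$, so $\omega(w,\tilde r)$ equals this weight minus $\Delta_r'$. Similarly $(\tilde r,v)$ comes from $r\to r'\leadsto y\to v$ with $y\in V_r^{\textup{out}}\cup\{r'\}$, and $\omega(\tilde r,v)=d^1(r,y)+\omega(y,v)+\Delta_r'$. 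Concatenating gives a path $P$ from $w$ to $v$ with exactly one negative edge $(r,r')$ and weight $\omega(w,\tilde r)+\omega(\tilde r,v)$. All its vertices are in $V=S_{\le\tau}$.

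It remains to check the two inequalities. The last edge of $P$ is $(y,v)$, and $\omega(y,v)\ge\omega(\tilde r,v)$ iff $d^1(r,y)+\Delta_r'\le0$. This holds for $y\in V_r^{\textup{out}}$ since $d^1(r,y)<-\Delta_r'$. For $y=r'$ it holds since $d^1(r,r')=\omega(r,r')\le-\Delta_r'$. Symmetrically, $\omega(w,x)\ge\omega(w,\tilde r)$ iff $d^0(x,r)\le\Delta_r'$. This holds for $x\in V_r^{\textup{in}}$, and for $x=r$ it is $0\le\Delta_r'$. The expected obstacle is the interaction with splitting by \Cref{lem:one-negative-outgoing-edge}, since the $(r,r')$ in $P$ might be rerouted through $r\to u'\to u''$. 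I would handle this by noting that the split preserves path weights and the first and last edges of $P$. If needed, I would take $P$ to use the split path, which still has exactly one negative edge.
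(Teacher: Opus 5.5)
Your argument is correct, but you prove the key part by a different route than the paper. The paper's proof is essentially two sentences. First, the (S2)--(S3) edges are handed to the \textsc{MaintainInvariant} structure, whose \textsc{Insert} is asserted to guarantee the invariants for them. Second, the (S4)--(S5) edges satisfy (I4) because their endpoints are endpoints of existing negative edges and so lie in $S_0$.

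You instead verify (I3) for the new level directly. You concatenate the path behind $(w,\tilde r)$ with the path behind $(\tilde r,v)$ and check the first- and last-edge inequalities. These reduce to $d^0(x,r)\le\Delta_r'$ and $d^1(r,y)\le-\Delta_r'$, and the clamping $\Delta_r'\in[0,-\omega(r,r')]$ is used exactly for the boundary cases $x=r$ and $y=r'$. This is the insertion-time argument the paper only sketches informally in Section~\ref{sec:maintain-invariant}. So your version is self-contained and shows precisely where clamping is needed. You also treat (S6) (auxiliary, hence exempt from (I3), and non-negative by clamping) and the final splitting, both of which the paper leaves implicit. The paper's delegation buys brevity and uniformity with later reweightings, but both routes rely on \textsc{Reweight} for the older levels.

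Two small points to tighten. First, you assert that $P$ has exactly one negative edge before justifying it. It follows from your own inequalities: $\omega(w,x)\ge\omega(w,\tilde r)\ge0$ and $\omega(y,v)\ge\omega(\tilde r,v)\ge0$, so the first and last edges of $P$ are non-negative. This same non-negativity is what guarantees that the splitting leaves those two edges untouched. Second, say explicitly that every new edge incident to an older Steiner vertex $u\in S_i$ ($1\le i\le\tau$) has its other endpoint at some $\tilde r\notin S_{<i}$. Hence these edges create no new (I3) obligations for earlier levels.
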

    \begin{proof}
        In addition, we prove that the invariants are still satisfied after the shortcutting step. We consider new edges added in this iteration. Shortcut edges \ref{item:shortcut-step-2}-\ref{item:shortcut-step-3} are managed by the data structure in \Cref{lem:maintain-invariant}, which ensures invariants for them. The remaining shortcut edges are \ref{item:shortcut-step-4}-\ref{item:shortcut-step-5}, which are negative edges added between endpoints of previous negative edges. Since the previous graph satisfies invariant \ref{item:steiner-invariant-3}, these endpoints are original vertices, so these edges also satisfy invariant \ref{item:steiner-invariant-3}.
    \end{proof}
    
	Then, we show that adding these edges shortcuts the paths by a constant factor; the proof is exactly the same as in \cite{li2025shortcutting}.
	
	\begin{lemma}\label[lemma]{lem:shortcut-constant-factor}
		Consider any $a,b\in V$ and any shortest $(a,b)$-path $P$ in $G_{\phi}$ with $h$ negative edges. If $G'$ has shortcut edges \ref{item:shortcut-step-1}--\ref{item:shortcut-step-5}, then there is an $(a,b)$-path in $G'$ with weight $d_{G_{\phi}}(a,b)$ and with at most $h-\lfloor h/3\rfloor$ negative edges.
	\end{lemma}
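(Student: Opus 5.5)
Let the negative edges of $P$ be $(r_1,r_1'),\dots,(r_h,r_h')$ in order, so $P$ consists of nonnegative segments $P_0,P_1,\dots,P_h$ with $P_j$ running from $r_j'$ to $r_{j+1}$ (with $r_0'=a$, $r_{h+1}=b$). I will group the negative edges into consecutive disjoint triples $(r_{j},r_{j+1},r_{j+2})$, for $j=1,4,7,\dots$, giving $\lfloor h/3\rfloor$ disjoint triples. For each triple I will replace the subpath from $r_j$ to $r_{j+2}'$ by a path with the same endpoints, weight at most that of the subpath, and at most two negative edges. Since $P$ is a shortest path, equal weight then follows, and the total hop count drops by at least one per triple, giving $h-\lfloor h/3\rfloor$ negative edges.

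Fix a triple and focus on the middle negative vertex $r=r_{j+1}$ with its edge $(r,r')$ and clamped threshold $\Delta_r'$. Let $x$ be the last vertex of the subpath $r_j'\leadsto r$ (inside $P_j$, together with $r$ itself) such that the suffix $x\leadsto r$ has weight less than $\Delta_r'$. Similarly, let $y$ be the first vertex of $r'\leadsto r_{j+2}$ (inside $P_{j+1}$, including $r'$) such that the prefix $r\to r'\leadsto y$ has weight less than $-\Delta_r'$. Because $P$ is shortest, subpath weights equal distances. Hence $d^0(x,r)<\Delta_r'$ and $d^1(r,y)<-\Delta_r'$, which puts $x\in V_r^{\textup{in}}\cup\{r\}$ and $y\in V_r^{\textup{out}}\cup\{r'\}$ (modulo the boundary cases where these sets are empty).

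The proof then splits into three cases.

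Case 1: $x$ reaches all the way back to $r_j'$. Then $r_j'\in V_r^{\textup{in}}$ and is the head of the negative edge $(r_j,r_j')$. Step \ref{item:shortcut-step-5} supplies the edge $(r_j,r')$, whose weight equals that of the subpath $r_j\to r_j'\leadsto r\to r'$. This saves one hop.

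Case 2: $y$ reaches all the way to $r_{j+2}$. Then $r_{j+2}\in V_r^{\textup{out}}$, and step \ref{item:shortcut-step-4} gives the edge $(r,r_{j+2}')$ of weight $d^1(r,r_{j+2})+\omega(r_{j+2},r_{j+2}')$. This replaces the subpath $r\to\cdots\to r_{j+2}'$ and again saves a hop.

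Case 3: neither. Let $u$ be the predecessor of $x$ on $P$ and $w$ the successor of $y$. These lie inside the nonnegative segments, so $(u,x)$ and $(y,w)$ are nonnegative edges. Step \ref{item:shortcut-step-3} adds $(u,\tilde r)$ with weight $\omega(u,x)+d^0(x,r)-\Delta_r'$. This is nonnegative by the maximality of $x$: the suffix starting at $u$ has weight at least $\Delta_r'$. Likewise step \ref{item:shortcut-step-2} adds $(\tilde r,w)$ with weight $d^1(r,y)+\Delta_r'+\omega(y,w)\ge 0$ by the minimality of $y$. The path $u\to\tilde r\to w$ has weight exactly equal to $u\leadsto r\to r'\leadsto w$ and uses zero negative edges, removing the hop $(r,r')$.

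In all three cases the replacement stays within the triple, since only the neighbouring negative edges $(r_j,r_j')$ and $(r_{j+2},r_{j+2}')$ are touched, in Cases 1 and 2. Disjoint triples therefore give independent replacements.

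The main obstacle is the careful treatment of the boundary and degenerate cases. This includes the clamping of $\Delta_r'$ to $[0,-\omega(r,r')]$: when $\Delta_r'=0$, $x$ may be $r$ itself with $V_r^{\textup{in}}$ empty. It also includes the cases where $x=r$ or $y=r'$, so that $u$ or $w$ is the endpoint of the negative edge itself; here I must check the vertex sets used in \ref{item:shortcut-step-2}/\ref{item:shortcut-step-3} include $r$ and $r'$. Finally, strict versus non-strict inequalities must be handled so that the "only if nonnegative" conditions hold exactly at the chosen $u,w$. I would handle these by defining $x,y$ with the strict thresholds exactly as in the definitions of $V_r^{\textup{in}},V_r^{\textup{out}}$, so that maximality yields precisely the nonnegativity needed.
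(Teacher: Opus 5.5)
Your proposal is correct and follows essentially the same argument as the paper: the same three cases on the middle negative edge $(r,r')$ of each triple, handled by \ref{item:shortcut-step-5}, \ref{item:shortcut-step-4}, and the pair of \ref{item:shortcut-step-3}/\ref{item:shortcut-step-2} edges through $\tilde r$ respectively. Your $x$ and $y$ (taken in $V_r^{\textup{in}}\cup\{r\}$ and $V_r^{\textup{out}}\cup\{r'\}$, as you indicate for the clamped cases) play the role of the paper's boundary vertices, with two small fixes: your ``last''/``first'' should be read as the extremal vertices farthest from $(r,r')$, as your case analysis already assumes, and the degenerate cases $r_j'=r$ or $r'=r_{j+2}$ are excluded by invariant \ref{item:negative-edge-invariant-1}, which is exactly what the paper invokes to guarantee that the predecessor $u$ and successor $w$ exist.
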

	\begin{proof}	
		Consider the negative edges of $P$ in order from $a$ to $b$, and create $\lfloor h/3\rfloor$ disjoint groups of three consecutive negative edges. Let $(s,s')$, $(r,r')$, and $(t,t')$ be such a group of three consecutive negative edges. We claim there is an $s\leadsto t'$ path in $G'$ with the same weight as the $s\leadsto t'$ path in $G_{\phi}$, and with one fewer negative edge. Concatenating together each of the 2-hop paths in $G'$ which shortcut each of the $\lfloor h/3\rfloor$ disjoint 3-hop paths in $G_{\phi}$ proves the claim. Our proof will be split into three cases based on $\Delta_r'$.

		\begin{enumerate}
			\item $d^0(s',r)<\Delta_r'$. We have that $s'\in V^{\textup{in}}_r$ so \ref{item:shortcut-step-5} added an edge $(s,r')$ with weight $\omega(s,s')+d^0(s',r)+\omega(r,r')$, which equals the weight of the $s\leadsto r'$ segment of $P$ since the negative edges $(s,s')$ and $(r,r')$ are consecutive along $P$. Therefore, we can shortcut the $s\leadsto r'$ segment of $P$, which has two negative edges, using the single (negative) edge $(s,r')$.
			\item $d^1(r,t)<-\Delta_r'$. We have $t\in V^{\textup{out}}_r$, so \ref{item:shortcut-step-4} added an edge $(r,t')$ of weight $d^1(r,t)+\omega(t,t')$, which equals the weight of the $r\leadsto t'$ segment of $P$ since the negative edges $(r,r')$ and $(t,t')$ are consecutive along $P$. Therefore, we can shortcut the $r\leadsto t'$ segment of $P$, which has two negative edges, using the single (negative) edge $(r,t')$.
			\item $d^0(s',r)\ge \Delta_r'$ and $d^1(r,t)\ge-\Delta_r'$. Consider the $s'\leadsto r$ segment of $P$ and let $u$ be the last vertex on this segment with $u\not\in V_r^{\textup{in}}\cup \{r\}$. By invariant \ref{item:negative-edge-invariant-1}, the negative edges $(s,s')$ and $(r,r')$ do not share an endpoint, so $s'\neq r$, and thus $u$ always exists. There is an edge from $u$ to a vertex $u'\in V_r^{\textup{in}}\cup \{r\}$, so \ref{item:shortcut-step-3} added a shortcut edge $(u,\tilde r)$ of weight $\omega(u,u')+d^0(u',r)-\Delta_r'=d^0(u,r)-\Delta_r'\ge 0$, where the equality holds because we assume the path is the shortest path, and the last inequality holds since $u\not\in V_r^{\textup{in}}$.
			
			Similarly, consider the $r'\leadsto t$ segment of $P$ and let $v'$ be the first vertex on this segment with $v'\not\in V_r^{\textup{out}}\cup\{r'\}$. Such a vertex exists since invariant \ref{item:negative-edge-invariant-1} ensures that $r'\neq t$. There is an edge to $v'$ from a vertex $v\in V_r^{\textup{out}}\cup\{r'\}$, so \ref{item:shortcut-step-2} added a shortcut edge $(\tilde r,v')$ of weight $d^1(r,v)+\omega(v,v')+\Delta_r'=d^1(r,v')+\Delta_r'\ge 0$, where the equality holds because the path is the shortest path, and the inequality holds since $v'\not\in V_r^{\textup{out}}$.
			
			Concatenating the edges $(u,\tilde{r})$ and $(\tilde{r},v')$ in the previous paragraphs together, we get a path $u\to \tilde{r}\to v'$ of non-negative edges, whose total weight equals $d^0(u,r)+d^1(r,v')$. Therefore, we shortcut the $u\leadsto v'$ segment that has one negative edge with non-negative edges.\qedhere
		\end{enumerate}
	\end{proof}

    We now prove some summation bounds that are important for charging arguments later on:
	
	\begin{observation}\label{obs:three-bounds-on-search-sizes}
		Suppose that the $\tilde O(b)$ term in the definition of $\lambda(v)$ is large enough. Then, the following sums of counts are bounded: 
		\begin{itemize}
			\item $\sum_{r\in N}\sum_{u\in \tilde V_r^{\textup{out}}\cap S_0}\sum_{v\in N^{\textup{out}}(u)\cap S_0}\textup{count}^-(v)\le \tilde{O}(m_0h^2b^3)=\tilde{O}(m_0b^3).$
			\item $\sum_{r\in N}\sum_{v\in \tilde V_r^{\textup{in}}\cap S_0}\sum_{u\in N^{\textup{in}}(v)\cap S_0}\textup{count}^+(u)\le \tilde{O}(m_0h^2b^3)=\tilde{O}(m_0b^3).$
		\end{itemize}
		The volume of searches for edges between two original vertices is also bounded:
		\begin{itemize}
			\item $\sum_{r\in N}\sum_{v\in \tilde V_r^{\textup{out}}\cap S_0}\deg_{G[S_0]}(v)\le \tilde{O}(m_0hb^2)=\tilde{O}(m_0b^2)$.
			\item $\sum_{r\in N}\sum_{v\in \tilde V_r^{\textup{in}}\cap S_0}\deg_{G[S_0]}(v)\le \tilde{O}(m_0hb^2)=\tilde{O}(m_0b^2)$.
		\end{itemize}
	\end{observation}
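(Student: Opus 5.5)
The plan is to read all four bounds off the weighted search-size guarantee of \Cref{lem:multi-hop-forward-backward-search-weighted}, applied with the weight function $\lambda$ from Step~1. Step~1 uses hop parameter $h+1$, so the resulting potential $\phi$ fixes the sampled set $U$ in the sense required by \Cref{lem:multi-hop-forward-backward-search-weighted}. Hence, with high probability,
\[\sum_{r\in N}\Big(\sum_{v\in\tilde V_r^{\textup{in}}}\lambda(v)+\sum_{v\in\tilde V_r^{\textup{out}}}\lambda(v)\Big)\le\tilde O(\Lambda hb).\]
Since $\lambda$ is a sum of non-negative terms, each individual summand of $\lambda$ inherits this bound. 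Concretely, for $v\in S_0$:
\begin{itemize}
\item the $\sum_{w\in N^{\textup{out}}(v)\cap S_0}\textup{count}^-(w)$ term gives the first bullet;
\item the $\sum_{u\in N^{\textup{in}}(v)\cap S_0}\textup{count}^+(u)$ term gives the second bullet;
\item the $\tilde O(hb)\deg_{G[S_0]}(v)$ term gives the last two bullets, after dividing by its coefficient $\tilde O(hb)$.
\end{itemize}
Vertices outside $S_0$ have $\lambda=0$, which is consistent with the sums restricting to $\tilde V_r\cap S_0$.

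The main step is bounding the total weight $\Lambda=\sum_{v\in S_0}\lambda(v)$. I split it into the four terms.
\begin{itemize}
\item The constant term contributes at most $|S_0|\cdot\frac{m_0hb+nhb^2}{n}\le m_0hb+nhb^2$, using $|S_0|\le n$.
\item The degree term contributes $\tilde O(hb)\cdot 2|E(G[S_0])|$. Edges among original vertices are the edges of $G_0$, its vertex copies, and the \ref{item:shortcut-step-4}--\ref{item:shortcut-step-5} edges. Their number is $m_0^{1+o(1)}$ by the edge bound carried inductively over iterations, or simply $\tilde O(m_0)$ up to the $n^{o(1)}$ slack.
\item The count terms contribute $\sum_{w\in S_0}\deg_{G[S_0]}(w)\,(\textup{count}^-(w)+\textup{count}^+(w))$. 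This is the degree-weighted volume of all earlier searches restricted to $S_0$. It is bounded by the third and fourth bullets of this very observation in previous iterations, i.e.\ $\tilde O(m_0hb^2)$ per iteration, times $O(\log n)$ iterations.
\end{itemize}
So I would run an induction over shortcutting iterations, with the volume bullets as the inductive hypothesis.

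With this, $\Lambda=\tilde O(m_0hb^2+nhb^2)=\tilde O(m_0hb^2)$, assuming $n\le O(m_0)$ after preprocessing. Then:
\begin{itemize}
\item the count bullets are at most $\tilde O(\Lambda hb)=\tilde O(m_0h^2b^3)$;
\item the volume bullets are at most $\tilde O(\Lambda hb)/\tilde O(hb)=\tilde O(\Lambda)=\tilde O(m_0hb^2)$.
\end{itemize}
This is exactly where the hypothesis ``the $\tilde O(b)$ term is large enough'' is used. Dividing by the degree coefficient must cancel the polylog and the factor $hb$ in the search-size bound, so that the volume bound does not pick up an extra $hb$ factor that would compound across iterations. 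Finally, $h=\operatorname{polylog} n$ is absorbed into the $\tilde O$.

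The main obstacle is making this induction non-circular. The bound on $\Lambda$ in iteration $t$ uses the volume bounds from iterations $<t$, and those must not grow with $t$. I would handle this by fixing the constant in the $\tilde O(hb)$ coefficient to dominate the polylog in \Cref{lem:multi-hop-forward-backward-search-weighted}. Each iteration's volume is then at most $\tilde O(m_0hb^2)$ with a \emph{fixed} polylog. The accumulated counts over $O(\log n)$ iterations add only a $\log n$ factor, and that factor enters $\Lambda$ additively, never multiplicatively across iterations.
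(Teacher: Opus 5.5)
Your skeleton matches the paper's. The four bullets are read off the weighted bound of \Cref{lem:multi-hop-forward-backward-search-weighted}, since each summand of $\lambda$ is at most $\lambda$, and the real work is bounding $\Lambda$ by induction over shortcutting iterations.

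\textbf{Closing the induction.} This is where the argument fails. Write $\kappa$ for the degree coefficient and $\log^C n\cdot\Lambda hb$ for the lemma's bound. The volume of iteration $s$ is then only bounded by $V_s\le\rho\,\Lambda_s$ with $\rho=\log^C n\cdot hb/\kappa$. But $\Lambda_s$ already contains the count contributions $\sum_{s'<s}V_{s'}$. So the recursion is $\Lambda_t\le A+\rho\sum_{s<t}\Lambda_s$, which solves to $\Lambda_t\le A(1+\rho)^t$.

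The accumulated counts therefore feed back multiplicatively, contrary to your last sentence. Your claim that ``each iteration's volume is at most $\tilde O(m_0hb^2)$ with a fixed polylog'' presupposes the conclusion. If $\kappa$ merely dominates the lemma's polylog ($\rho=\Theta(1)$), then $t=\Theta(\log n)$ iterations give an $n^{\Theta(1)}$ blow-up.

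The missing point is that $\kappa$ must beat that polylog by a factor equal to the number of iterations. The paper sets $\kappa=h\log^{C+1}n\cdot b$, so $\rho=1/\log n$. It then proves $\Lambda'_t\le(1+1/\log n)\Lambda'_{t-1}$ and uses $(1+1/\log n)^{O(\log n)}=O(1)$. This extra $\log n$ is what the hypothesis ``the $\tilde O(b)$ term is large enough'' is for.

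\textbf{The degree term.} Your treatment here is also unsupported. ``The edge bound carried inductively'' is the edge bound of \Cref{thm:one-iteration-shortcutting}. That bound is itself derived from this observation through \Cref{lem:number-of-edges-added}, and it only gives $m_0^{1+o(1)}$, which is too weak for $\tilde O(m_0hb^2)$. Your alternative, $\tilde O(m_0)$, is not justified for the \ref{item:shortcut-step-4}--\ref{item:shortcut-step-5} edges.

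The paper instead uses the constant term of $\lambda$, which your argument never uses except to count its own contribution. Since $\lambda(v)\ge\frac{m_0hb+nhb^2}{n}$ on $S_0$, the inductive bound $\Lambda=\tilde O(m_0hb+nhb^2)$ shows that the original vertices in all searches of one iteration number at most $\tilde O(\Lambda hb)\cdot\frac{n}{m_0hb+nhb^2}=\tilde O(nhb)$. By \ref{item:negative-edge-invariant-1}, this also bounds the new edges in $G[S_0]$, since each such vertex contributes at most one \ref{item:shortcut-step-4}/\ref{item:shortcut-step-5} edge. Multiplying by $\kappa$ and summing over $O(\log n)$ iterations gives the $\tilde O(nhb^2)$ bound on the degree growth $R_t$, which is where the $nhb^2$ term in $\Lambda$ comes from.
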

	\begin{proof}
		We first claim the following bound on $\Lambda$:
		
		\begin{subclaim}
			Suppose that the $\tilde O(b)$ term in the definition of $\lambda(v)$ is large enough. Then, $\Lambda\le\tilde O(m_0hb+nhb^2)$.
		\end{subclaim}
		\begin{subproof}
			Suppose that the $\tilde O(\Lambda hb)$ bound in \Cref{lem:multi-hop-forward-backward-search-weighted} is bounded by $\log^Cn\cdot\Lambda hb$ for a large enough constant $C>0$. We set the $\tilde O(hb)$ term in the definition of $\lambda(v)$ to be $h\log^{C+1}n\cdot b$.
			
			Let $\Lambda_t=\sum_{v\in V}\lambda(v)$ be the total weight of vertices after $t$ iterations of shortcutting. Recall that $\lambda(v)$ has three terms: one is a constant $\frac{m_0hb+nhb^2}{n}$, one based on $\deg_{G[S_0]}(v)$, and one based on $\textup{count}(v)$. We analyze the increase of the latter two terms separately. In particular, let $R_t$ denote the total increase in the $\tilde\Theta(hb)\cdot\deg_{G[S_0]}(v)$ term after $t$ iterations of shortcutting and let $\Lambda'_t=\Lambda_t-R_t-(m_0hb+nhb^2)$. We will prove $\Lambda'_t\le\tilde{O}(m_0hb)$ and $R_t=\tilde{O}(nhb^2)$.

            First, we bound $R_t$.
            Recall that in \ref{item:shortcut-step-4} and \ref{item:shortcut-step-5}, we add new edges between original vertices; this is the only cause for $\deg_{G[S_0]}(\cdot)$ increasing. The total number of edges added at iteration $t$ between the original vertices is at most the total number of original vertices in the forward and backward searches in the previous iteration. By \Cref{lem:multi-hop-forward-backward-search-weighted}, we have
            \begin{align*}
                \sum_{r\in N}\sum_{v\in V_r^{out}\cap S_0}\frac{m_0hb+nhb^2}{n}\le
                \sum_{r\in N}\sum_{v\in V_r^{out}\cap S_0}\lambda(v)\le \tilde O(\Lambda_{t-1} hb). 
            \end{align*}
            By the bound on $\Lambda_{t-1}$ in the previous iteration, we know that the total number of edges added between original vertices is at most $\tilde{O}(nb)$. Since there are $\Theta(\log{n})$ iterations and $R_t$ is scaled up by $\tilde\Theta(hb)$, we have that $R_t$ is upper bounded by $\tilde O(nhb^2)$.
            
            Next, we prove by induction on $t$ that
			\[ \Lambda'_t\le \left(1+\frac1{\log n}\right)^t\cdot h\log^{C+1}n\cdot b\cdot 2m_0 .\]
			Since there are only $t=O(\log n)$ iterations of shortcutting, the claim follows.
			
			For the base case $t=0$, we have
			\[ \Lambda'_0=\sum_{v\in S_0}h\log^{C+1}n\cdot b\cdot\deg_{G[S_0]}(v)=h\log^{C+1}n\cdot b\cdot 2m_0 ,\]
			so assume that $t\ge1$. For a vertex $v\in S_0$, let $\textup{count}^-_t(v)$ and $\textup{count}^+_t(v)$ be the number of times $v$ is in the forward and backward search of some negative vertex on iteration $t$ of shortcutting, respectively. By definition of $\Lambda'_t$, we have
			\[ \Lambda'_t=\Lambda'_{t-1}+\sum_{v\in S_0}\bigg( \sum_{w\in N^{\textup{out}}(v)\cap S_0}\textup{count}^-_t(w)+\sum_{u\in N^{\textup{in}}(v)\cap S_0}\textup{count}^+_t(u) \bigg) .\] 
			We first focus on the $\textup{count}^-_t(w)$ terms. By swapping the order of summation, we have
			\begin{align*}
				\sum_{v\in S_0} \text{ } \sum_{w\in N^{\textup{out}}(v)\cap S_0}\textup{count}^-_t(w)=\sum_{w\in S_0} \text{ } \sum_{v\in S_0: w\in N^{\textup{out}}(v)}\textup{count}^-_t(w)=\sum_{w\in S_0}\deg_{G[S_0]}^-(w)\,\textup{count}^-_t(w).
			\end{align*}
			Each pair of vertices $(w,r)$ satisfying $r\in N$ and $w\in \tilde V_r^{\textup{in}}\cap S_0$ contributes exactly $\deg_{G[S_0]}^-(w)$ to the summation above, so the summation also equals
			\[ \sum_{r\in N}\sum_{w\in\tilde V_r^{\textup{in}}\cap S_0}\deg_{G[S_0]}^-(w) .\]
			By definition of $\lambda(w)$, we have $\deg^-_{G[S_0]}(w)\le\lambda(w)/(h\log^{C+1}n\cdot b)$, so we have the final inequality
			\[ \sum_{v\in S_0}\sum_{w\in N^{\textup{out}}(u)\cap S_0}\textup{count}^-_t(w)\le\frac1{h\log^{C+1}n\cdot b}\sum_{r\in N}\sum_{w\in\tilde V_r^{\textup{in}}\cap S_0}\lambda(w) .\]
			Repeating this argument on the $\textup{count}^+_t(w)$ terms, we conclude that
			\[ \Lambda'_t\le\Lambda'_{t-1}+\frac1{h\log^{C+1}n\cdot b}\sum_{r\in N}\bigg(\sum_{w\in\tilde V_r^{\textup{out}}\cap S_0}\lambda(w)+\sum_{w\in\tilde V_r^{\textup{in}}\cap S_0}\lambda(w)\bigg) .\]
			By \Cref{lem:multi-hop-forward-backward-search-weighted} with the bound $\log^Cn\cdot\Lambda' hb$, we obtain
			\[ \Lambda'_t\le\Lambda'_{t-1}+\frac1{h\log^{C+1}n\cdot b}\cdot\log^Cn\cdot\Lambda'_{t-1} hb=\left(1+\frac1{\log n}\right)\Lambda'_{t-1} ,\]
			completing the induction.
		\end{subproof}

        By the subclaim above, we know $\Lambda\le \tilde O(mhb^2)$ since $n\le \tilde{O}(m_0)$. We now finish the proof of \Cref{obs:three-bounds-on-search-sizes}. Since $u\in S_0$, we have $\lambda(u)\ge \sum_{v\in N^{\textup{out}}(u)\cap S_0}\textup{count}^-(v)$, and together with \Cref{lem:multi-hop-forward-backward-search-weighted}, we obtain \begin{align*}
			\sum_{r\in N}\sum_{u\in \tilde{V}^{\textup{out}}_r\cap S_0}\sum_{v\in N^{\textup{out}}(u)\cap S_0}\textup{count}^-(v)\le \sum_{r\in N}\sum_{u\in \tilde{V}^{\textup{out}}_r\cap S_0}\lambda(u)\le \tilde{O}(\Lambda h b)=\tilde{O}(m_0h^2b^3),
		\end{align*}
		and similarly for the backward searches. Similarly, since $\lambda(u)\ge bh\cdot\deg_{G[S_0]}(u)$, we obtain
		\begin{align*}
			\sum_{r\in N}\sum_{u\in \tilde{V}^{\textup{out}}_r\cap S_0}hb\cdot\deg_{G[S_0]}(u)\le \sum_{r\in N}\sum_{u\in \tilde{V}^{\textup{out}}_r}\lambda(u)\le \tilde{O}(\Lambda hb)=\tilde{O}(m_0h^2b^3).
		\end{align*}
		Rearranging gives the desired bound, and the proof is again similar for backward searches.
	\end{proof}

	\subsection{Number of Edges Added}\label{sec:unfolding}
	
	This section proves the final property: the number of edges added in each iteration is $\tilde O(m_0b^2)$.
	
	\begin{lemma} \label{lem:number-of-edges-added} The total number of edges added in each iteration is upper bounded as follows:
		\begin{itemize}
			\item $\sum_{r\in N} |N^{\textup{out}}(V_r^{\textup{out}})|\le \tilde O(m_0b^2)$.
			\item $\sum_{r\in N} |N^{\textup{in}}(V_r^{\textup{in}})|\le \tilde O(m_0b^2)$.
		\end{itemize}
	\end{lemma}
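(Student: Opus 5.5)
We prove the first bound; the second is symmetric (reverse all edges and swap $V_r^{\textup{out}}$ with $V_r^{\textup{in}}$, and $\textup{count}^+$ with $\textup{count}^-$). We bound $\sum_{r\in N}|N^{\textup{out}}(V_r^{\textup{out}})|$ by the number of pairs $(r,(u,w))$ with $u\in V_r^{\textup{out}}$ and $(u,w)\in E$. We then \emph{charge} each such pair to a pair $(r,(\tilde u,\tilde w))$ where $(\tilde u,\tilde w)$ is an edge of $G[S_0]$ and $\tilde u\in\tilde V_r^{\textup{out}}\cap S_0$. The cases are split by the levels $i(u)$ and $i(w)$.

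\textbf{Case $u,w\in S_0$.} The pair is charged to itself. Since $V_r^{\textup{out}}\subseteq\tilde V_r^{\textup{out}}$, the total number of such pairs is at most $\sum_r\sum_{v\in\tilde V_r^{\textup{out}}\cap S_0}\deg_{G[S_0]}(v)$. By the volume bullets of \Cref{obs:three-bounds-on-search-sizes}, this is $\tilde O(m_0b^2)$.

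\textbf{Case $u$ or $w$ Steiner.} Here we unfold the edge. By \ref{item:steiner-invariant-1}, $u$ and $w$ lie in different levels. By \ref{item:steiner-invariant-3}, every negative edge, and in particular the tail of $r$'s edge, is in $S_0$. Suppose first that the higher-level endpoint is $u$. Take the $(r,u)$-path witnessing $d^1(r,u)<-\Delta_r'$; its edge entering $u$ and the edge $(u,w)$ give, via \ref{item:steiner-invariant-2}, a replacement path through $S_{<i(u)}$ that has one extra negative hop. The inequality $\omega(u',w)\ge\omega(u,w)$ in \ref{item:steiner-invariant-2} shows that the prefix up to the new last tail $u'$ has weight at most the old prefix. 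It is therefore still below $-\Delta_r'$, so $u'$ lies on a low-weight path from $r$.

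If instead $w$ is the higher-level endpoint, we use the symmetric half of \ref{item:steiner-invariant-2} on the edge $(u,w)$ together with an out-edge of $w$. The replacement path then has first edge $(u,u'')$ with $\omega(u,u'')\ge\omega(u,w)$, and we continue with $(u,u'')$.

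We iterate this unfolding. Each step strictly lowers the maximum level of the current edge. Hence after at most $\tau=O(\log n)$ steps we reach an edge $(\tilde u,\tilde w)$ with both endpoints in $S_0$, and this edge lies at the end of a path from $r$ with at most $1+\tau\le h+1$ negative hops and weight $<-\Delta_r$. We also have to check that the reweightings $\phi_{(i,j]}$ do not break the inequalities used here. That is exactly what \Cref{lem:maintain-invariant} is for: the invariants are asserted in the current weights, and the auxiliary \ref{item:shortcut-step-6} edges supply the hops counted by $\tilde d$. By the defining property in \Cref{lem:multi-hop-forward-backward-search-weighted}, it follows that $\tilde u\in\tilde V_r^{\textup{out}}\cap S_0$.

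\textbf{Main obstacle: charging multiplicity.} The main difficulty is bounding how many pairs $(r,(u,w))$ are charged to the same pair $(r,(\tilde u,\tilde w))$. The unfolding map is deterministic given the \ref{item:steiner-invariant-2} certificates, and each unfolding step adds a Steiner vertex $\tilde r'$ created from some negative vertex $r'$ in an earlier iteration. Such a $\tilde r'$ has an \ref{item:shortcut-step-2} edge to $\tilde w$ only when $\tilde u\in V_{r'}^{\textup{out}}$, and an \ref{item:shortcut-step-3} edge from $\tilde u$ only when $\tilde w\in V_{r'}^{\textup{in}}$.

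So the fan-in to $(\tilde u,\tilde w)$ is controlled by $\textup{count}^+(\tilde u)$, respectively by $\textup{count}^-(\tilde w)$ summed over $\tilde w\in N^{\textup{out}}(\tilde u)\cap S_0$. These are exactly the quantities that enter $\lambda(\tilde u)$. Summing over $r$ and $\tilde u\in\tilde V_r^{\textup{out}}\cap S_0$, I would aim to bound the total charge by $\sum_r\sum_{\tilde u\in\tilde V_r^{\textup{out}}\cap S_0}\lambda(\tilde u)/(hb)$. Through \Cref{lem:multi-hop-forward-backward-search-weighted} and $\Lambda\le\tilde O(m_0hb+nhb^2)$, this gives $\tilde O(m_0b^2)$.

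The delicate point is to show that multi-level unfoldings do not multiply these counts across levels. I would try to prove this by letting each unfolding step be absorbed into the count of a single original edge, via the $S_0$ restriction in the count terms, rather than compounding over the $\tau$ levels.
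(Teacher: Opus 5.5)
Your proposal has a genuine gap. It sits exactly at the multiplicity bound you leave open, and your first reduction is what causes it.

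\textbf{Why counting edges fails.} Replacing $|N^{\textup{out}}(V_r^{\textup{out}})|$ by the number of pairs $(r,(u,w))$ with $u\in V_r^{\textup{out}}$ means bounding the out-volume of the searches in the current graph, Steiner degrees included. Steiner vertices have $\lambda=0$, so nothing in the weighted betweenness bound limits how many searches contain one. The natural control is then the $\tilde O(mb)$ volume bound of \Cref{lem:forward-backward-search-weighted}, in terms of the current $m$, which is exactly the blowup this lemma must avoid. Concretely, the fiber of your charging map over $(r,(\tilde u,\tilde w))$ is indexed jointly by the identity of $u$ and the identity of $w$. Here $u$ is a Steiner vertex whose creation-time forward search contained $\tilde u$, and $w$ is one whose backward search contained $\tilde w$. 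So the fiber can have size about $(1+\textup{count}^+(\tilde u))(1+\textup{count}^-(\tilde w))$. For example, a level-1 vertex $\tilde r_1$ with $\tilde u\in V_{r_1}^{\textup{out}}$ gets the edge $(\tilde r_1,\tilde w)$. Then every level-2 vertex $\tilde r_2$ with $\tilde w\in V_{r_2}^{\textup{in}}$ gets an (S3) edge $(\tilde r_1,\tilde r_2)$, and all these edges can unfold to $(\tilde u,\tilde w)$. Such products never occur in $\lambda$. Even the single factor $\textup{count}^+(\tilde u)$ is not a term of $\lambda(\tilde u)$, which contains $\textup{count}^+$ of the \emph{in}-neighbors of $\tilde u$. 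So the bound $\sum_r\sum_{\tilde u}\lambda(\tilde u)/(hb)$ you aim for does not follow from \Cref{obs:three-bounds-on-search-sizes}.

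\textbf{What is missing.} First, count distinct neighbors $v\in N^{\textup{out}}(V_r^{\textup{out}})$, so the identity of $u$ never has to be recovered. Second, unfold on both sides. The paper alternately applies \Cref{lem:unfolding} to a forward path $r\leadsto u^\sigma$ of weight $<-\Delta_r$ and to a backward path $v^\sigma\leadsto r(v)$ of weight $<\Delta_{r(v)}$. The backward path is measured in the weights of the iteration in which $v$ was created, which is where \Cref{lem:unfold-to-original-graph} is needed; your remark about current weights covers only the forward side. Your one-sided unfolding discards this information: when $w$ is the higher endpoint you pass to $(u,u'')$ and keep nothing tying $\tilde w$ to $w$. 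The two-sided version ends at an edge $(\tilde u,\tilde v)$ of $G[S_0]$ with $\tilde u\in\tilde V_r^{\textup{out}}\cap S_0$ and $\tilde v\in\tilde V_{r(v)}^{\textup{in}}\cap S_0$. Since $v$ is determined by which earlier backward search contained $\tilde v$, each triple $(r,\tilde u,\tilde v)$ is charged at most $1+\textup{count}^-(\tilde v)$ times. The resulting sum $\sum_{r}\sum_{\tilde u\in\tilde V_r^{\textup{out}}\cap S_0}\sum_{\tilde v\in N^{\textup{out}}(\tilde u)\cap S_0}(1+\textup{count}^-(\tilde v))$ is exactly what the degree and count bullets of the observation bound by $\tilde O(m_0b^2)$.

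A smaller point: (I3) can be applied at $u$ only once the edge entering $u$ on the path comes from a lower level, which \Cref{lem:unfolding-new} arranges first. The resulting hop count is $O(\tau^4)$ rather than $1+\tau$, which is still below $h=\Theta(\log^{10}n)$.
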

	
	We first prove the following lemma, which states that any path from a negative vertex $r$ to $v\in S_j$ can be ``unfolded'' into a path from $r$ to $v$ such that the final edge $(u,v)$ in the path lies in $S_{<j}$, where we abuse notation and let $S_{<0} = S_0$. The idea is to iteratively apply the unfolding lemma to charge the vertices $v\in N^{\textup{out}}(V^{\textup{out}}_r)$ to edges between original vertices which are on the boundary of $\tilde{V}_r^{\textup{out}}$. Our weights chosen for the betweenness reduction and searches then guarantee that the number of times these edges are charged is $\tilde O(m_0b^2)$. Charging to edges between original vertices is the key to not blowing up the number of edges through iterative shortcutting and recursion: the number of edges added always depends only on $m_0$ instead of $m$.

	\begin{lemma}[Unfolding Lemma]\label{lem:unfolding-new}
		The following (symmetric) statements hold:
		\begin{itemize}
			\item Suppose that there is an $h$-hop path $P$ in $G(t)$ from $r\in N$ to $u$ and let $(w,u)$ be the last~edge in $P$. Then there exists an $(h+t)$-hop path $P'$ in $G(t)$ from $r$ to $u$ of the same weight as~$P$ such that the final edge $(w',u)$ in $P'$ satisfies $i(w')<i(u)$ if $i(u)>0$ and $i(w')=0$ if $i(u)=0$. 
			\item Suppose that there is an $h$-hop path $P$ in $G(t)$ from $u$ to $r\in N$ and let $(u,w)$ be the first edge in $P$. Then there exists an $(h+t)$-hop path $P'$ in $G(t)$ from $u$ to $r$ of the same weight as $P$ such that the first edge $(u,w')$ in $P'$ satisfies $i(w')<i(u)$ if $i(u)>0$ and $i(w')=0$ if $i(u)=0$. 
		\end{itemize}
	\end{lemma}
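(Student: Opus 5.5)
We prove the first statement; the second is symmetric, obtained by reversing all edge directions and using the second half of invariant \ref{item:steiner-invariant-2}. The argument is an induction that repeatedly replaces the last edge of the path using invariant \ref{item:steiner-invariant-2}. Each replacement lowers the level of the predecessor of $u$. Since levels lie in $\{0,1,\dots,t\}$, at most $t$ replacements are needed, and each adds at most one hop. This gives the $h+t$ hop budget.

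\textbf{Case $i(u)>0$.} Let the current path be $P$ with last edge $(w,u)$, and suppose $i(w)\ge i(u)$. By invariant \ref{item:steiner-invariant-1}, $i(w)\ne i(u)$, so $i(w)>i(u)>0$ and $w$ is a Steiner vertex. The vertex $r$ is original by invariant \ref{item:steiner-invariant-3}, so $w\neq r$ and $P$ has a predecessor edge $(x,w)$. We want to apply invariant \ref{item:steiner-invariant-2} at $w$ to the pair $(x,w),(w,u)$. For this we need $x,u\in S_{<i(w)}$. We have $u\in S_{<i(w)}$ already.

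For $x$, we first choose $P$ canonically. Among all paths of the same weight and hop budget, take one that is lexicographically minimal in its sequence of levels, read backwards from $u$, or a similar potential. If $i(x)\ge i(w)$, we first apply the same procedure to the prefix ending at $w$ (induction on the position in the path) to make the edge into $w$ come from a lower level. This is the main obstacle. My fallback is a cleaner induction on the maximum level $L$ appearing among the internal vertices of $P$. Repeatedly pick a vertex $w$ on $P$ with $i(w)=L>0$. Its two neighbours on $P$ have level $<L$: they are not in $S_L$ by \ref{item:steiner-invariant-1} and not above $L$ by maximality. Replace the segment $x\to w\to y$ by the one-negative-edge path of invariant \ref{item:steiner-invariant-2}, which uses only vertices of $S_{<L}$. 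Weight is preserved and one hop is added. Handling all occurrences of level $L$ might add many hops, so I would apply this only to the last vertex before $u$. This gives a chain of replacements in which the predecessor's level strictly drops each time, at most $t$ of them.

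After applying invariant \ref{item:steiner-invariant-2}, we obtain a path $Q$ from $x$ to $u$ inside $S_{<i(w)}$ with exactly one negative edge and weight $\omega(x,w)+\omega(w,u)$. Splicing $Q$ in place of $x\to w\to u$ keeps the weight and adds one hop. The new last edge $(w_1,u)$ has $i(w_1)<i(w)$. Iterate until the predecessor's level drops below $i(u)$. The strict decrease bounds the number of iterations by $t$.

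\textbf{Case $i(u)=0$.} We run the same iteration, continuing while the predecessor has positive level. Each step again lowers the predecessor's level until it reaches level $0$, at most $t$ steps.

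Throughout, we should check that $G(t)$, being before the \textsc{MaintainInvariant} update, still satisfies invariants \ref{item:negative-edge-invariant-1}--\ref{item:steiner-invariant-3}, as guaranteed by \Cref{lem:maintain-invariant}. We should also check that hop counts use the frozen negative set, so that the single negative edge of $Q$ counts as exactly one added hop.
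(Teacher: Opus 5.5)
Your local move is the right one: splice in the one-negative-edge path $Q$ from (I3) at a Steiner vertex whose two neighbours on the path are both at lower level. The two replaced edges are incident to a Steiner vertex, hence non-negative by (I4), so each splice costs exactly one hop. But there is a genuine gap at the point you yourself call the main obstacle. When the predecessor $x$ of $w$ has $i(x)>i(w)$, you never pin down where to unfold, and your hop count does not cover the extra work this requires. Your bound of $t$ counts only splices performed at the current predecessor of $u$. If you first repair the prefix ending at $w$, those splices are uncounted. Moreover, after each splice the new predecessor $w_1$ can again have a higher-level predecessor, so the obstacle recurs. The global-maximum fallback does not fix this: a maximum-level vertex can sit far from $u$, and unfolding it leaves the predecessor of $u$ unchanged. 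Repeatedly unfolding global maxima can therefore take many more than $t$ splices. Also, ``apply it only to the last vertex before $u$'' is not an instance of that induction, since that vertex need not have maximum level. The lexicographic canonical choice likewise needs a counting argument, because a lexicographically smaller unfolding costs a hop and may exceed the $h+t$ budget.

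The missing idea is the right vertex to unfold together with the right potential. Write the path backwards from $u$ as $u_0=u,u_1=w,u_2,\dots$ and unfold at the \emph{first local maximum} $u_\ell$, the top of the maximal run $i(u_0)<i(u_1)<\cdots<i(u_\ell)$. The run is strict by (I2), and it ends before $r$ because $r\in S_0$. Both neighbours of $u_\ell$ are lower, so (I3) applies and yields $Q\subseteq S_{<i(u_\ell)}$. The endpoints of the negative edge of $Q$ lie in $S_0$ by (I4), so the increasing run of the new path must stop at $u_{\ell-1}$ or inside $Q$. Hence the new first local maximum has level strictly less than $i(u_\ell)$. Thus the level of the first local maximum is a potential in $\{1,\dots,t\}$ that drops with every splice, which bounds the number of splices by $t$ and the hop count by $h+t$. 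Your recursive ``repair the prefix first'' procedure, if carried out, performs exactly this sequence of splices. What bounds its length, however, is this potential, not the descent of the level of $u$'s predecessor.
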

	\begin{proof}
		We only prove the first statement; the second one is symmetric. Let $u_0=u$, and label the vertices on path $P$ backwards from $u$ as $w=u_1,u_2,\ldots$. Let ${\ell}$ be the first index such that $i(u_{\ell+1})<i(u_{\ell})$; note that together with invariant \ref{item:steiner-invariant-1} this implies $i(u_{\ell})>i(u_{\ell-1})>\cdots>i(u_1)>i(u_0)$. Consider the edges $(u_{\ell+1},u_{\ell})$ and $(u_{\ell},u_{\ell-1})$; since $i(u_{\ell+1})<i(u_{\ell})$ and $i(u_{\ell-1})<i(u_{\ell})$, we know by invariant \ref{item:steiner-invariant-2} that there is some $1$-hop path $Q$ from $u_{\ell+1}$ to $u_{\ell-1}$ of weight $\omega(u_{\ell+1},u_{\ell})+\omega(u_{\ell},u_{\ell-1})$ such that all vertices in $Q$ are in $S_{<i(u_{\ell})}$. Replace the two edges $(u_{\ell+1},u_{\ell})$ and $(u_{\ell},u_{\ell-1})$ in $P$ with the 1-hop path $Q$ to obtain the updated path $P^{\textup{new}}$.
		
		With our updated path $P^{\textup{new}}$, we start at $u^{\textup{new}}_0=u$ and label the vertices on $P^{\textup{new}}$ backwards from $u$ as $u^{\textup{new}}_1,u^{\textup{new}}_2,\ldots$. 
		Let $\ell'$ be first index such that $i(u^{\textup{new}}_{\ell'+1})<i(u^{\textup{new}}_{\ell'})$; we claim $i(u_{\ell'}^{\textup{new}})<i(u_{\ell})$. To prove the claim, observe that $(u_1,\ldots,u_{\ell-1})=(u^{\textup{new}}_1,\ldots,u^{\textup{new}}_{\ell-1})$ are the same vertices before and after updating $P$. We will show that vertices $u^{\textup{new}}_{\ell},\ldots,u^{\textup{new}}_{\ell'}$ lie in $Q$, and hence are in $S_{<i(u_{\ell})}$. First, $u_{\ell}^{\textup{new}}$ is in $Q$ since $P^{\textup{new}}$ was constructed as the concatenation of $Q$ and $(u_{\ell-1},\ldots,u_1)$. Furthermore, path $Q$ always contains a vertex in $S_0$ since endpoints of negative vertices are in $S_0$ (see \ref{item:steiner-invariant-3}). Thus, the entire subpath $u^{\textup{new}}_{\ell},\ldots,u^{\textup{new}}_{\ell'}$ must lie in $Q$ and we have $i(u^{\textup{new}}_{\ell'})<i(u_{\ell})$.
		
		Given that $i(u^{\textup{new}}_{\ell'})<i(u_{\ell})$, we can iteratively unfold edges $(u^{\textup{new}}_{\ell'+1},u^{\textup{new}}_{\ell'})$ and $(u^{\textup{new}}_{\ell'},u^{\textup{new}}_{\ell'-1})$ into a 1-hop path until eventually $i(u^{\textup{new}}_1)<i(u^{\textup{new}}_0)$ and we can no longer unfold. Then we found our path $P'=P^{\textup{new}}$ and $w'=u^{\textup{new}}_1$ satisfying our desired properties. The total number of unfolding steps is at most $t$ since $i(u_\ell)\le t$ at the beginning and decreases by at least one each iteration. Thus, the number of hops in $P^{\textup{new}}$ increased by at most $t$ throughout the unfolding process.
	\end{proof}
	
	\begin{corollary}\label{lem:unfolding}
		The following (symmetric) statements hold:
		\begin{itemize}
			\item 
			
			Suppose that there is an $h$-hop path $P$ in $G(t)$ from $r\in N$ to $u$ and an edge $(u,v)$ in $G(\tau)$. Then there is a vertex $u'$ such that (1) $i(u')<i(v)$ if $i(v)>0$ and $i(u')=0$ if $i(v)=0$ and (2) there exists an $h+O(\tau^3)$-hop path $P'$ in $G(t)$ from $r$ to $u'$ and edge $(u',v)$ in $G(\tau)$ satisfying $\omega_t(P')\le\omega_t(P)$ and $\omega_{\tau}(u',v)\ge\omega_{\tau}(u,v)$. 
			
			\item Suppose that there is an $h$-hop path $P$ in $G(t)$ from $v$ to $r\in N$ and an edge $(u,v)$ in $G(\tau)$. Then there is a vertex $v'$ such that (1) $i(v')<i(u)$ if $i(u)>0$ and $i(v')=0$ if $i(u)=0$ and (2) there exists an $h+O(\tau^3)$-hop path $P'$ in $G(t)$ from $v'$ to $r$ and edge $(u,v')$ in $G(\tau)$ satisfying $\omega_t(P')\le \omega_t(P)$ and $\omega_{\tau}(u,v')\ge\omega_{\tau}(u,v)$. 
		\end{itemize}
	\end{corollary}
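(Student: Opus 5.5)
We prove only the first statement; the second follows by reversing all edge directions and exchanging the roles of the forward and backward statements of \Cref{lem:unfolding-new} and the two halves of invariant \ref{item:steiner-invariant-2}. The plan is to reduce the unfolding of an \emph{edge} $(u,v)$ in $G(\tau)$ to the path unfolding of \Cref{lem:unfolding-new}, applied repeatedly. The one complication is that $P$ lives in $G(t)$ while the edge lives in $G(\tau)$. So we track weights under both $\omega_t$ and $\omega_\tau$, which differ by the potential $\phi_{(\cdot,\cdot]}$ applied in between.

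\textbf{Single step.} Suppose the condition on $u$ fails, i.e.\ $i(u)\ge i(v)$ with $i(v)>0$, or $i(u)>0$ with $i(v)=0$. In either case $i(u)>0$, so $u$ is a Steiner vertex. By invariant \ref{item:steiner-invariant-1}, $i(u)\ne i(v)$ unless both are original, so in fact $i(u)>i(v)$. The last edge $(w,u)$ of $P$ therefore enters a Steiner vertex from below, after applying \Cref{lem:unfolding-new} to $P$ if necessary, which costs at most $t$ extra hops and arranges $i(w)<i(u)$. Now invariant \ref{item:steiner-invariant-2}, applied at $u$ with in-edge $(w,u)$ and out-edge $(u,v)$, gives a 1-hop path $Q$ in $S_{<i(u)}$ from $w$ to $v$ with final edge $(u',v)$. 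This path satisfies $\omega(u',v)\ge\omega(u,v)$, and the $w\leadsto u'$ part of $Q$ has weight at most $\omega(w,u)$. Replacing the edge $(w,u)$ of $P$ by that prefix of $Q$ gives a path $P_1$ from $r$ to $u'$. It has at most one more hop than $P$, satisfies $\omega(P_1)\le\omega(P)$, and the new edge $(u',v)$ has $\omega(u',v)\ge\omega(u,v)$. Moreover $i(u')<i(u)$.

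\textbf{Iteration and hop count.} We repeat the single step until the condition on $u'$ holds. Since $i(\cdot)$ strictly decreases, there are at most $\tau$ rounds. Each round costs one hop from $Q$ plus up to $O(\tau)$ hops from re-invoking \Cref{lem:unfolding-new}. This gives $O(\tau^2)$ hops in total, safely within the claimed $O(\tau^3)$; the slack covers re-unfolding inside the prefix of $Q$, which may itself end at a Steiner vertex.

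\textbf{Main obstacle.} The delicate point is that the two weight functions do not match. Invariant \ref{item:steiner-invariant-2} holds in $G(\tau)$ after the data structure of \Cref{lem:maintain-invariant} has adjusted the Steiner edges, but $P$ is measured with $\omega_t$. For $\tau\ge t$, the Steiner vertex $u$ together with the edges around it must be interpreted in a common graph, and all substitutions must be carried out there. The paths are then transported using the fact that reweighting shifts the weight of every $x$-to-$y$ path by the same amount $\phi(x)-\phi(y)$. Because $P$, $P_1$ share the start $r$ and end at $u,u'$, we must compare $\phi_{(t,\tau]}(u)$ with $\phi_{(t,\tau]}(u')$. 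I would handle this by performing the whole argument on the combined quantity $\omega(P)+\omega(u,v)$, i.e.\ on walks from $r$ to $v$, since both endpoints $r$ and $v$ are fixed. Only at the end would I split off the last edge, using the separate inequality $\omega_\tau(u',v)\ge\omega_\tau(u,v)$ from \ref{item:steiner-invariant-2}. I expect that verifying the \textsc{Reweight} operation preserves this inequality across iterations is the hardest step, and I would try first to extract it from the detailed form \Cref{lem:maintain-invariant-detailed}.
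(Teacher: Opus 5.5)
\textbf{The gap is in your single step.} You apply invariant \ref{item:steiner-invariant-2} at $u$ to the pair $(w,u)$, $(u,v)$. But $(w,u)$ is the last edge of a path in $G(t)$, while $(u,v)$ is an edge of $G(\tau)$. The invariant is a statement about one graph, and between iterations $t$ and $\tau$ more than the weights change. The \textsc{MaintainInvariant} structure prunes $T^{\textup{in}}_{r(u)}$ and $T^{\textup{out}}_{r(u)}$, deletes Steiner edges, and creates new ones. So in general $(w,u)\notin G(\tau)$ and $(u,v)\notin G(t)$; for example, $(u,v)$ may be a new edge $(\tilde r,p)$ created after $p$ became a leaf.

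Your remedy of tracking the combined walk $r\leadsto u\to v$ with fixed endpoints only cancels potential shifts. It does not turn this into a walk in any single graph. Even if you obtained $Q$ from the invariant in $G(\tau)$, the spliced path $P[r,w]\circ Q[w,u']$ mixes edges of $G(t)$ and $G(\tau)$. The corollary, however, requires $P'$ to be a path in $G(t)$, measured by $\omega_t$. So the cross-graph transport you flag as the main obstacle is exactly the content of the corollary, and it is left unresolved. It is also not the inequality $\omega_\tau(u',v)\ge\omega_\tau(u,v)$ that is hard; \ref{item:steiner-invariant-2} in $G(\tau)$ gives you that directly.

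\textbf{The missing idea} is to route through the base negative vertex $r(u)$ and use the threshold $\Delta_{r(u)}$ to bridge the two graphs.
\begin{enumerate}
\item First make $i(w')<i(u)$ using \Cref{lem:unfolding-new}.
\item The $G(t)$-edge $(w',u)$ unfolds on the in-tree side into a $0$-hop path $R_1$ in $G(t)$ from $w'$ to $r(u)$, of weight $\omega_t(w',u)+\Delta_{r(u)}$.
\item The $G(\tau)$-edge $(u,v)$ unfolds on the out-tree side into a $1$-hop path from $r(u)$ to $v$. Its last edge $(u',v)$ satisfies $\omega_\tau(u',v)\ge\omega_\tau(u,v)$, and $u'$ is a non-leaf node of $T^{\textup{out}}_{r(u)}$.
\item Because non-leaf tree nodes stay strictly inside the search region, \Cref{lem:unfold-to-original-graph} transports the $r(u)\leadsto u'$ part back into an $O(\tau^3)$-hop path $R_2'$ in $G(t)$ with $\omega_t(R_2')\le-\Delta_{r(u)}$.
\item In $P'=Q[r,w']\circ R_1\circ R_2'$ the two $\Delta$ terms cancel, giving $\omega_t(P')\le\omega_t(P)$.
\end{enumerate}
This explains why the conclusion is an inequality rather than an equality. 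It also explains why the hop bound carries a $\tau^3$ term: your $O(\tau^2)$ count omits the cost of this transport.
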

	\begin{proof}
		Like before, we prove only the first statement. If $i(u)<i(v)$, we can just return the input path $P$ and original edge $(u,v)$. Otherwise, we will show that there is a vertex $u'$ such that $i(u')<i(u)$, an $h+O(\tau^2)$-hop path $P'$ in $G(t)$ from $r$ to $u'$, and an edge $(u',v)$ in $G(\tau)$ such that $\omega(P)+\omega(u,v)=\omega(P')+\omega(u',v)$ and $\omega(u',v)\ge\omega(u,v)$. Repeatedly applying this until $i(u')<i(v)$ takes at most $t$ iterations, and implies that there is an $h+O(\tau^3)$-hop path $P'$ from $r$ to $v$ such that the final edge $(u',v)$ satisfies $u'\in S_{<i}$ and $\omega(u',v)\ge\omega(u,v)$, as desired.

		To build $P'$, we first apply \Cref{lem:unfolding-new} to obtain an $(h+t)$-hop path $Q$ in $G(t)$ from $r$ to $u$ of the same weight as $P$ such that the final edge $(w',u)$ in $Q$ satisfies $i(w')\le i(u)$. If $i(w')=i(u)$ then $i(w')=i(u)=i(v)=0$ and we're done. Otherwise, we have edges $(w',u)$ and $(u,v)$ which satisfy $i(w')<i(u)$ and $i(v)<i(u)$. We wish to unfold them into a path between $w'$ and $v$. 
        
		By applying \ref{item:steiner-invariant-2} to $(w',u)$ in $G(t)$, there is a 0-hop path $R_1$ of weight $\omega_t(w',u)+\Delta_{r(u)}(t)$ in $G(t)$ from $w'$ to $r(u)$. By applying \ref{item:steiner-invariant-2} to $(u,v)$ in $G(\tau)$, there is a 1-hop path $R_2$ of weight $\omega_{\tau}(R_2)=\omega_{\tau}(u,v)-\Delta_{r(u)}$ in $G(\tau)$ from $r(u)$ to $v$ such that the final edge $(u',v)$ satisfies $\omega_{\tau}(u',v)\ge\omega_{\tau}(u,v)$, as desired. By \Cref{lem:unfold-to-original-graph}, this means there is a $O(\tau^3)$-hop path $R_2'$ in $G(t)$ from $r(u)$ to $u'$ of weight $\omega_t(R_2')\le-\Delta_{r(u)}$. If we define $P'$ as the concatenation $Q[r,w']\circ R_1\circ R_2'$ of paths in $G(t)$, the total weight is $\omega_t(P')\le \omega_t(Q)$. 
	\end{proof}

	Using the unfolding lemma (and its corollary), we bound the number of edges on the boundary of the forward searches; the proof for backward searches follows symmetrically. Consider some neighbor $v\in V\setminus N$ of a vertex $u\in V_r^{\textup{out}}$ in the search; our goal is to bound the number of such $v$.

	Let $r(u)$ and $r(v)$ be the negative vertices from which searches were conducted when Steiner vertices $u$ and $v$ were added. Let ${V}_{r(u)}^{\textup{out}},{V}_{r(u)}^{\textup{in}}$ and $\tilde{V}_{r(u)}^{\textup{out}},\tilde{V}_{r(u)}^{\textup{in}}$ respectively be the $1$-hop and $h$-hop forward and backward searches from $r(u)$ at the iteration when $u$ was added, and similarly for $r(v)$. Let $\Delta_{r(u)}$ denote the value which was used to define $\tilde{V}_{r(u)}^{\textup{out}},\tilde{V}_{r(u)}^{\textup{in}}$, and similarly for $\Delta_{r(v)}$.

	Our goal is to prove that there is an edge $(\tilde u,\tilde v)\in E$ with $\tilde u\in \tilde{V}_{r(u)}^{\textup{out}}\cap S_0$ and $\tilde v\in \tilde{V}_{r(v)}^{\textup{in}}\cap S_0$; we then charge vertex $v$ to the edge $(\tilde{u},\tilde{v})$. If some vertex $v$ is charged to edge $(\tilde u, \tilde v)$, we know that $\tilde{v}\in \tilde{V}_{r(v)}^{\textup{in}}$, so each edge $(\tilde{u},\tilde{v})$ is charged at most $\textup{count}^-(\tilde{v})$ times (i.e., the number of times $\tilde{v}$ is in the backward search from some negative vertex $r$ in previous iterations of shortcutting). The total number of charges can be upper bounded by $\sum_{\tilde{u}\in \tilde{V}_r^{\textup{out}}}\sum_{\tilde{v}\in N^{\textup{out}}(\tilde{u})}\textup{count}^-(\tilde{v})$. Summing over all searches $r\in N$, the total number of edges added is at most $O(m_0\cdot b^2)$ by \Cref{obs:three-bounds-on-search-sizes}.  
	
	
	\begin{claim}
		For each edge $(u,v)\in E$ where $u\in V_r^{\textup{out}}$ and $v\in V\setminus N$, there is an edge $(\tilde u, \tilde v)\in E$ such that $\tilde u\in \tilde{V}_{r(u)}^{\textup{out}}\cap S_0$ and $\tilde v\in \tilde{V}_{r(v)}^{\textup{in}}\cap S_0$.
	\end{claim}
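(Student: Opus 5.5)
The plan is to build the original edge $(\tilde u,\tilde v)$ by unfolding twice: once at the tail $u$, using the path certifying $u\in V_r^{\textup{out}}$, and once at the head $v$, using a path into $r(v)$. I expect the second unfolding to be the hard part.

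\textbf{Unfolding the tail.} Since $u\in V_r^{\textup{out}}$, there is a $1$-hop path $P$ from $r$ to $u$ with $\omega(P)=d^1(r,u)<-\Delta_r'$. I apply the first statement of \Cref{lem:unfolding} to $P$ and the edge $(u,v)$, repeatedly, until the tail lies in $S_0$. Each application replaces $u$ by a vertex $u'$ of strictly smaller level. It also returns a path $P'$ from $r$ to $u'$ with $\omega(P')\le\omega(P)$ and at most $1+O(\tau^3)$ hops, together with an edge $(u',v)$ of weight at least $\omega(u,v)$. Since $\tau=O(\log n)$, the hop count stays $\operatorname{polylog}(n)\le h$ once $h=\Theta(\log^{10}n)$ is large enough. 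The resulting path has weight less than $-\Delta_r$, so every vertex on it lies in $\tilde V_r^{\textup{out}}$ by the guarantee of \Cref{lem:multi-hop-forward-backward-search-weighted}. This gives $\tilde u\in\tilde V_r^{\textup{out}}\cap S_0$, reading $r(u)$ as the search $r$ in which $u$ was found.

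\textbf{Unfolding the head.} If $v\in S_0$, I take $\tilde v=v$ and must still certify that $v$ lies in a backward search. Otherwise $v$ is a Steiner vertex $\tilde{r}(v)$ added for $r(v)$, and the edge $(\tilde u,v)$ is an \ref{item:shortcut-step-3} edge. Its defining $0$-hop path gives a path from $\tilde u$ through some $\tilde v\in V^{\textup{in}}_{r(v)}\cup\{r(v)\}$ to $r(v)$, with $d^0(\tilde v,r(v))<\Delta'_{r(v)}$. I then apply the second statement of \Cref{lem:unfolding}, symmetrically, to replace $\tilde v$ by a level-$0$ vertex. Invariant \ref{item:steiner-invariant-2} ensures the weights only improve, so $\tilde v$ stays inside the $h$-hop backward search $\tilde V^{\textup{in}}_{r(v)}$. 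The two unfoldings must be composed so that the final edge joins two $S_0$ vertices. After the first unfolding the tail is already original, so the second unfolding only touches the head side, and the corollary's guarantee $\omega(u',v)\ge\omega(u,v)$ keeps the tail's path bounds intact.

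\textbf{Main obstacle.} The key difficulty is the head side: showing that $\tilde v$ truly lies in a search at the iteration where $r(v)$'s Steiner vertex was created, even though distances are now measured under later reweightings $\phi_{(i,j]}$. I would first control the potential changes using validity (weights of non-negative edges only grow) together with the clamped-threshold auxiliary edges of \ref{item:shortcut-step-6}. Those edges freeze $\Delta'_{r(v)}$ under reweighting, so a strict inequality at creation time persists; this is where the invariants maintained by \Cref{lem:maintain-invariant} are needed. If that fails, the fallback is to charge only to the membership of $\tilde v$ in the backward search at creation time, which is exactly what $\textup{count}^-(\tilde v)$ records, and so suffices for the charging argument in \Cref{obs:three-bounds-on-search-sizes}.
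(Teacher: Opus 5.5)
There is a genuine gap in your tail step. The first statement of \Cref{lem:unfolding} only guarantees $i(u')<i(v)$ when $i(v)>0$; it gives $i(u')=0$ only when the head is already original. Once the tail is below the head, the corollary's proof returns its input unchanged, so with the head $v$ held fixed you cannot iterate ``until the tail lies in $S_0$''.

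This is not merely a limitation of the lemma. The lower-level in-neighbours of a Steiner vertex $v$ are only the boundary vertices of the backward search of $r(v)$, i.e.\ the \ref{item:shortcut-step-3} edges as maintained by the data structure. So there is no reason for $v$ to have any in-neighbour in $S_0\cap\tilde V_r^{\textup{out}}$.

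Your head step inherits the problem. Applying the second statement of \Cref{lem:unfolding} with a tail of positive level only yields $i(v')<i(\tilde u)$, not $i(v')=0$, and then the tail must be unfolded again. The sentence ``after the first unfolding the tail is already original, so the second unfolding only touches the head side'' is exactly what fails. Your argument is fine when $u$ or $v$ is original, but not when both are Steiner vertices (which then lie at different levels, by \ref{item:steiner-invariant-1}).

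The missing idea is the alternating unfolding used in the paper. Keep a current edge $(u^\sigma,v^\sigma)$ with two certificates:
\begin{itemize}
\item a tail certificate: a path from the forward root to $u^\sigma$ of weight below $-\Delta$;
\item a head certificate: a path from $v^\sigma$ to $r(v)$ of weight below $\Delta_{r(v)}$, initialized from the \ref{item:shortcut-step-3} edge into $v$, which swaps $v$ for a vertex of the backward search of $r(v)$.
\end{itemize}
Then alternately apply the second statement of \Cref{lem:unfolding} (pushing the head below the tail) and the first (pushing the tail below the head). The monotonicity $\omega(u,v')\ge\omega(u,v)$, resp.\ $\omega(u',v)\ge\omega(u,v)$, keeps the certificate on the untouched side valid. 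Since $\max\{i(u^\sigma),i(v^\sigma)\}$ strictly drops at every half-step, both endpoints reach $S_0$ after at most $\tau$ steps, with $O(\tau\cdot\tau^3)=O(\tau^4)$-hop certificates.

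Relatedly, the reweighting issue you single out as the main obstacle is already absorbed into \Cref{lem:unfolding}, whose weight comparisons are stated in $G(t)$ versus $G(\tau)$. The real difficulty is the interplay between the levels of the two endpoints. Your choice to root the tail certificate at the current search root $r$, rather than at the creator $r(u)$ of $u$ as the paper's proof does, is compatible with this scheme.
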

	\begin{proof}
		If $i(u)=0$ and $i(v)=0$, we can set $\tilde{u}=u$ and $\tilde{v}=v$. Otherwise, assume $i(u)<i(v)$; the case of $i(u)>i(v)$ is symmetric. If $i(u)=0$, we find a path $\tilde{P}_v$ and edge $\tilde v\in N^{\textup{out}}(u)$ such that:
		\begin{itemize}
			\item $\tilde{P}_v$ is a $O(\tau^4)$-hop path in $G(i(v))$ from $\tilde{v}$ to $r(v)$
			\item the weight of path $\tilde{P}_v$ was less than $\Delta_{r(v)}$ when $v$ was added to the graph.
		\end{itemize}
		Setting $\tilde u=u$ gives the edge $(\tilde u,\tilde v)$ with the desired properties by definition. 
		
		If $i(u)\neq 0$, we find paths $\tilde{P}_u$, $\tilde{P}_v$, and edge $(\tilde u,\tilde v)\in E\cap (S_0\times S_0)$ such that:
		\begin{itemize}
			\item $\tilde{P}_u$ is a $O(\tau^4)$-hop path in $G(i(u))$ from $r(u)$ to $\tilde{u}$,
			\item the weight of path $\tilde{P}_u$ was less than $-\Delta_{r(u)}$ when $u$ was added to the graph, 
			\item $\tilde{P}_v$ is a $O(\tau^4)$-hop path in $G(i(v))$ from $\tilde{v}$ to $r(v)$
			\item the weight of path $\tilde{P}_v$ was less than $\Delta_{r(v)}$ when $v$ was added to the graph. 
		\end{itemize}
		These four properties imply that $\tilde u \in \tilde{V}_{r(u)}^{\textup{out}}$ and $\tilde v\in \tilde{V}_{r(v)}^{\textup{in}}$, as desired. We build these paths iteratively. At each iteration $\sigma\in\{1,\ldots, \tau\}$, we find two vertices $u^\sigma$ and $v^\sigma$ along with (i) a $O(\sigma \tau^3)$-hop path $P_u^\sigma$ from $r(u)$ to $u^\sigma$ and (ii) a $O(\sigma \tau^3)$-hop path $P_v^{\sigma}$ to $v^{\sigma}$ to $r(v)$.
		
		Our iterative process for defining the two paths is as follows:
		\begin{enumerate}
			\item If $\sigma=1$, we first initialize $v^{0}$ and $P_v^{0}$; otherwise, skip this paragraph. Since there is an edge from $u\in S_{i(u)}$ to $v\in S_{i(v)}$ and $i(u)<i(v)$, we know that $u$ was a neighbor of some vertex $v^{0}$ in the backward search $V_{r(v)}^{\textup{in}}$. This means that there is a 0-hop path $P^0_v$ in $G(i(v))$ from $v^{0}$ to $r(v)$ of weight $\omega(P^0_v)<\Delta_{r(v)}$ such that $\omega(u,v^{0})+\omega(P_v^0)\ge\Delta_{r(v)}$. 
			
			Apply \Cref{lem:unfolding} to path $P^{\sigma-1}_v$ and edge $(u^{\sigma-1},v^{\sigma-1})$ to obtain a vertex, denoted $v^{\sigma}$, along with path $P^{\sigma}_v$ in $G(i(v))$ from $v^{\sigma}$ to $r(v)$ and edge $(u^{\sigma-1},v^{\sigma})$. We have the following properties:
			\begin{itemize}
				\item $\omega_{i}(P^{\sigma}_v)\le \omega_{i}(P^{\sigma-1}_v)<\Delta_{r(v)}$,
				\item $\omega_{\tau}(u^{\sigma-1},v^{\sigma})\ge \omega_{\tau}(u^{\sigma-1},v^{\sigma-1})$, and
				\item $i(v^{\sigma})<i(u^{\sigma-1})$ if $i(u^{\sigma-1})>0$; otherwise $i(v^{\sigma})=0$ as well.
			\end{itemize}
			
			{If $i(u^{\sigma-1})=0$, then we have found our desired $(\tilde u,\tilde v)=(u^{\sigma-1},v^{\sigma})$ and we can terminate.}
			\item If $\sigma=1$, we first initialize $u^0$ and $P_u^0$; otherwise, skip this paragraph. Since there is an edge from $u\in S_{i(u)}$ to $v^{0}\in S_{i(v^0)}$ for some $i(v^0)<i(u)$, we know that $v^{0}$ was a neighbor of some vertex $u^0$ in the forward search $V^{\textup{out}}_{r(u)}$. This means that there is a $1$-hop path $P_u^0$ from $r(u)$ to $u^0$ of weight $\omega(P_u^0)<-\Delta_{r(u)}$ such that $\omega(P_u^0)+\omega(u^0,v^{1})\ge -\Delta_{r(u)}$. 
			
			Apply \Cref{lem:unfolding} to path $P^{\sigma-1}_u$ and edge $(u^{\sigma-1},v^{\sigma})$ to obtain a vertex, denoted $u^{\sigma}$, along with path $P^{\sigma}_u$ in $G(i(u))$ from $r(u)$ to $u^{\sigma}$ and edge $(u^{\sigma},v^{\sigma})$. We have the following properties:
			\begin{itemize}
				\item $\omega_i(P^{\sigma}_u)\le \omega_i(P^{\sigma-1}_u)<-\Delta_{r(u)}$,
				\item $\omega_{\tau}(u^{\sigma},v^{\sigma})\ge\omega_{\tau}(u^{\sigma-1},v^{\sigma})$,
				\item $i(u^{\sigma})<i(v^{\sigma})$ if $i(v^{\sigma})>0$; otherwise $i(u^{\sigma})=0$ as well.
			\end{itemize}
			
			{If $i(v^{\sigma})=0$, then we have found our desired $(\tilde u,\tilde v)$ and we can terminate.}
		\end{enumerate}
		In each iteration, $i(u^{\sigma})$ and $i(v^{\sigma})$ both decrease by at least $1$ and the algorithm terminates when $i(u^{\sigma})=0$ and $i(v^{\sigma})=0$. Since $i(u^0)\le \tau$ and $i(v^0)\le \tau$ by definition, the algorithm terminates in at most $\tau$ iterations. Thus, we found $(\tilde u,\tilde v)$ with the desired properties, completing the proof.
	\end{proof}
	
	\subsection{Maintaining Invariants}
	\label{sec:maintain-invariant}
	
	
	In this section, we show how to maintain the invariants used by the algorithm after reweighting happens, by proving the formal version of \Cref{lem:maintain-invariant}.
	
	
	\begin{lemma}[Formal version of \Cref{lem:maintain-invariant}]\label{lem:maintain-invariant-detailed}
		Let $G=(V,E)$ be a directed graph with no Steiner vertices. We can efficiently maintain a data structure $\textsc{MaintainInvariant}(G)$ that maintains shortcut edges in $G$, with the following operations:
		\begin{enumerate}
			\item $\textsc{Insert}(\vec{\Delta})$: Create a new set of Steiner vertices $S_{t}$ defined by $\vec{\Delta}$. The data structure will maintain non-negative shortcut edges to and from these Steiner vertices (as per \ref{item:shortcut-step-1}--\ref{item:shortcut-step-6}), so that the following holds. For every original vertex $r \in N$, define
			\begin{align*}
				V_r^{\textup{in}}(t)&=\{r\}\cup\{x\in V: d^0(x,r)<\Delta_r\},\\
				V_r^{\textup{out}}(t)&=\{r'\}\cup\{x\in V: d^1(r,x)<-\Delta_r\},
			\end{align*}
			using the current weight. For every vertex $v\in V_r^{\textup{out}}(t)$ and every $w\in N^{\textup{out}}(v)$, if $d^1(r,v)+\omega(v,w)\ge -\Delta_r$ and $d^1(r,v)+\omega(v,w)$ is the actual shortest path distance from $r$ to $w$,\footnote{Throughout this sub-section, we assume a canonical choice of shortest paths. This can be enforced, for example, by appending a random integer tag to each edge weight (which is used for tie breaking lexicographically), and then applying the isolation lemma. The details are omitted in this subsection for the sake of readability.} 
            then after any reweighting, there is a non-negative path of weight $d^1(r,v)+\omega(v,w)+\Delta_r(t)$ from Steiner vertex $\tilde r_t$ to $w$, where $\Delta_r(t)$ is the current relative potential of $\tilde r_t$ to $r$ (and $d^1(r,v)$ and $\omega(v,w)$ are with respect to the re-weighted graph).
			Similarly, for every vertex $v\in V_r^{\textup{in}}(t)$ and every $u\in N^{\textup{out}}(v)$, if $\omega(u,v)+d^0(v,r)\ge \Delta_r$ and $\omega(u,v)+d^0(v,r)$ is the actual shortest path distance from $u$ to $v$, then after any reweighting, there is a non-negative path of weight $\omega(u,v)+d^0(v,r)-\Delta_r(t)$ from $u$ to Steiner vertex $\tilde r_t$.
			\item $\textsc{Reweight}(\phi)$: Reweight the graph via a valid potential function $\phi$. The data structure will maintain Steiner edges so that they still satisfy invariants \
			\ref{item:steiner-invariant-1}--\ref{item:steiner-invariant-3} and the shortcut property above.
		\end{enumerate}
	\end{lemma}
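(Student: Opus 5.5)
The plan is to describe the data structure explicitly and then verify each operation by induction on the sequence of \textsc{Insert} and \textsc{Reweight} calls. For each Steiner vertex $\tilde r_t$ created by \textsc{Insert}, the structure stores the shortcut edges of \ref{item:shortcut-step-2}--\ref{item:shortcut-step-3} as \emph{symbolic} edges. Each such edge is tagged with the underlying path it represents: a canonical $1$-hop path $r\to r'\leadsto v\to w$ for out-edges, and a $0$-hop path $u\to v\leadsto r$ for in-edges. It also stores the offset $\Delta_r(t)$, i.e.\ the relative potential $\phi(\tilde r_t)-\phi(r)$, which is initialised to $\Delta_r'$. The auxiliary edges of \ref{item:shortcut-step-6} are stored as well. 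Their role is to pin $\phi(\tilde r_t)$ relative to $\phi(r)$ and $\phi(r')$ under any \emph{valid} reweighting. Both $(\tilde r,r)$ and $(r',\tilde r)$ are non-negative, so validity forces $\phi(\tilde r)-\phi(r)\le\Delta_r'$ and $\phi(r')-\phi(\tilde r)\le-\omega(r,r')-\Delta_r'$. Consequently the reweighted offset $\Delta_r(t)$ stays in the interval $[0,-\omega_\phi(r,r')]$. In the clamped cases it is pinned exactly, by the extra zero-weight edges.

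For \textsc{Insert}, the shortcut property holds by construction. For a boundary pair $(v,w)$ with $v\in V_r^{\textup{out}}(t)$ and $d^1(r,v)+\omega(v,w)\ge-\Delta_r$, the edge $(\tilde r_t,w)$ added in \ref{item:shortcut-step-2} has weight $d^1(r,v)+\omega(v,w)+\Delta_r\ge0$. The in-edges are symmetric. Invariants \ref{item:steiner-invariant-1}--\ref{item:steiner-invariant-3} also hold by construction. $\tilde r_t$ is only adjacent to vertices of $S_{<t}$. Every in/out pair $(u,\tilde r_t),(\tilde r_t,w)$ concatenates the two stored paths into a $1$-hop path through $(r,r')$ with total weight $\omega(u,\tilde r)+\omega(\tilde r,w)$, exactly as in \Cref{cl:distances-dont-decrease}. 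The last-edge inequality $\omega(v,w)\ge\omega(\tilde r,w)$ in \ref{item:steiner-invariant-2} holds because $\omega(\tilde r,w)-\omega(v,w)=d^1(r,v)+\Delta_r<0$ for $v\in V_r^{\textup{out}}$. The first-edge inequality is symmetric.

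For \textsc{Reweight}$(\phi)$, I would recompute each symbolic edge weight from its tagged path under the new potentials: $\omega_\phi(\tilde r,w)=\omega(\tilde r,w)+\phi(\tilde r)-\phi(w)$. This equals the reweighted path weight plus the new offset $\Delta_r(t)$, so the identity of the shortcut property is preserved automatically by telescoping. The identities in \ref{item:steiner-invariant-2} are likewise invariant under potentials, because both sides shift by $\phi(w)-\phi(v)$. The delicate point is that a stored edge might become negative, or the canonical shortest path from $r$ to $w$ might change, since reweighting changes which paths cross the threshold $-\Delta_r(t)$. I would handle this by replacing such edges lazily. If $(\tilde r,w)$ becomes negative, then the prefix $r\leadsto v$ now has weight below $-\Delta_r(t)$, so $w$ is effectively inside the search, and the required boundary edges are those out of $w$. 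The plan is to repeatedly push the edge forward along the canonical path, replacing $(\tilde r,w)$ by edges $(\tilde r,x)$ for $x\in N^{\textup{out}}(w)$ (and symmetrically backward for in-edges), until every stored edge is again non-negative and sits at the boundary. Termination and the fact that only non-negative real edges are created both follow from the pinned offset range established above.

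The main obstacle is this last step. I need to show that after pushing the edges the shortcut property covers every boundary pair $(v,w)$ of the \emph{new} weights, using a path (not necessarily a single edge) of the right weight, while still keeping the last-edge inequality of \ref{item:steiner-invariant-2} and \ref{item:steiner-invariant-1}. The intended argument is that validity of $\phi$ makes reweighted non-negative edges stay non-negative. Hence any new boundary pair lies on the canonical shortest path of some old boundary pair, and a chain of stored edges plus original non-negative edges reaches it with the exact weight. Making this rigorous needs the canonical tie-breaking mentioned in the footnote. I would also bound the total work by charging each push to a vertex entering a search, which \Cref{lem:number-of-edges-added} already controls.
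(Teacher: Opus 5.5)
Your proposal has a genuine gap: you have the direction of the failure reversed, so your \textsc{Reweight} procedure never repairs the invariant that actually breaks. The potential $\phi$ is valid for the whole current graph, including the shortcut edges and the (S6) auxiliary edges. You use exactly this to pin $\Delta_r(t)\in[0,-\omega(r,r')]$. By the same validity, a stored edge $(\tilde r_t,w)$ can never become negative. Nor can a vertex enter a search: for $x$ outside $V_r^{\textup{out}}(t)$, the canonical path $\tilde r_t\to w\leadsto x$ through the first boundary vertex $w$ has only non-negative edges and weight exactly $d^1(r,x)+\Delta_r(t)$. So your push-forward rule never fires, and the structure is left unchanged.

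What actually happens is that searches shrink. For an interior vertex $v$, the quantity $d^1(r,v)+\Delta_r(t)$ shifts by $\phi(\tilde r_t)-\phi(v)$, and validity of $\phi$ does nothing to keep it negative. If $v$ is the tree parent of a boundary vertex $w$, the last-edge inequality $\omega(v,w)\ge\omega(\tilde r_t,w)$ of (I3) is equivalent to $d^1(r,v)+\Delta_r(t)\le 0$. So this inequality fails for your stored edge. Only the path-weight identity in (I3) is potential-invariant. The two sides of the inequality shift by different amounts, $\phi(v)-\phi(w)$ and $\phi(\tilde r_t)-\phi(w)$. Under your own reading of the shortcut property (coverage of boundary pairs for the new weights), there is a second failure. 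A vertex $p$ that has just left the search has no non-negative path of weight $d^1(r,p)+\Delta_r(t)$ from $\tilde r_t$ in your structure. Every real out-edge of $\tilde r_t$ ends at an old boundary vertex, and none of these lies on a shortest path to $p$.

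The missing idea is to shrink the searches rather than grow them. The paper keeps the shortest-path trees $T_r^{\textup{out}}(t)$ and $T_r^{\textup{in}}(t)$. After each reweighting, every node $p\ne r,r'$ with $d_{T}(r,p)\ge-\Delta_r(t)$ becomes a leaf: its subtree is discarded and the single edge $(\tilde r_t,p)$ of weight $d_T(r,p)+\Delta_r(t)\ge 0$ replaces the edges to its former descendants. Now the parent of every leaf is either $r'$, where your bound $\Delta_r(t)\le-\omega(r,r')$ gives the inequality, or an internal node with tree distance below $-\Delta_r(t)$.

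Two further ingredients are absent from your plan. First, a removed edge $(\tilde r_t,w)$ may lie on the stored path of a higher-level Steiner edge. There it must be replaced by the non-negative path $(\tilde r_t,p)\circ T_r^{\textup{out}}(t)[p,w]$ of equal weight, processing levels in increasing order. Your tagged paths run through lower-level Steiner edges, and you do not handle this cascade. Second, efficiency, and the use of this lemma in \Cref{lem:unfolding}, rest on \Cref{lem:unfold-to-original-graph}. That lemma shows every surviving non-leaf node is reached in $G(t)$ by an $O(\tau^3)$-hop path of weight below the threshold at time $t$, so it lies in the original search $\tilde V_r^{\textup{out}}$. Your plan to charge pushes to vertices entering searches has nothing to charge. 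Appealing to \Cref{lem:number-of-edges-added} would also be circular, since that bound is itself derived from this unfolding.
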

	
	The non-trivial part of this data structure is to maintain the edge weight property of invariant \ref{item:steiner-invariant-2} about the first and the last edges of the unfolded path. Recall the invariant requires that, informally, for any non-negative two-edge path $u\to \tilde v\to w$, we can unfold it to a path from $u$ to $w$, where the weights of the first and the last edge on the path does not decrease after unfolding. This property holds if no reweighting happens, because in the construction, the vertex before $w$ in the unfolded path is in $V_r^{\textup{out}}(t)$, so the distance from $r$ to it on the path is less than $-\Delta_r$, which implies that $\omega(\tilde r,w)$ is less than the weight of the last edge. Similarly, the vertex after $u$ in the unfolded path is in $V_r^{\textup{in}}(t)$, so the distance from it to $r$ on the path is less than $\Delta_r$. The issue is that, after reweighting happens, these properties may break since $V_r^{\textup{out}}(t)$ and $V_r^{\textup{in}}(t)$ are defined on the original weights, as illustrated in \Cref{fig:maintain-invariant-example}.
	
	\begin{figure}[ht]
		\centering
		\includegraphics[width=\textwidth]{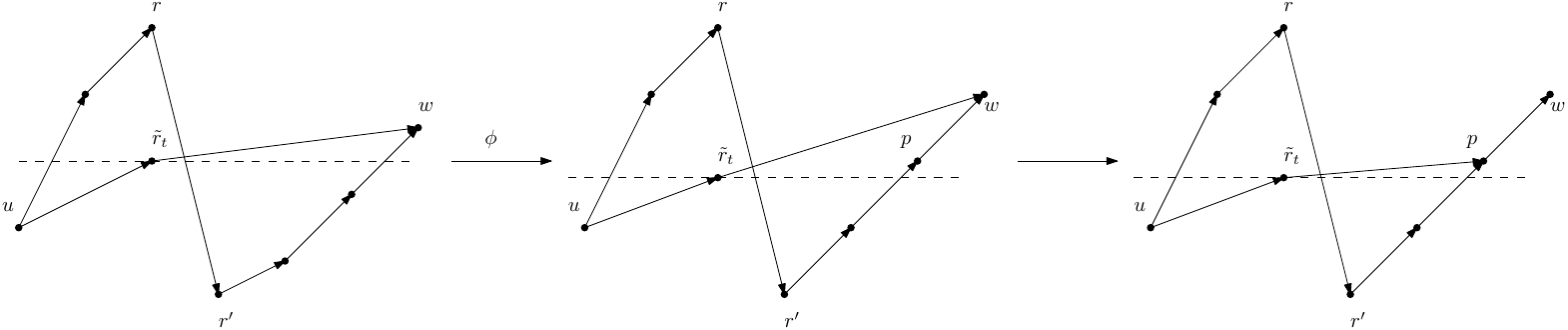}
		\caption{Illustration of the main issue and the intuition of our solution to maintain invariants after reweightings. The first two parts show that invariant \ref{item:steiner-invariant-2} may be violated after reweighting because the distance from $r$ to a vertex $p\in V_r^{\textup{out}}(t)$ may exceed $-\Delta_r(t)$. The final part shows our solution: If such case happens, we replace the shortcut edge from $\tilde r_t$ to $w$ by an edge from $\tilde r_t$ to $p$.}
		\label{fig:maintain-invariant-example}
	\end{figure}

	
	The intuition to fix this, in the bad case illustrated by this figure, where we can find another vertex $p$ on the path from $r'$ to $w$ with $d^1(r,p)\ge -\Delta_r$, is to use a non-negative edge of weight $d^1(r,p)+\Delta_r$ from $\tilde r_t$ to $p$ to replace the starting shortcut edge to $w$. The non-negative path $\tilde r_t\to p\leadsto w$ now acts like the starting shortcut edge $(\tilde r_t,w)$, so the progress of shortcutting is kept. Moreover, if we choose the first such $p$ on the path, then the distance from $r$ to the vertex before $p$ is less than $-\Delta_r$, so invariant \ref{item:steiner-invariant-2} holds. This requires looking at the unfolded paths from $r$ to vertices in the forward or backward searches, so our data structure stores these paths explicitly, as a shortest path tree to and from $r$. As we update the graph and its shortcut edges, we update this tree accordingly.
	
	We now formally state our data structure. For a Steiner copy $\tilde r_t$ of a vertex $r$ at level $t$, we maintain a scalar $\Delta_r(t)$ that stores the relative potential difference between it and the original vertex $r$, with two directed trees. These trees are meant to be shortest path trees to and from $r$. First, we maintain a directed tree $T_r^{\textup{out}}(t)$ from vertex $r$, with the following properties.
	\begin{itemize}
		\item All edges in $T_r^{\textup{out}}(t)$ are existing edges in the subgraph $G[S_{<t}]$. Apart from the edge $(r,r')$, all other edges have non-negative weights.
		\item Apart from $r$, $r'$, and the leaf nodes of this tree, the distance from $r$ to other vertices on this tree is less than $-\Delta_r(t)$.
	\end{itemize}
	Second, we also maintain a directed tree $T_r^{\textup{in}}(t)$ to vertex $r$, with the following properties.
	\begin{itemize}
		\item All edges in $T_r^{\textup{in}}(t)$ are existing edges in the subgraph $S_{<t}$, and they all have non-negative weights.
		\item Apart from $r$ and the leaf nodes of this tree, the distance from other vertices on this tree to $r$ is less than $\Delta_r(t)$.
	\end{itemize}
	
	With these trees and $\Delta_r(t)$, we define the shortcut edges between $\tilde r_t$ and $S_{<t}$ as follows.
	\begin{itemize}
		\item For every leaf node $w$ in $T_r^{\textup{out}}(t)$ with $d_{T_r^{\textup{out}}(t)}(r,w)\ge -\Delta_r$, add the edge $(\tilde r_t,w)$ of weight $d_{T_r^{\textup{out}}(t)}(r,w)+\Delta_r$.
		\item For every leaf node $u$ in $T_r^{\textup{in}}(t)$ with $d_{T_r^{\textup{in}}(t)}(u,r)\ge \Delta_r$, add the edge $(u,\tilde r_t)$ of weight $d_{T_r^{\textup{out}}(t)}(u,r)-\Delta_r$.
	\end{itemize}
	
	For insertion operations, we use \Cref{lem:forward-backward-search-weighted} to compute the forward search $V_r^{\textup{out}}(t)$ and backward search $V_r^{\textup{in}}(t)$ for every original vertex $r$, along with the shortest path trees. We then enumerate the neighbors of forward and backward reaches, and include them into the shortest path tree as leaf nodes. For every $w\in N^{\textup{out}}(V_r^{\textup{out}}(t))$ with $w\not\in V_r^{\textup{out}}(t)$, we find $v\in V_r^{\textup{out}}(t)$ that minimizes $d^1(r,v)+\omega(v,w)$, and add edge $(v,w)$ into $T_r^{\textup{out}}(t)$. 
    By our definition of the shortcut edges, this gives a shortcut edge $(\tilde r_t,w)$ of weight $d^1(r,v)+\omega(v,w)+\Delta_r$.
	Similarly, for every $u\in N^{\textup{in}}(V_r^{\textup{in}}(t))$ with $u\not\in V_r^{\textup{in}}(t)$, we find $v\in V_r^{\textup{in}}(t)$ that minimizes $\omega(u,v)+d^0(v,r)$, and add edge $(u,v)$ into $T_r^{\textup{in}}(t)$. 
    This gives a shortcut edge $(u,\tilde r_t)$ of weight $\omega(u,v)+d^0(v,r)-\Delta_r$. Therefore, the properties hold after this insertion.
	
	We now state the process of maintaining trees after reweighting. After reweightings $\phi$, we first update each $\Delta_r(t)\leftarrow \Delta_r(t)+\phi(\tilde{r}_t)-\phi(r)$. To update the trees $T_r^{\textup{out}}(t)$ and $T_r^{\textup{in}}(t)$ with the corresponding shortcut edges, for $t\in [\tau]$ in increasing order, we do the following.
	\begin{itemize}
		\item Update the trees $T_r^{\textup{out}}(t)$ and $T_r^{\textup{in}}(t)$ as follows: If a node $p$ other than $r,r'$ in $T_r^{\textup{out}}(t)$ has $d_{T_r^{\textup{out}}(t)}(r,p)\ge -\Delta_r(t)$, we make it as a leaf node by removing other nodes in its subtree. Symmetrically, if a node $p$ other than $r$ in $T_r^{\textup{in}}(t)$ has $d_{T_r^{\textup{in}}(t)}(p,r)\ge \Delta_r(t)$, we similarly make it as a leaf node by removing other nodes in its subtree.
		\item Remove previous shortcut edges from $\tilde r_t$ to $S_{<t}$, and replace them with the shortcut edges defined by the current tree. That is, we add shortcut edges from the leaf nodes of the current tree $T_r^{\textup{in}}(t)$ to $\tilde r_t$, and from $\tilde r_t$ to the leaf nodes of the current tree $T_r^{\textup{out}}(t)$.
		\item During the second step, we may remove a shortcut edge between $\tilde r_t$ and a vertex $w$. This happens when in the first step, some ancestor $p$ of $w$ was marked as a leaf node, so $w$ was removed from the tree. This shortcut edge may be used by trees at higher Steiner levels, so we need to replace the edge in these trees by a non-negative path of the same weight. If $w$ was in $T_r^{\textup{out}}(t)$, then this path is $(\tilde{r}_t,p)\circ T_r^{\textup{out}}(t)[p,w]$. If $w$ was in $T_r^{\textup{in}}(t)$, then this path is $T_r^{\textup{out}}(t)[w,p]\circ (p,\tilde r_t)$.
		
		If this edge is a boundary edge that points to a leaf node outside the search $V$, then we only add the part of this path that is in the search by the distance bound $\Delta$. Moreover, if this path intersects with the original tree during the replacement, then we only keep the shorter path to the intersection vertex. 
        The subtree of the non-shortest path cannot be part of the real shortest paths, so we remove that subtree. This maintains the tree structure after replacements.
	\end{itemize}
	
	We prove that all invariants are maintained after these steps, as desired. Invariant \ref{item:steiner-invariant-3} is about negative edges, so it holds since we only add non-negative shortcut edges.
	For invariant \ref{item:steiner-invariant-1}, we can inductively prove that, since the trees of Steiner vertices at level $t$ are in $S_{\le t}$, the replacement path for an edge in $S_{\le t}$ found in the third step are also in $S_{\le t}$, and thus the trees of Steiner vertices at level $t$ are still in $S_{\le t}$ after replacements. This ensures that the trees for Steiner vertices at level $t$ only add edges between them and $S_{<t}$, which implies \ref{item:steiner-invariant-1}.
	
	We now prove invariant \ref{item:steiner-invariant-2}. First, we show how to unfold shortcut edges in the \emph{current} graph $G'(\tau)$. For a shortcut edge $(\tilde r_t,w)$ where $w\in S_{<t}$, $w$ is a leaf node in the tree $T_r^{\textup{out}}(t)$, and the weight of the path from $r$ to $w$ in the tree equals $\omega(\tilde r_t,w)-\Delta_r(t)$. 
	Moreover, let $p$ be the parent node of $w$ in this tree; we will show that $\omega(p,w)\ge \omega(r_t,w)$. It suffices to prove that $d_{T_r^{\textup{out}}(t)}(r,p)\le -\Delta_r(t)$, since it implies that \[\omega(p,w)=d_{T_r^{\textup{out}}(t)}(r,w)-d_{T_r^{\textup{out}}(t)}(r,p)\ge d_{T_r^{\textup{out}}(t)}(r,w)+\Delta_r(t)=\omega(r_t,w).\] If $p\ne r'$, then this holds since in the first step above, we mark vertices other than $r'$ that violates this condition as leaf nodes of the tree. If $p=r'$, this simplifies to $\Delta_r(t)\le -\omega(r,r')$, which is ensured by the shortcut step \ref{item:shortcut-step-6} as follows.
	\begin{claim}
		At any iteration, we have $\Delta_r(t)\in[0,-\omega(r,r')]$.
	\end{claim}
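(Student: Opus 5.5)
The plan is to show that the interval $[0,-\omega(r,r')]$ is invariant, by induction over the operations of the data structure. The base case is the creation of $\tilde r_t$ in \ref{item:shortcut-step-6}. There the stored value is the clamped $\Delta_r'=\min\{-\omega(r,r'),\max\{0,\Delta_r\}\}$, which lies in the interval by construction. The only later operation that changes $\Delta_r(t)$ is $\textsc{Reweight}(\phi)$, which sets $\Delta_r(t)\leftarrow\Delta_r(t)+\phi(\tilde r_t)-\phi(r)$; the weight $\omega(r,r')$ is also shifted to $\omega(r,r')+\phi(r)-\phi(r')$. So I need both endpoints to survive a valid reweighting. The key idea is to read $\Delta_r(t)$ and $-\omega(r,r')-\Delta_r(t)$ as the current weights of the auxiliary edges $(\tilde r_t,r)$ and $(r',\tilde r_t)$ added in \ref{item:shortcut-step-6}. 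These weights transform exactly like $\Delta_r(t)$ under the update rule, since $\omega_\phi(\tilde r_t,r)=\Delta_r(t)+\phi(\tilde r_t)-\phi(r)$.

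\textbf{Lower bound} $\Delta_r(t)\ge0$. The edge $(\tilde r_t,r)$ has non-negative weight $\Delta_r'$ when it is created, so it is a non-negative auxiliary edge. Validity of $\phi$ means every edge that was non-negative stays non-negative after reweighting. I must check that the auxiliary edges of \ref{item:shortcut-step-6} are among the edges the potential is required to respect. They should be: the betweenness-reduction reweighting of \Cref{lem:betweenness-reduction-weighted} is computed on $G^+$, which contains all edges including auxiliary ones, and validity is defined over all non-negative edges. Hence the new weight $\Delta_r(t)+\phi(\tilde r_t)-\phi(r)$ is still $\ge0$.

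\textbf{Upper bound} $\Delta_r(t)\le-\omega(r,r')$. The same argument applies to the edge $(r',\tilde r_t)$, whose initial weight $-\omega(r,r')-\Delta_r'$ is $\ge0$ by the clamping. After reweighting, its weight is $-\omega(r,r')-\Delta_r(t)+\phi(r')-\phi(\tilde r_t)$. This equals $-\omega_\phi(r,r')-\Delta_r^{\text{new}}(t)$, where I use that $(r,r')$ is frozen as a negative edge and its weight is also shifted by $\phi$. Non-negativity of this edge after reweighting gives $\Delta_r^{\text{new}}(t)\le-\omega_\phi(r,r')$.

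\textbf{Remaining issues.} One point needs care if $\omega_\phi(r,r')$ becomes non-negative: the interval would then be empty, so I need to rule this out. But the two non-negative edges give the cycle $r\to r'\to\tilde r_t\to r$ a weight of exactly $0$, invariant under reweighting. Thus $\omega_\phi(r,r')=-(\text{sum of two non-negative weights})\le0$, and the interval stays consistent. I also need the zero-weight edges $(r,\tilde r)$ or $(\tilde r,r')$ added when clamping occurs. These pin $\Delta_r(t)$ at the endpoint, since validity forces both edges of a zero-weight $2$-cycle to remain $0$.

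The step I expect to be the main obstacle is justifying that every reweighting applied to the data structure is valid with respect to these auxiliary edges. In particular, across recursion the potentials come from layered graphs in which negative edges are treated as auxiliary. I would argue it by noting that each copy of $G^+$ in \Cref{lem:betweenness-reduction-weighted} contains the \ref{item:shortcut-step-6} edges as ordinary non-negative edges, so the returned $\phi$, read off $G_0$, keeps them non-negative.
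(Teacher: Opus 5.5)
Your proof is correct and matches the paper's: it also reads $\Delta_r(t)$ and $-\omega(r,r')-\Delta_r(t)$ as the current weights of the two auxiliary edges $(\tilde r_t,r)$ and $(r',\tilde r_t)$ from \ref{item:shortcut-step-6}, and simply asserts, as you anticipated, that valid reweightings keep these non-negative, applying the cumulative potential $\phi_{(t,\tau]}$ in one step. The difference is cosmetic: you induct over individual reweightings instead, and your separate nonemptiness check is not needed, since the two inequalities already force $-\omega(r,r')\ge 0$.
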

	\begin{proof}
		Let $(\Delta_r(t))^{\textup{old}}$ denote the value of $\Delta_r(t)$ at the iteration when the Steiner vertex $\tilde r_t$ is added, and let $(\omega(r,r'))^{\textup{old}}$ denote the negative edge weight at that iteration. Recall that shortcut step \ref{item:shortcut-step-6} adds a non-negative edge $(\tilde r_t,r)$ of weight $(\Delta_r(t))^{\textup{old}}$, and a non-negative edge $(r',\tilde r_t)$ of weight $-(\omega(r,r'))^{\textup{old}}-(\Delta_r(t))^{\textup{old}}$. Since these are non-negative edges, the subsequent reweighting $\phi_{(t,\tau]}$ satisfies $\phi_{(t,\tau]}(r)\le \phi_{(t,\tau]}(\tilde r_t)+(\Delta_r(t))^{\textup{old}}$ for the first edge $(\tilde r_t,r)$. This gives
		\[\Delta_r(t)=(\Delta_r(t))^{\textup{old}}+\phi_{(t,\tau]}(\tilde r_t)-\phi_{(t,\tau]}(r)\ge 0.\]
		Similarly, the second edge $(r',\tilde r_t)$ gives $\phi_{(t,\tau]}(\tilde r_t)\le \phi_{(t,\tau]}(r')-(\omega(r,r'))^{\textup{old}}-(\Delta_r(t))^{\textup{old}}$, so
		\[\Delta_r(t)=(\Delta_r(t))^{\textup{old}}+\phi_{(t,\tau]}(\tilde r_t)-\phi_{(t,\tau]}(r)\le -(\omega(r,r'))^{\textup{old}}+\phi_{(t,\tau]}(r')-\phi_{(t,\tau]}(r)=-\omega(r,r').\qedhere\]
	\end{proof}
	Similarly, for a shortcut edge $(u,\tilde r_t)$ where $u\in S_{<t}$, $u$ is a leaf node in the tree $T_r^{\textup{in}}(t)$, and $\omega(u,\tilde r)=d_{T_r^{\textup{in}}(t)}(u,r)-\Delta_r(t)$. To prove that $\omega(u,p)\ge \omega(u,\tilde r)$, it suffices to prove that $d_{T_r^{\textup{in}}(t)}(p,r)\le\Delta_r(t)$. Let $p$ be the parent node of $w$ in this tree. If $p\ne r$, then this similarly follows from the first step of maintaining the shortcut edges. Otherwise, $p=r$ and this simplifies to $\Delta_r(t)\ge 0$, which follows from the claim above.
	We unfold the shortcut edges to and from $\tilde r_t$ to the respective paths in the current graph, and the weight of the first and last edges on the unfolded path never decreases.
	
	The non-trivial part is to fully unfold the path from $\tilde r_t$ to $p$ using edges that existed when $\tilde r_t$ is added. We state our goal as the following lemma, which implies the rest part of invariant \ref{item:steiner-invariant-2}.
	
	\begin{lemma}\label{lem:unfold-to-original-graph}

		For a vertex $p\in T_r^{\textup{out}}(t)$ with $d_{T_r^{\textup{out}}(t)}(r,p)<-\Delta_r(t)$, we can find a $O(\tau^3)$-hop path from $r$ to $p$ in $G(t)$ of weight $d_{T_r^{\textup{out}}(t)}(r,p)-\phi_{(t,\tau]}(r)+\phi_{(t,\tau]}(p)$, which is the reweighted version of current path at iteration $t$. Furthermore, the weight of the path is less than $-\Delta_r(t)+\phi_{(t,\tau]}(\tilde{r}_t)-\phi_{(t,\tau]}(r)$, which is the value of $-\Delta_r(t)$ when it is originally computed at iteration $t$.

		Symmetrically, for a vertex $p\in T_r^{\textup{in}}(t)$ with $d_{T_r^{\textup{in}}(t)}(p,r)<\Delta_r(t)$, we can find a $O(\tau^3)$-hop path from $p$ to $r$ in the graph at the time when Steiner vertex $\tilde r_t$ is added, whose weight is less than $\Delta_r$, the relative potential of $\tilde r_t$ when it is added.
	\end{lemma}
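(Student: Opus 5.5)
We prove both statements by induction on the current iteration $\tau$, following the history of how the tree path $T_r^{\textup{out}}(t)[r,p]$ came to exist. We only treat the forward tree; the backward tree is symmetric, with the roles of $\Delta_r(t)$ and $-\Delta_r(t)$ exchanged.

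\textbf{Setup and base case.} Write $Q=T_r^{\textup{out}}(t)[r,p]$. Every internal vertex $x$ of $Q$ other than $r,r'$ satisfies $d_{T_r^{\textup{out}}(t)}(r,x)<-\Delta_r(t)$, because the first maintenance step turns any violating vertex into a leaf. At insertion time ($\tau=t$), $Q$ is a shortest-path-tree path inside $V_r^{\textup{out}}(t)$, computed by \Cref{lem:forward-backward-search-weighted}. It is therefore a $1$-hop path in $G(t)$ of weight $d^1(r,p)<-\Delta_r$, and there is nothing to prove.

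\textbf{Effect of potentials.} All later modifications are reweightings by $\phi_{(t,\tau]}$, followed by edge replacements in the third maintenance step. Since potentials telescope along a path, the current weight of any path from $r$ to $p$ equals its weight in $G(t)$ plus $\phi_{(t,\tau]}(r)-\phi_{(t,\tau]}(p)$. This gives the stated weight formula for any path in $G(t)$ that traces $Q$ edge by edge. The strict bound then follows from $d_{T}(r,p)<-\Delta_r(t)$ and the update rule $\Delta_r(t)=\Delta_r^{\textup{old}}+\phi_{(t,\tau]}(\tilde r_t)-\phi_{(t,\tau]}(r)$. The potential terms rearrange exactly to $-\Delta_r^{\textup{old}}+\phi_{(t,\tau]}(p)-\phi_{(t,\tau]}(\tilde r_t)$. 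I expect this to need one more observation: $\phi_{(t,\tau]}(p)\le\phi_{(t,\tau]}(\tilde r_t)+\ldots$ does not hold in general. The cleaner route is to state the bound for the $G(t)$-weight directly. The invariant ``interior vertices have tree-distance $<-\Delta_r(t)$'' is preserved under reweighting up to exactly this shift, so the $G(t)$-weight is below the original $-\Delta_r$.

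\textbf{Inductive step.} An edge of $Q$ may not exist in $G(t)$: it was introduced in the third maintenance step as a replacement for a deleted shortcut edge $(\tilde r'_{s},w)$ or $(w,\tilde r'_s)$ at some level $s<t$. That replacement is a path $(\tilde r'_s,p')\circ T_{r'}^{\textup{out}}(s)[p',w]$ of equal weight. By induction on the level $s$, whose trees are processed earlier in increasing order, the tree portion unfolds into an $O(\tau^2)$-hop path in $G(s)\subseteq G(t)$. So every edge of $Q$ that is a shortcut edge (an (S2)/(S3) edge incident to a lower-level Steiner vertex) is either an original edge of $G(t)$ or expands into such a path with matching weight. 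Consecutive Steiner pairs $u\to\tilde r'_s\to w$ are then unfolded via invariant \ref{item:steiner-invariant-2} at level $s$, as in \Cref{lem:unfolding-new}. Each unfolding adds one hop. There are at most $\tau$ levels and at most $\tau$ replacement rounds per level, each costing $O(\tau)$ hops, which gives the $O(\tau^3)$ bound.

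\textbf{Main obstacle.} The hard part is the bookkeeping in the inductive step. We must show that the replacement paths, including the truncation to ``the part in the search'' and the pruning of non-shortest subtrees at intersections, still correspond to paths whose $G(t)$-weights are consistent. We must also show that the hop count does not compound multiplicatively across levels. My first attempt is a potential argument that charges each replacement to a pair (level, round), so each pair contributes at most $O(\tau)$ hops.
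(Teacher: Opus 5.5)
\emph{The weight bound is asserted, not proved.} You correctly compute that $d_{T_r^{\textup{out}}(t)}(r,p)<-\Delta_r(t)$ only gives $G(t)$-weight $<-\Delta_r^{\textup{old}}+\phi_{(t,\tau]}(p)-\phi_{(t,\tau]}(\tilde r_t)$. Your ``cleaner route'' then says the pruning invariant is preserved ``up to exactly this shift'', but that is false. Translating the condition $d(r,x)<-\Delta_r(t)$ back to $G(t)$ moves the threshold by $\phi_{(t,\tau]}(x)-\phi_{(t,\tau]}(\tilde r_t)$. This depends on $x$ and can be positive, so the current-weight invariant says nothing about $G(t)$-weights. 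Also, the tree path to $p$ may consist of replacement edges created after iteration $t$. You therefore need an argument linking its $G(t)$-weight to a check the pruning step actually performed.

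The paper supplies this by contradiction. Suppose the unfolded path's $G(t)$-weight reaches $-\Delta_r$, and take the first construction step $k$ at which the prefix $r\leadsto u_k$ does. Inducting backwards over $j\le k$, the paper exhibits a path in $T_r^{\textup{out}}(t)$ from $u_j$ to $u_k$ of the same weight. When $i(u_j)\ge i(w_j)$, this uses the level-$i(u_j)$ tree path out of $r(u_j)$ and the fact that the distance from $r(u_j)$ to $u_k$ is at least $-\Delta_{r(u_j)}$. The induction ends with a tree path out of $r'$ whose weight plus $\omega(r,r')$ is at least $-\Delta_r$. Then the maintenance of level $t$ would have made $u_k$ a leaf and cut off $p$, a contradiction.

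\emph{The construction does not land in $G(t)$, and the hop count has no mechanism.} You splice level-$s$ unfoldings into the path using $G(s)\subseteq G(t)$, but this inclusion fails. Between iterations $s$ and $t$ the maintenance deletes and replaces shortcut edges, and weights shift by $\phi_{(s,t]}$. There are further problems with the replacement step:
\begin{itemize}
\item The segment $p'\leadsto w$ has just been pruned from the level-$s$ tree. Your inductive hypothesis, which concerns current tree nodes below $-\Delta$ and paths starting at $r'_s$, does not cover it.
\item The new edge $(\tilde r'_s,p')$ is not in $G(t)$ and cannot be eliminated on its own. It must be removed together with the already-built prefix, which has to be rerouted to $r'_s$.
\item Unfolding Steiner pairs via (I3) in the current graph is circular, because this lemma is what re-establishes (I3) after a reweighting.
\end{itemize}

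The paper instead descends along the expansion hierarchy. It keeps a prefix $r\leadsto u_k$ in $G(t)$ and an edge $(u_k,w_k)$ of $G(t)$ whose current expansion contains $p$. Each step then does the following:
\begin{itemize}
\item Apply the Unfolding Lemma (\Cref{lem:unfolding-new}) inside $G(t)$ to reroute the prefix to $r(u_k)$, which raises its weight by $\Delta_{r(u_k)}$.
\item Follow the level-$i(u_k)$ tree, as it stood at time $t$, to the last vertex $u_{k+1}$ it shares with the current path. Along this segment the $-\Delta_{r(u_k)}$ in the shortcut weight cancels the earlier $+\Delta_{r(u_k)}$, which gives the weight formula.
\end{itemize}
The $O(\tau^3)$ hop bound follows because $\max\{i(u_k),i(w_k)\}$ strictly decreases each step, and each step adds $O(\tau^2)$ hops. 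Your count of $\tau$ levels times $\tau$ rounds times $O(\tau)$ hops is not backed by any such argument, as your own ``main obstacle'' paragraph concedes.
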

	
	\begin{proof}
		We consider the case when $\Delta_r(t)$ is not clamped, i.e. $\Delta_r(t)\in (0,-\omega(r,r'))$ and thus $\Delta_r(t)=\Delta'_r(t)$ at iteration $t$. We first show that other cases can be proved by proving this case. First, if $\Delta_r(t)<0$, then the following proof can still be applied for the $T_r^{\textup{out}}(t)$ side, since it actually proves a bound of $-\Delta_r'(t)$, which is larger than $-\Delta_r(t)$. For the $T_r^{\textup{in}}(t)$ part, we can show that this tree only contains one non-leaf node $r$ if $\Delta_r(t)$ was clamped to $0$, so we already have the bound. Recall that in shortcut step \ref{item:shortcut-step-6}, we add bidirectional zero-weight edge between $r$ and $\tilde r_t$ in this case. Therefore, the potential of $r$ and $\tilde r_t$ must be equal in subsequent reweightings, and thus $\Delta_r(t)$ stays at $0$. This implies that every node other than $r$ is a leaf node in this tree. The case when $\Delta_r(t)>-\omega(r,r')$ can be reduced similarly, so it suffices to only consider the case when $\Delta_r(t)$ is not clamped.
		
		We focus on the first case since the two cases are symmetric. 
		We first construct the path from $r$ to $p$, and then prove that its weight is less than $-\Delta_r$.
		
		For construction, we use the following tool, which is based on the Unfolding Lemma (\Cref{lem:unfolding-new}) on the graph $G(t)$.
		\begin{subclaim}\label{subcl:unfold-u-to-ru}
			In graph $G(t)$, suppose that there is a $h$-hop path $P$ from $r\in N$ to $u$, then there exists an $(h+t)$-hop path $P^r$ from $r$ to $r(u)$, such that its weight $\omega(P^r)$ equals $\omega(P)+\Delta_{r(u)}(i(u))$, where the weights and $\Delta$ values are in $G(t)$.

			Symmetrically, in graph $G(t)$, suppose that there is a $h$-hop path $P$ in from $u$ to $r\in N$, then there exists an $(h+t+1)$-hop path $P^r$ from $r(u)$ to $r$, such that its weight $\omega(P^r)$ equals $\omega(P)-\Delta_{r(u)}(i(u))$, where the weights and $\Delta$ values are in $G(t)$.
		\end{subclaim}
		
		\begin{subproof}
			We focus on proving the first case as the second one is symmetric. If $i(u)=0$, then we are done by taking $P^r=P$. Therefore, we assume that $i(u)>0$. In this case, \Cref{lem:unfolding-new} gives a $(h+t)$-hop path $P'$ from $r$ to $u$ that has the same weight as $\omega(P)$, and the last vertex $w'$ before $u$ satisfies $i(w')<i(u)$. Since $i(w')<i(u)$, this shortcut edge $(w',u)$ exists because $w'\in T_{r(u)}^{\textup{in}}(i(u))$. Moreover, $\omega(w',u)$ equals $d_{T_{r(u)}^{\textup{in}}(i(u))}(w',r(u))-\Delta_{r(u)}(i(u))$, so the length of the path from $w'$ to $r(u)$ on this tree equals $\omega(w',u)+\Delta_{r(u)}(i(u))$. By replacing the last edge $(w',u)$ on $P'$ with this path, we get a $(h+t)$-hop path of weight $\omega(P)+\Delta_{r(u)}(i(u))$ from $r$ to $r(u)$.
		\end{subproof}
		
		The construction is done recursively; see \Cref{fig:path-construction} for a visual reference. Before the $k$-th step, we are left with an edge $(u_k,w_k)$ in $G(t)$, such that we have already constructed a path from $r$ to $u_k$ in $G(t)$, and vertex $p$ is on the path from $u_k$ to $w_k$ that is expanded from edge $(u_k,w_k)$ after the data structure maintains the shortcut edges for various rounds. Moreover, this path from $r$ to $u_k$ has weight $d_{T_r^{\textup{out}}(t)}(r,u_k)-\phi_{(t,\tau]}(r)+\phi_{(t,\tau]}(u_k)$. 
		
		Suppose that $i(u_k)\ge i(w_k)$. In this case, the shortcut edge $(u_k,w_k)$ corresponds to the path from $r(u_k)$ to $w_k$ in the shortest path tree $T_{r(u_k)}^{\textup{out}}(i(u_k))$ in $G(t)$. In the current graph $G'(\tau)$, each edge in this path from $r(u_k)$ to $w_k$ may get replaced by a path, and the current path from $u_k$ to $w_k$ starts by a shortcut edge from $u_k$ to a vertex on this path, and then follows the remaining suffix of this path. 
		Let $u_{k+1}$ be the last vertex on the path from $u_k$ to $p$ in $G'(\tau)$ that also occurs on the path from $r(u_k)$ to $w_k$ in $G(t)$, and let $w_{k+1}$ be the next vertex on this path. We construct the path from $r$ to $u_{k+1}$ as follows. First, we apply \Cref{subcl:unfold-u-to-ru} on the previously constructed path from $r$ to $u_k$, to get a path from $r$ to $r(u_k)$. This step adds $O(\tau)$ hops to this path, and the length of this path is increased by the value of $\Delta_{r(u_k)}(i(u_k))$ at iteration $t$. Next, we concatenate this with the path from $r(u_k)$ to $u_{k+1}$ on the shortest path tree at iteration $t$. From the construction of shortcut edges, the length of this path in $G(t)$ equals the length of the path from $u_k$ to $u_{k+1}$ in $G(t)$, subtract the value of $\Delta_{r(u_k)}(i(u_k))$ at iteration $t$. These two $\Delta_{r(u_k)}(i(u_k))$ cancel out, and the total length of this path from $r$ to $u_{k+1}$ at iteration $t$ equals
		\begin{align*}
			&\left(d_{T_r^{\textup{out}}(t)}(r,u_k)-\phi_{(t,\tau]}(r)+\phi_{(t,\tau]}(u_k)\right) + \left(d_{T_r^{\textup{out}}(t)}(u_k,u_{k+1})-\phi_{(t,\tau]}(u_k)+\phi_{(t,\tau]}(u_{k+1})\right)\\
			=& \left(d_{T_r^{\textup{out}}(t)}(r,u_{k+1})-\phi_{(t,\tau]}(r)+\phi_{(t,\tau]}(u_{k+1})\right).
		\end{align*}
		If $p=u_{k+1}$, then we are done. Otherwise, we solve the $(u_{k+1},w_{k+1})$ case recursively, treating $u_{k+1}$ as the new vertex $r$.
		
		If $i(u_k)<i(w_k)$, we similarly consider the path from $u_k$ to $r(w_k)$ in the shortest path tree $T_{r(w_k)}^{\textup{in}}(i(w_k))$. Again, let $u_{k+1}$ be the last vertex on the path from $u_k$ to $p$ in $G'(\tau)$ that also occurs on this path in $G(t)$, and $w_{k+1}$ be the next vertex on this path. In this case, we construct the path from $r$ to $u_{k+1}$ by concatenate the path from $r$ to $u_k$ with the path from $u_k$ to $u_{k+1}$. The length of the latter subpath equals $d_{T_r^{\textup{out}}(t)}(u_k,u_{k+1})-\phi_{(t,\tau]}(u_k)+\phi_{(t,\tau]}(u_{k+1})$, so the total length of the path from $r$ to $u_{k+1}$ also equals $d_{T_r^{\textup{out}}(t)}(r,u_{k+1})-\phi_{(t,\tau]}(r)+\phi_{(t,\tau]}(u_{k+1})$ in this case. See \Cref{fig:path-construction} for an illustration of this construction.
		
		\begin{figure}[ht]
			\centering
			\includegraphics[width=\textwidth]{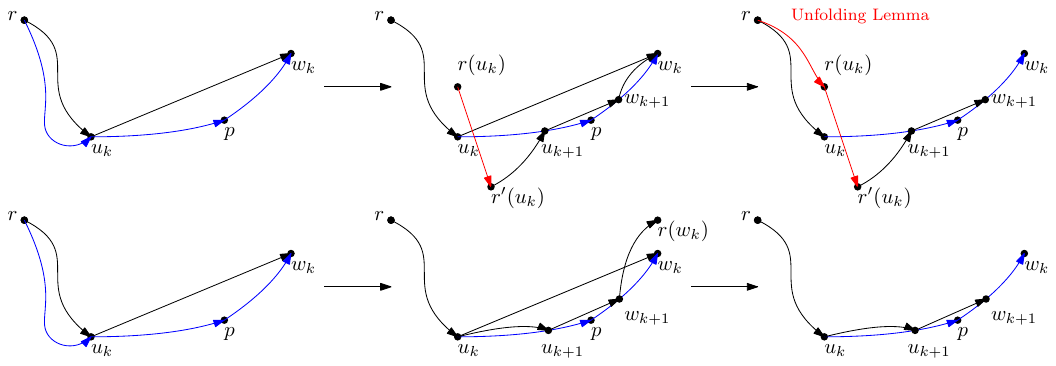}
			\caption{Illustration of the construction step. Black and red edges are non-negative and negative edges in the $G(t)$, while blue edges are edges in $G'(\tau)$. The top figure corresponds to the case when $i(u_k)\ge i(w_k)$, and the shortest path tree that gives edge $(u_k,w_k)$ is rooted at $r(u_k)$. The bottom figure corresponds to the case when $i(u_k)<i(w_k)$.}
			\label{fig:path-construction}
		\end{figure}
		
		In each step, $\max\{i(u_k), i(w_k)\}$ decreases by at least $1$. \Cref{subcl:unfold-u-to-ru} adds at most $O(\tau^2)$ hops in each step, so the total number of hops in the final path is bounded by $O(\tau^3)$.
		
		Next, we prove that the weight of this path is less than the value of $-\Delta_r(t)$ in $G(t)$. We prove this by contradiction. In particular, we show that if the weight of this path is at least $-\Delta_r$, then we can find a parent node of $q\ne r'$ in tree $T_r^{\textup{out}}(t)$ that the distance from $r$ to $q$ on the tree is at least $-\Delta_r$, in $G(t)$. This implies that in $G'(t)$ and later graphs, $p$ will be removed from the tree after the data structure maintains the shortcut edges, which gives the contradiction.
		
		If the distance from $r$ to $p$ is at least $-\Delta_r(t)$, then we can find some step $k$ that in this step, the path from $t$ to $u_k$ has weight at least $-\Delta_r(t)$. We find the smallest such $k$.
		Now we inductively show that, for every $j\le k$, after the data structure performs the reweighting operation when adding $\tilde r_t$, there is a path on the tree $T_r^{\textup{out}}(t)$ from $u_j$ to $u_k$, whose weight equals the weight from $u_j$ to $u_k$ on the unfolded path. Suppose we have such a path from $u_{j+1}$ to $u_k$, we now construct the path from $u_j$ to $u_k$.
		Suppose that $i(u_j)\ge i(w_j)$ in step $j$. In this case, we have a path from $r(u_j)$ to $u_{j+1}$ in the shortest path tree, where only negative edge on this path is the edge from $r(u_j)$. We now concatenate this path with the path from $u_{j+1}$ to $u_k$ constructed by induction. Since the distance from $r$ to $u_j$ is less than $-\Delta_r(t)$, but the distance from $r$ to $u_k$ is at least $-\Delta_r(t)$, we know the distance from $r(u_j)$ to $u_k$ is at least $-\Delta_{r(u_j)}(i(u_j))$. Therefore, we have a non-negative path from $u_j$ to $u_i$ after the data structure maintains the shortcut edges from $u_j$.
		If $i(u_j)<i(w_j)$, then $u_{j+1}$ is on the path from $u_j$ to $w_j$. In this case, we directly concatenate the non-negative subpath from $u_j$ to $u_{j+1}$ to the remaining path. This completes the induction. 
		By repeating this step until $i=0$, we find a path on the tree $T_r^{\textup{out}}(t)$ from $r'$ to $p$, and the weight of edge $(r,r')$ concatenated with this path is at least $-\Delta_r(t)$. Therefore, when the data structure maintains the shortcut edges from $\tilde r_t$, we know that $u_k$ should be a leaf node, and $p$ will be removed from the tree, a contradiction.
		\begin{figure}[ht]
			\centering
			\includegraphics[width=\textwidth]{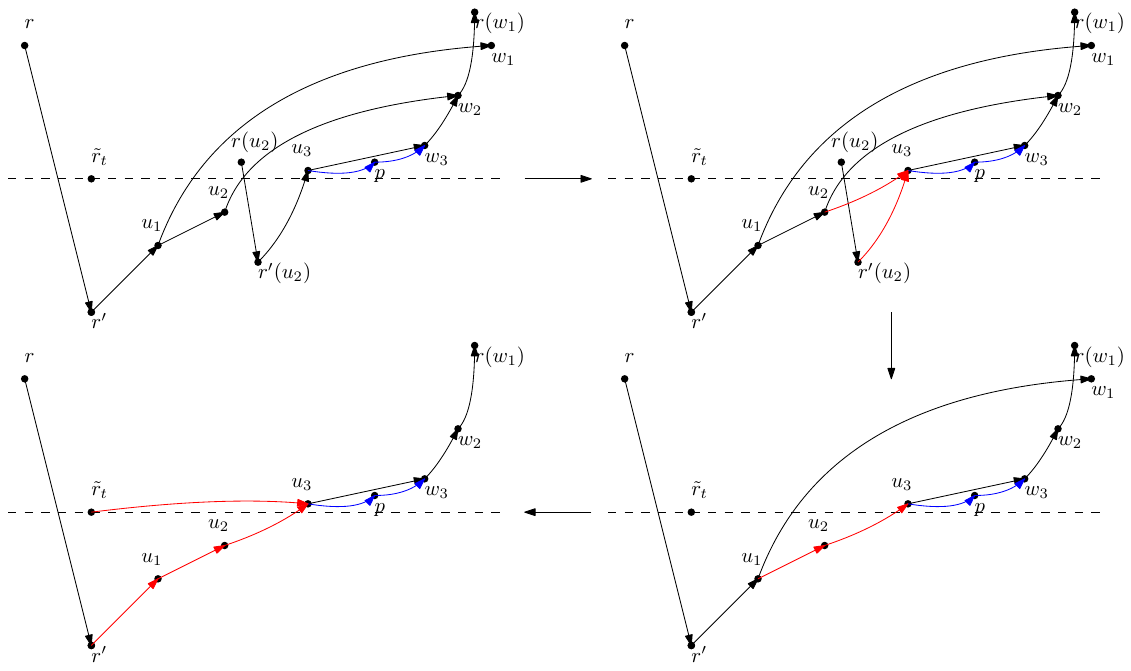}
			\caption{Illustration of the contradiction proof. In each step, we find a non-negative path from $u_j$ to $u_3$. The first step corresponds to the case of $i(u_k)\ge i(w_k)$, and the second step corresponds to the case of $i(u_k)<i(w_k)$. The final step of finding a non-negative path from $\tilde r$ to $u_3$ is same as the $i(u_k)\ge i(w_k)$ case.}
		\end{figure}
	\end{proof}
	
	Finally, we show that \Cref{lem:unfold-to-original-graph} also implies the efficiency of our data structure. In particular, it implies that all non-leaf vertices in the tree at any time are in the original $O(\tau^4)$-hop forward and backward reach $\tilde V_r^{\textup{out}}$ and $\tilde V_r^{\textup{in}}$. In \Cref{lem:number-of-edges-added}, we prove that the total number of neighboring vertices of these reaches is $\tilde O(m_0b^2)$. Therefore, the total size of all trees at one level is also $\tilde O(m_0b^2)$, so the total number of shortcut edges is $\tilde O(m_0\tau b^2)$. A simple charging argument shows that the trees can be updated efficiently. 

    \subsection{Proof of \Cref{thm:one-iteration-shortcutting}}

    The algorithm is described in \Cref{sec:algorithm}. We first apply the recursive betweenness reduction algorithm (\Cref{lem:betweenness-reduction-weighted}), which takes $T(\tilde O(m),k/2^{\sqrt{\log{n}}})+\tilde O(m)$ time. Next, we compute the (weighted) forward and backward searches from each negative vertex (\Cref{lem:multi-hop-forward-backward-search-weighted}, which takes $m^{1+o(1)}$ time since $b=2^{\sqrt{\log{n}}}$. Finally, we add shortcut edges \ref{item:shortcut-step-1}--\ref{item:shortcut-step-6} and update the data structure as in \Cref{lem:maintain-invariant-detailed}, which takes the same $m^{1+o(1)}$ time as the searches. The runtime of \Cref{thm:one-iteration-shortcutting} follows.

    Property 1, which requires shortest path distances are preserved, follows by \Cref{cl:distances-dont-decrease}. Property 2, which requires that the hop-diameter decreases, follows by \Cref{lem:shortcut-constant-factor}. Property 3, which requires that only $O(k)$ new vertices are added to the graph, follows since vertices are only added in shortcut step \ref{item:shortcut-step-1} and \Cref{lem:one-negative-outgoing-edge}, both of which add $O(k)$ vertices. Property 4, which requires that only $m_0^{1+o(1)}$ edges are added to the graph, follows by \Cref{lem:number-of-edges-added}. Property 5, which requires that no negative vertices are introduced, follows from the fact that negative edges are only ever added as out-edges from negative vertices (specifically in \ref{item:shortcut-step-4} and \ref{item:shortcut-step-5}). 
    
	\bibliographystyle{alpha}
	\bibliography{ref}
	
\end{document}